\def\dOi{9(4:25)2013}
\newcommand{\bbbh}{\mathbb{H}}
\newcommand{\bbbr}{\mathbb{R}}
\newcommand{\bbbq}{\mathbb{Q}}
\newcommand{\bbba}{\mathbb{A}}
\newcommand{\bbbl}{\mathbb{L}} 
\newcommand{\bbbo}{\mathbb{O}} 
\newcommand{\bbbn}{\mathbb{N}} 
\newcommand{\bbbz}{\mathbb{Z}}
\newcommand{\bbbk}{\mathbb{K}}
\newcommand{\bbbp}{\mathbb{P}}
\definecolor{gruen}{rgb}{.0,0.4,0.1}
\begin{document}

\title[Strong Turing Degrees for Additive BSS RAM's]{Strong Turing Degrees for Additive BSS RAM's}

\author[Christine Ga\ss ner]{Christine Ga\ss ner}	
\address{Ernst-Moritz-Arndt-Universit\"at Greifswald, Germany}	
\email{gassnerc@uni-greifswald.de} 





\keywords{BSS model, additive machines, oracle machine, Turing degrees, halting problem}


\begin{abstract}
For the additive real BSS machines using only constants 0 and 1 and order tests we consider the corresponding Turing reducibility and characterize some semi-decidable decision problems over the reals. In order to refine, step-by-step, a linear hierarchy of Turing degrees with respect to this model, we define several halting problems for classes of additive machines with different abilities and construct further suitable decision problems. In the construction we use methods of the classical recursion theory as well as techniques for proving bounds resulting from algebraic properties.
In this way we extend a known hierarchy of problems below the halting problem for the additive machines using only equality tests
and we present a further subhierarchy of semi-decidable problems between the halting problems for the additive machines using only equality tests and using order tests, respectively. 
\end{abstract} 

\maketitle

\newcommand{\strpreceq}{\,\,{\preceq\!\!\!\!\!\!\!{\atop\not}}\,\,\,\,\,\,}
\section{Introduction} 
The real additive BSS RAM's considered here present a class of random access machines (RAM's) where each storage location can hold a real number. 
 We  want to  investigate   questions of decidability and  undecidability for this model of computation   and we hope that this research can also  help to provide answers to  questions  such as the following.

First, the additive BSS RAM is a special variant of a model of computation over an algebraic structure. We are convinced that the investigation of models of computation on a high algebraic level can help to understand the reasons for the non-computability or undecidability of certain algorithmic problems and to realize whether the model of computation, the algebraic structure  of the underlying data type, or the used intuitive algorithm is the cause of special undesired phenomena.
Using the additive machines, we are able to demonstrate the possibility to transfer and extend numerous results from the classical theory to the theory of computation over arbitrary structures. We believe that such transfers 
 can contribute to a better understanding of the universality of several methods and techniques developed for characterizing the classical notion of computation and, thus, that this study will contribute to answer questions such as the following (cf. \cite{Shepherdson}, p.\,\,582): \begin{quote}"What becomes of the concepts and results of elementary recursion theory if, instead of considering only computations on natural numbers, we consider computations on data objects from any relational structure?"\end{quote}\medskip

Second, for describing the practical efficiency of algorithms in theoretical terms, a higher level of abstraction 
can be useful in order to more adequately reflect the high-level programming languages. For that reason the use of real RAM's was proposed in \cite{Prepa},
 and today areas as the computational geometry cannot be imagined without various different types of such machines. 
 Depending on the problem, suitable operations of a real RAM (like addition, multiplication, the $k$-roots and so on) are chosen and are assumed as primitive operations that are available at unit cost. The first used models of this kind were additive or linear real RAM's that are able to execute additions and subtractions. They were introduced by D.\,Dobkin and R.\,J.\,Lipton, explicitly described in \cite{Dobk}, and are often used for determining 
better  upper and lower bounds for the time that is necessary to solve a problem by a certain algorithm in the worst case and giving the asymptotic worst-case complexity of problems. However, the calculated bounds of complexity of the problems result, in most cases, from the algebraic structure  of the underlying data type so that, for an intuitively given algorithm, one can find different upper and lower  bounds of complexity for different models (cf.\,\,\cite{Prepa}, p.\,\,29). 
 To address the question regarding  the difference between several models, it is important to give a characterization of the individual  models and to present typical features in detail. Since  questions on the complexity of problems can be closely related to the decidability of decision problems (cf.\,\,\cite{Oracles}), it appears reasonable, also, to
study the decidability and the reducibility of problems in the context of each single model.

Third, there is no doubt that it is possible to realize the addition of real numbers in special situations exactly. Only the comparison of two real quantities will remains a difficult problem. However, regardless of the current possibilities to solve problems over the reals exactly, we want to consider some questions: Which kind of problems and in which time such problems are exactly solvable if we had a machine being able to execute the addition and the subtraction of reals and to evaluate every order test in a finite (or in a fixed) time?

These issues also motivated us to deal with a question raised in \cite{6} and to define a hierarchy of semi-decidable problems above the set of rational numbers and below the Halting Problem for additive BSS machines over $\bbbr_{\rm add}^{1}=_{\rm df} (\bbbr;0,1;+,-;\geq)$.
 Here, we want to continue the study of semi-decidable problems over the reals in the framework of additive BSS RAM's that
fit best with the general model of computation considered e.\,g. in \cite{Oracles}. In particular, we want  to explore 
the following question: Is it possible to extend this hierarchy below the Halting Problem for additive machines over $\bbbr_{\rm add}^{1}$?

Whereas Turing machines only process a finite number of symbols and yield a uniform complexity theory, most register machines operate on a fixed number of integers or real numbers and provide non-uniform complexity classes. 
In proving a lower complexity bound for a problem, the size of the considered inputs is often an arbitrary fixed number, and in such cases the results are relevant for the uniform and the non-uniform complexity of the investigated problem. 
On the other hand, for determining the uniform complexity of an algorithm solving a simple problem such as the real Knapsack problem by means of the addition on real numbers we need a uniform model, for instance, an additive BSS model. The suitability of these machines follows from the fact that an input procedure yields the size of the input. It is typical for BSS machines that they can use this value in order to recognize the last input value in the infinite sequence of registers. Precisely this tiny extension implies that the BSS machines over $(\{0,1\};0,1;;=)$ can do the same as Turing machines, and more than most classical RAM's over $(\{0,1\};0,1;;=)$. 
Thus, a combination of properties of these models can lead to abstract machine models of computation over algebraic structures (BSS RAM's over algebraic structures) including the uniform BSS model of computation over the reals. In the setting of real BSS RAM's (whose definition has been inspired by classical random access machines and whose power agrees with one of the BSS machines), any real number can be considered as an algebraic object which can be stored in one register;
the permitted basic operations on the real numbers can be carried out within a fixed time unit; and moreover uniform algorithms that desire to process each number of real input values can be executed by a single machine. With regard to ability and efficiency, the additive BSS RAM's present a class of machines involving, for inputs over $\{0,1\}$, the BSS machines over $\{0,1\}$ and they are weaker than the BSS machines over the ring of reals (cf. \cite{1}). 
Moreover, the additive real BSS RAM's differ, like most RAM's, not only from the machines over $\{0,1\}$, but also from the classical RAM's over the integers (considered e.\,g.\,\,in \cite{AS79}). However, we will see that classical techniques can be used also to characterize the power of BSS RAM's.

Although all reducibility notions as for instance many-one reductions, truth table reductions or weak truth table reductions defined for Turing machines and studied in the classical recursion theory (cf. \cite{11}, \cite{10}) have an analogue in the theory of computation dealing with (additive) BSS RAM's over the real numbers, we will only consider the Turing reduction, i.\,e. the weakest of the mentioned reduction notions. This implies that non-reducibility results like the following are very general statements and remain valid for the other mentioned reduction notions, too. 
However, for structures over the reals it is useful to distinguish between a {\em weak} Turing reduction produced by oracle machines with real machine constants and a {\em strong} Turing reduction computed by means of the only constant 1. For both types of reduction we can consider the corresponding equivalence relation and the resulting 
 partition of decision problems over the reals into equivalence classes that we call the {\em weak} {\em Turing degrees} and the {\em strong} {\em Turing degrees}, respectively. It is not difficult to see that the partition into strong Turing degrees is
a finer refinement of the partition created by the weak Turing reduction. Also because of these properties, the strong Turing reduction 
will be the main subject in the following.

The papers \cite{6}, \cite{4}, and \cite{RMiller} present some hierarchies of semi-decidable problems below the halting problems for several types of BSS machines where most of the non-reducibility results follow from the algebraic properties of the real numbers.
We want to continue and deepen the study by investigating the strong reductions without arbitrary real parameters in order to even better understand the degrees of undecidability of decision problems over the reals.
Here, for additive machines over $\bbbr_{\rm add}^{1}$ we will extend an infinite hierarchy constructed by means of semi-decidable sets of tuples in $ \bbbr^n$ ($n\geq 1$), 
whose components satisfy a certain kind of linear dependence, in \cite{4}. For this purpose, we give the most important definitions in Section 2. In Section 3 we present a construction that is based on the enumerability of machines without irrational constants and their halting sets. 
Under these assumptions it is possible to transfer results from the classical recursion theory and use logical proof techniques. We will
apply 
 diagonalization techniques including the injury priority method that was used by Friedberg (in \cite{12}) and Muchnik in order to show that there are degrees between the Turing degrees $\emptyset$ and $\emptyset'$ of decidable and complete semi-decidable problems. 
 Section 3 offers a further subhierarchy between the halting problems for the additive machines using only the equality tests and using the order tests, respectively. In this construction we will again use algebraic and topological properties of the real numbers. In Section 4 we give a summary.

\section{The Model of Computation and the strong Turing reduction}
The uniform BSS model of computation was introduced in \cite{2} and the additive BSS machines were considered, for instance, 
in \cite{5}, \cite{3}, and \cite{6}. We will give a short definition of an additive machine in accordance with a model suitable for the computation over an arbitrary algebraic structure (cf. also \cite{Oracles}). The considered underlying algebraic structure over the real numbers will be the structure with addition and subtraction as basic operations and order as relation. Analogously to the general case where it is useful to distinguish between addresses of the registers and the individuals of the underlying structure and as introduced in \cite{18}, every machine ${ \mathcal M}$ is equipped with an infinite number of registers $Z_1,Z_2,\ldots$ for the real numbers and a finite number of index registers $I_1,I_2,\ldots, I_{k_{\mathcal M}}$.
The latter registers serve the uniform implementation of algorithms over arbitrary structures since they make it possible to process inputs of any length. First, they are used in copy instructions in analogy with those classical RAM's that are equipped with an accumulator (cf. \cite{Aho}, \cite{Mehlh}). Second, they obtain the size of every input tuple by an input procedure in analogy with the BSS model and they can be used in order to determine the length of the output.
Third, they allow to copy values without using individuals of the underlying structure  as  indices or addresses in copy instructions and -- in contrast to the model in \cite{2} -- to calculate all addresses without the help of the operations of the underlying structure.

The additive BSS RAM's can perform labeled instructions of the form $Z_i:= Z_{j}+ Z_{k}$, $Z_i:= Z_{j}- Z_{k}$, $Z_j:=c$, {\it if $ Z_{j}= 0$ then goto $l_1$ else goto 
$l_2$}, {\it if $ Z_{j}\geq 0$ then goto $l_1$ else goto 
$l_2$}, $Z_{I_j}:=Z_{I_k}$, $I_j:=1$, $I_j:=I_j+1$, and
{\it if $I_{j}=I_{k}$ then goto $l_1$ else goto $l_2$}. Each assignment of an input $(x_1,\ldots,x_n)\in \bbbr^\infty=_{\rm df}\bigcup_{i\geq 1}\bbbr^i $ to the registers of a machine ${ \mathcal M}$ can be realized by $Z_1:=x_1;\ldots;Z_n:=x_n; I_1:=n;\ldots;I_{k_{\mathcal M}}:=n$. We denote the class of all additive BSS machines by ${\sf M}_{\rm add}$. 
A problem $P\subseteq \bbbr^\infty$ is ${\sf M}_{\rm add}$-{\em semi-decidable} if its partial characteristic function is computable by a machine in ${\sf M}_{\rm add}$, and a set is ${\sf M}_{\rm add}$-{\em decidable} if its characteristic function is computable by a machine in ${\sf M}_{\rm add}$. Thus the {\em halting sets} of machines in ${\sf M}_{\rm add}$ are ${\sf M}_{\rm add}$-semi-decidable, and a set is ${\sf M}_{\rm add}$-decidable if and only if the set itself and its complement are ${\sf M}_{\rm add}$-semi-decidable.
 A set $S\subseteq \bbbr^\infty$ is {\em recursively} (or {\em effectively}) {\em enumerable by a machine in ${\sf M}_{\rm add}$} (or {\em ${\sf M}_{\rm add}$-enumerable}) 
 if a surjective function $f:\bbbn \to S$ is computable by an ${\sf M}_{\rm add}$-machine. Note that this definition of the notion {\em effectively enumerable set} differs from the definition of recursively enumerable sets given in \cite{2}. 

A significant difference between the Turing model and the BSS (RAM) model becomes clear by considering the relation between semi-decidability and enumerability.
In the classical theory of recursive functions and in the theory of recursive reducibility, we mainly speak about recursive and recursively enumerable sets. 
The {\em recursive sets} $S\subseteq \bbbn$ are decidable by Turing machines and 
 the {\em recursively enumerable} problems $S\subseteq \bbbn$ are the sets for which a bijective function $f:\bbbn \to S$ is computable by a Turing machine. The sets semi-decidable by Turing machines are the halting sets of Turing machines -- and consequently reducible to the Halting Problem for Turing machines with respect to many-one reducibility -- and they are recursively enumerable by Turing machines. Thus, a set of positive integers is decidable by a Turing machine  if and only if the set itself and its complement are effectively enumerable.
The halting sets of machines in ${\sf M}_{\rm add}$ can also be reduced to the Halting Problem for machines in ${\sf M}_{\rm add}$. 
 However, there are halting sets $S\subseteq \bbbr^{\infty}$ (e.\,g. $S=\bbbr$) which are not effectively ${\sf M}_{\rm add}$-enumerable.

Let ${\sf M}_{\rm add}^{1,=}$ and ${\sf M}_{\rm add}^1$ be the classes of the additive BSS machines  over $(\bbbr;0,1;+,-;=)$ and $\bbbr_{\rm add}^{1}$, respectively, using only the constants 0 and 1 in instructions of the form $Z_j:=c$ and only the equality tests and order tests, respectively. Let notions such as {\em ${\sf M}_{\rm add}^{1,=}$-decidable} be defined analogously to the definitions given above.

The subsets of $\bbbr$ semi-decidable or decidable by additive machines are easy to characterize (cf.\,\,Proposition 1 in \cite{2}, where the authors call all halting sets {\em recursively enumerable}) since, 
for any input $x\in \bbbr$, at any time of the computation the content of any register of an additive machine using only constants 1 and 0
can be described by a term $kx+l$ with $k,l\in \bbbz$ and, according to this,
 tests along a computation path correspond to questions of the form {\em $kx+l=0$?} and {\em $kx+l\geq 0$?}, respectively.
 The class of $ {\sf M}_{\rm add}^{1,=}$-semi-decidable problems $P\subseteq \bbbr$ contains  countable sets $\subseteq \bbbq$ and the co-finite sets $S$ with $S\supseteq \bbbr\setminus \bbbq$,
 the class of $ {\sf M}_{\rm add}^{1,=}$-decidable problems $P\subseteq \bbbr$ contains the finite sets $\subseteq \bbbq$ and the co-finite sets $S$ with $S\supseteq \bbbr\setminus \bbbq$,
 the problems $P\subseteq \bbbr$ semi-decidable by a machine in $ {\sf M} _{\rm add}^{1}$ are
 countable unions of intervals, and,
for  any $ {\sf M} _{\rm add}^{1}$-decidable problem $P\subseteq \bbbr$, the sets $P$ and $\bbbr\setminus P$  are countable unions of disjoint intervals  with end points  in $\bbbq\cup \{-\infty,\infty\}$.
The latter holds since any set $S$ of inputs for which an additive machine in $ {\sf M} _{\rm add}^{1}$ follows a computation path is convex and, consequently, an interval if $S\subseteq \bbbr$. If the machine decides a problem, then each of these computation paths  is finite and, thus, there are only a countable number of paths. In case of machines over $\bbbr _{\rm add}^{1}$ and inputs from $\bbbr$, we can order the paths such that a path $B_1$ is ''left'' from another path $B_2$ if and only if the inputs for which a machine goes through $B_1$ are less than the inputs for which the machine goes through $B_2$. Because we assume that any computation path is either an accepting or a rejecting path, any path corresponds to an interval of accepted inputs or to one of rejected inputs. The $ {\sf M} _{\rm add}^{1}$-semi-decidable subsets of $\bbbr$ are countable unions of disjoint intervals with limits of computable increasing or decreasing sequences as end points. 

By analogy with the classical Turing reduction, we want to consider reductions by means of oracle machines that possess an oracle ${ \mathcal O} \subseteq \bbbr^\infty$ and are able to make tests whether a given tuple $\vec x \in \bbbr^\infty$ belongs to ${ \mathcal O}$ or not. To realize the evaluation of queries of the form {\em $\vec x \in { \mathcal O}$?} we allow to use oracle instructions of the form\begin{equation}\label{OracleInstruction}\mbox{\em if $(Z_1,\ldots,Z_{I_1})\in \!{ \mathcal O}$ then goto $l_1$ else goto $l_2$}\end{equation}
where the length of any tuple in a query is determined by the content of the index register $I_1$. 
Note that we prefer to use this form of oracle instructions explicitly introduced in \cite{18} in order to take into account the possibility to use the same form also in a general model of computation over arbitrary structures. Accordingly, let ${\sf M}_{\rm add}^1({ \mathcal O})$ be the class of the additive machines over $\bbbr_{\rm add}^1 $ which can additionally execute instructions of the form (\ref{OracleInstruction}).
 Like in \cite{6}, we say that a problem ${ \mathcal A}\subseteq \bbbr^\infty $ is {\em easier than} a problem ${ \mathcal B}\subseteq\bbbr^\infty$ (denoted by ${ \mathcal A}\preceq_{\rm add}^{1}{ \mathcal B}$) 
if ${ \mathcal A}$ is decidable by a machine in ${\sf M}_{\rm add}^{1} ({ \mathcal B})$. If ${ \mathcal A}$ is easier than ${ \mathcal B}$, then ${ \mathcal B}$ is {\em harder than} ${ \mathcal A}$. The problem ${ \mathcal A}$ is {\em strictly easier than} 
${ \mathcal B}$ (denoted by ${ \mathcal A} \strpreceq\!\!_{\rm add}^{1}\,\,{ \mathcal B}$) if ${ \mathcal A}\preceq_{\rm add}^{1}{ \mathcal B}$ holds and ${ \mathcal B}$
cannot be decided by a machine in ${\sf M}_{\rm add}^{1} ({ \mathcal A})$. We say that ${ \mathcal A}$ can be {\em Turing reduced to ${ \mathcal B}$ by a machine in ${\sf M}_{\rm add}^{1}$} -- or we could also say that ${ \mathcal A}$ can be {\em BSS reduced to ${ \mathcal B}$ by a machine in ${\sf M}_{\rm add}^{1}$} -- if ${ \mathcal A}$ is easier than ${ \mathcal B}$. Let ${\mathcal A}\equiv_{\rm add}^{1}{ \mathcal B}$ if and only if ${\mathcal A}\preceq_{\rm add}^{1}{\mathcal B}$ and ${\mathcal B}\preceq_{\rm add}^{1}{\mathcal A}$. Thus, for all decision problems ${ \mathcal A},{ \mathcal B}\subseteq \bbbr^\infty$ and $ { \mathcal A}^{\rm c}\,\,=_{\rm df} \,\,\bbbr^\infty \setminus { \mathcal A}$, we have ${ \mathcal A} \equiv_{\rm add}^{1} { \mathcal A}^{\rm c}$, ${ \mathcal A}\preceq_{\rm add}^{1} \{1\}\times { \mathcal A}\cup \{2\}\times { \mathcal B}$, and if ${ \mathcal A}$ and ${ \mathcal B}$ are disjoint and ${\sf M}_{\rm add}^{1}$-semi-decidable, then we have ${ \mathcal A}\preceq_{\rm add}^{1} { \mathcal A}\cup { \mathcal B} $. ${\sf M}_{\rm add}^{1,=}({ \mathcal O})$  and  $\preceq_{\rm add}^{1,=}$ can similarly be defined.
 To simplify matters, we will write ${ \mathcal A}\preceq{ \mathcal B}$, 
 ${ \mathcal A} \strpreceq{ \mathcal B}$, and ${ \mathcal A} \equiv { \mathcal B}$ for and only for
${\mathcal A}\preceq_{\rm add}^{1}{ \mathcal B}$, ${ \mathcal A} \strpreceq\!\!_{\rm add}^{1}\,\,{ \mathcal B}$, and ${\mathcal A}\equiv_{\rm add}^{1}{ \mathcal B}$, respectively.

 For the classes ${\sf M}_{\rm add}^{1} $, ${\sf M}_{\rm add}^{1,=} $, and ${\sf M}_{\rm add} $ we will now consider the 
halting problems defined by 
\[\bbbh_{\rm add}^{[1[,=]]} = \bigcup_{n\geq 1} \{(n\,.\,\vec x\,.\,{\rm code}({ \mathcal M}))\mid {\vec x}\in \bbbr ^n \,\,\&\,\, { \mathcal M}\in{\sf M}_{\rm add}^{[1[,=]]} \,\,\&\,\, { \mathcal M}(\vec x)\downarrow \}\] 
where we write $(n\,.\,\vec x\,.\,{\rm code}({ \mathcal M}))$ instead of $(n,x_1,\ldots,x_n,s_1,\ldots,s_m)$ if $\vec x= (x_1,\ldots,x_n)$ and ${\rm code}({ \mathcal M})=( s_1,\ldots,s_m)$. ${ \mathcal M}(\vec x)\downarrow $ and ${ \mathcal M}(\vec x)\uparrow $ will be used to express the facts that ${ \mathcal M}$ halts on $\vec x$ or not. 
The codes ${\rm code}({ \mathcal M})$ of the machines ${ \mathcal M}\in{\sf M}_{\rm add}^{1}$ are sequences of the codes of the single symbols of their programs (in analogy with \cite{2}, p. 34) and, for some $k$, each of the single symbols is encoded by $k$ symbols in $\{0,1\}$ unambiguously (in analogy with the classical setting where Turing machines are considered). 
For every other BSS machine ${ \mathcal M} \in {\sf M}_{\rm add}$, let ${\rm code}({ \mathcal M})$ be the usual code of its program where any real constant is encoded by itself and the other single symbols are encoded by $k$ zeros and ones (for some $k$). Consequently, it is easy to see that we have
\[ \bbbh_{\rm add}^{1,=} \preceq \bbbh_{\rm add}^{1} \equiv \bbbh_{\rm add}.\] The latter is valid since $\bbbh_{\rm add}$ is semi-decidable by  a machine in ${\sf M}_{\rm add}^1$.

 Since $\bbbn=_{\rm df } \{0,1,\ldots\}$ is effectively enumerable by a machine in ${\sf M}_{\rm add}^{1,=}$, $\bbbn$ and $\bbbn_+=_{\rm df } \{1,2, \ldots\}$ are   ${\sf M}_{\rm add}^{1}$-decidable and    ${\sf M}_{\rm add}^{1,=}$-semi-decidable. Therefore,  with respect to the machines in  ${\sf M}_{\rm add}^{1,=}$, we  have $\bbbn
\preceq _{\rm add}^{1,=} \bbbh_{\rm add}^{1,=}$ which means that   $\bbbn$  and $\{(x,y)\in \bbbn^2\mid x\geq y\}$ are  decidable by  machines in ${\sf M}_{\rm add}^{1,=}(\bbbh_{\rm add}^{1,=})$.
If an input $\vec x$ is in $\bbbn^\infty$, then a machine in ${\sf M}_{\rm add}^{1,=}$ can determine  the binary representation of $\vec x$ and simulate a given Turing machine. On the other hand, any addition of integers and, for any $x,y\in \bbbn$, the  test $x\geq y${\em ?} can be simulated by Turing machines. Therefore,     the Halting problem for Turing machines is  ${\sf M}_{\rm add}^{1,=}$-semi-decidable and    $\bbbh_{\rm add}^{1,=}\cap \bbbn^\infty$ is semi-decidable by a     Turing machine.       
Moreover, every problem $P\subseteq \bbbn$ is semi-decidable by a Turing machine if and only if it is ${\sf M}_{\rm add}^{1,=}$-semi-decidable  and 
the problem is decidable by a Turing machine
 if and only if it is ${\sf M}_{\rm add}^{1}$-decidable. 
We have also the following.

\begin{prop}\label{Turingred} For all $P\subseteq \bbbn_+$, $P$ is decidable by an oracle Turing machine using the Halting Problem for Turing machines as oracle if and only if $P \preceq_{\rm add}^{1,=}\bbbh_{\rm add}^{1,=} $ and  if and only if $P \preceq\bbbh_{\rm add}^{1} $. 
\end{prop}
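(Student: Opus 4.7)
The plan is to prove the cycle $(1)\Rightarrow(2)\Rightarrow(3)\Rightarrow(1)$, using the classical Turing halting problem $\bbbh_T\subseteq\bbbn$ as the pivot. Two many-one reductions form the backbone, and both are essentially already recorded in the excerpt. On the one hand, since the Turing halting problem is ${\sf M}_{\rm add}^{1,=}$-semi-decidable, there is a total recursive function $f$ with $n\in\bbbh_T \iff f(n)\in\bbbh_{\rm add}^{1,=}$; explicitly, $f$ packs a given Turing code and input into a BSS input tuple of the form $(1\,.\,n\,.\,{\rm code}(\mathcal{M}_0))$, where $\mathcal{M}_0\in{\sf M}_{\rm add}^{1,=}$ is a fixed universal simulator of Turing machines. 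On the other hand, on any integer input every register of an ${\sf M}_{\rm add}^{1}$-machine stays in $\bbbz$ (only $0,1,+,-$ are available), and order as well as equality tests between integers are Turing-computable, so any such BSS-machine can be simulated step by step by a Turing machine; consequently both $\bbbh_{\rm add}^{1,=}\cap\bbbz^\infty$ and $\bbbh_{\rm add}^{1}\cap\bbbz^\infty$ are Turing semi-decidable and therefore many-one reducible to $\bbbh_T$ (the excerpt records this for $\bbbn^\infty$; the argument is identical for $\bbbz^\infty$).

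For $(1)\Rightarrow(2)$, let $T$ be an oracle Turing machine with oracle $\bbbh_T$ that decides $P$. An ${\sf M}_{\rm add}^{1,=}$-machine can simulate $T$ step by step on integer inputs, and each oracle query ``$n\in\bbbh_T?$'' is rewritten as ``$f(n)\in\bbbh_{\rm add}^{1,=}?$'' using the first reduction above. For $(2)\Rightarrow(1)$, let $\mathcal{M}$ be an ${\sf M}_{\rm add}^{1,=}(\bbbh_{\rm add}^{1,=})$-machine deciding $P$. Since $P\subseteq\bbbn_+$ and the only ingredients are $0,1,+,-,=$, all register contents along the whole computation remain in $\bbbz$, so every oracle query is posed on a tuple in $\bbbz^\infty$. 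A Turing machine with oracle $\bbbh_T$ can then simulate $\mathcal{M}$ instruction by instruction and answer each oracle query by invoking the second reduction above and performing a single $\bbbh_T$-call.

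The direction $(2)\Rightarrow(3)$ is immediate: each equality test $x=y$ is encodable by the two order tests $x\geq y$ and $y\geq x$, so every ${\sf M}_{\rm add}^{1,=}$-machine is already an ${\sf M}_{\rm add}^{1}$-machine, and since the excerpt has already established $\bbbh_{\rm add}^{1,=}\preceq\bbbh_{\rm add}^{1}$, oracle queries to $\bbbh_{\rm add}^{1,=}$ can be answered using $\bbbh_{\rm add}^{1}$. For $(3)\Rightarrow(1)$ I rerun the $(2)\Rightarrow(1)$ argument with $\bbbh_{\rm add}^{1,=}$ replaced by $\bbbh_{\rm add}^{1}$, now using the companion reduction $\bbbh_{\rm add}^{1}\cap\bbbz^\infty\leq_m\bbbh_T$.

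The principal point to verify is that every oracle query along these simulations can be \emph{decided}, not merely semi-decided, since the goal is full decidability of $P$. This is handled uniformly by routing each query through a many-one reduction to $\bbbh_T$: with $\bbbh_T$ as oracle, any m-reducible set is decidable in one oracle call. A secondary care-point is the BSS input convention in which the leading coordinate of every input tuple encodes its length, so that the simulating BSS-machines must correctly package their integer arguments before each oracle call.
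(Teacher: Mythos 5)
Your skeleton — closing a cycle of implications through the classical halting problem $\bbbh_T$, using mutual step-by-step simulation together with the two m-reduction facts recorded just before the statement — is the right strategy and matches what the paper implicitly relies on (the proposition is asserted after the surrounding discussion rather than given a displayed proof). The directions $(2)\Rightarrow(1)$, $(2)\Rightarrow(3)$, and $(3)\Rightarrow(1)$ are all fine as you sketch them: on integer inputs the registers of an ${\sf M}_{\rm add}^{1[,=]}$-machine stay in $\bbbz$, every oracle query lands in $\bbbz^\infty$, and $\bbbh_{\rm add}^{1[,=]}\cap\bbbz^\infty$ is $\bbbh_T$-decidable via the m-reduction.

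There is, however, a genuine gap in $(1)\Rightarrow(2)$, and it is exactly the subtlety the paper flags in the remark following the proposition. The ${\sf M}_{\rm add}^{1,=}(\bbbh_{\rm add}^{1,=})$-machine you build must \emph{decide} $P$ on all of $\bbbr^\infty$, not only on $\bbbn_+$. Your description ``simulate $T$ step by step on integer inputs'' does not say what the machine does when the input $x$ is not a positive integer, and with equality tests alone it cannot find out: enumerating $1,2,\ldots$ and comparing confirms $x\in\bbbn_+$ if and when $x$ is an integer, but runs forever on every $x\notin\bbbn_+$, so the simulating machine would diverge instead of rejecting. The missing step is the one the paper supplies in the preceding paragraph: since $\bbbn_+$ is ${\sf M}_{\rm add}^{1,=}$-semi-decidable, it is a halting set and therefore many-one reducible to $\bbbh_{\rm add}^{1,=}$ by an ${\sf M}_{\rm add}^{1,=}$-computable map, i.e.\ $\bbbn_+\preceq_{\rm add}^{1,=}\bbbh_{\rm add}^{1,=}$. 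Your machine must \emph{first} pack $x$ into a query $(1\,.\,x\,.\,{\rm code}({\mathcal N}))$ for a suitable ${\mathcal N}\in{\sf M}_{\rm add}^{1,=}$, ask the oracle whether this is in $\bbbh_{\rm add}^{1,=}$, reject immediately if not, and only then run the simulation of $T$. Without this oracle-assisted membership test, the construction is not a decision procedure. This is also why the paper restricts the following proposition (about a general $Q\subseteq\bbbn_+$) to the relation $\preceq$ and drops $\preceq_{\rm add}^{1,=}$: for an arbitrary $Q$ one cannot expect $\bbbn\preceq_{\rm add}^{1,=}Q$. In the $(3)$-variant with $\preceq$ this issue does not arise, since with order tests $\bbbn_+$ is already ${\sf M}_{\rm add}^{1}$-decidable without any oracle. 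Your ``secondary care-point'' about the leading length coordinate is a real but routine bookkeeping matter; the substantive missing ingredient is the $\bbbn_+$-membership test above.
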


\begin{cor}We have
 $(\bbbh_{\rm add}^{1,=}\cap \bbbn^\infty)\,\, \equiv_{\rm add}^{1,=}\,\, (\bbbh_{\rm add}^{1} \cap \bbbn^\infty)\,\, \preceq_{\rm add}^{1,=} \,\,\bbbh_{\rm add}^1 $. \end{cor}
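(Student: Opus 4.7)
The corollary splits into three reductions: $\bbbh_{\rm add}^{1,=}\cap \bbbn^\infty\preceq_{\rm add}^{1,=}\bbbh_{\rm add}^{1}\cap \bbbn^\infty$, its reverse, and $\bbbh_{\rm add}^{1}\cap \bbbn^\infty\preceq_{\rm add}^{1,=}\bbbh_{\rm add}^{1}$. My plan is to handle all three by the same recipe. On input $\vec y\in\bbbr^\infty$ of length $\ell$ (read off from $I_1$), the reducing machine first verifies that $\vec y\in\bbbn^\infty$ by querying the oracle; if the check fails it rejects, and otherwise it parses $\vec y$ as $(n,\vec x,\mathrm{code}(\mathcal M))$ and computes from $\mathrm{code}(\mathcal M)$ the code of an equivalent machine $\mathcal M'$ in the required class, by a Turing-style simulation on the natural-number data. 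It then issues a single oracle query for $(n,\vec x,\mathrm{code}(\mathcal M'))$ in the first two reductions, or for $\vec y$ itself in the third.

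The naturality check is the step that really needs the oracle. I fix once and for all an ${\sf M}_{\rm add}^{1}$-machine $\mathcal M_\bbbn$ that halts on $z$ iff $z\in\bbbn$; such a machine exists by the ${\sf M}_{\rm add}^{1}$-decidability of $\bbbn$ already noted before Proposition \ref{Turingred}. For each coordinate $y_i$, I would save the remaining data to high-numbered registers, write $(1,y_i,\mathrm{code}(\mathcal M_\bbbn))$ into $Z_1,\ldots,Z_{m+2}$, set $I_1:=m+2$, and issue the oracle query. Since $1$ and the code symbols are already natural, the tuple lies in $\bbbh_{\rm add}^{1}$ -- and equally in $\bbbh_{\rm add}^{1}\cap\bbbn^\infty$ -- exactly when $y_i\in\bbbn$. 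Looping over $i=1,\ldots,\ell$ therefore decides $\vec y\in\bbbn^\infty$ with either of the two oracles that the corollary uses.

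For the code transformations I exploit the standard semantic equivalences. To reduce the $=$-halting problem to the $\geq$-halting problem, each test $Z_j=0$ inside $\mathcal M\in{\sf M}_{\rm add}^{1,=}$ is replaced by the conjunction of $Z_j\geq 0$ and $-Z_j\geq 0$, yielding $\mathcal M'\in{\sf M}_{\rm add}^{1}$ with $\mathcal M(\vec x)\!\downarrow\iff\mathcal M'(\vec x)\!\downarrow$ on all of $\bbbr^\infty$ and a Turing-computable map $\mathrm{code}(\mathcal M)\mapsto\mathrm{code}(\mathcal M')$. In the opposite direction, on an integer argument $Z_j$ the test $Z_j\geq 0$ can be decided by the unbounded search $Z_j=0?$, $Z_j-1=0?$, $Z_j+1=0?$, $Z_j-2=0?,\ldots$, which terminates because $Z_j\in\bbbz$. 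Replacing every order test by such a subroutine gives an ${\sf M}_{\rm add}^{1,=}$-machine whose behaviour on inputs in $\bbbn^\infty$ coincides with that of the original, and again the corresponding code map is Turing-computable.

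The only real obstacle is bookkeeping: saving the input, writing the constants $1$ and the fixed code of $\mathcal M_\bbbn$ into the oracle-query slots, carrying out Turing-style symbol manipulation on $\mathrm{code}(\mathcal M)$ to produce $\mathrm{code}(\mathcal M')$, and restoring the query vector between successive calls -- all of this has to be done with the instruction set of ${\sf M}_{\rm add}^{1,=}$ alone. This is exactly what the remark in the excerpt -- that ${\sf M}_{\rm add}^{1,=}$-machines simulate Turing machines on $\bbbn^\infty$ -- guarantees, so no further technique is needed. Combining the three reductions yields the corollary.
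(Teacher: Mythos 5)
Your decomposition into three reductions and your overall recipe -- a naturality check via a fixed oracle query, a Turing-style rewriting of $\mathrm{code}(\mathcal M)$ into $\mathrm{code}(\mathcal M')$, and then a single oracle call on the rewritten tuple -- is sound and is essentially what the paper's corollary to Proposition~\ref{Turingred} intends. The two code translations (order-to-equality via unbounded integer search, equality-to-order via a pair of sign tests) are both correct, and the first one is valid precisely because you only ever run $\mathcal M'$ on arguments already certified to lie in $\bbbn^\infty$.

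There is, however, a genuine gap in the naturality check for the second reduction, $\bbbh_{\rm add}^{1}\cap\bbbn^\infty\preceq_{\rm add}^{1,=}\bbbh_{\rm add}^{1,=}\cap\bbbn^\infty$, where the oracle is $\bbbh_{\rm add}^{1,=}\cap\bbbn^\infty$. You fix a single ${\sf M}_{\rm add}^{1}$-machine $\mathcal M_\bbbn$ and test whether $(1,y_i,\mathrm{code}(\mathcal M_\bbbn))$ lies in the oracle. But $\mathcal M_\bbbn$ uses order tests, so $\mathrm{code}(\mathcal M_\bbbn)$ is not the code of any machine in ${\sf M}_{\rm add}^{1,=}$; by the very definition of $\bbbh_{\rm add}^{1,=}$ the tuple $(1,y_i,\mathrm{code}(\mathcal M_\bbbn))$ is therefore \emph{never} in $\bbbh_{\rm add}^{1,=}$, whatever $y_i$ is, and your loop would reject every input in that reduction. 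Your remark that the test works ``with either of the two oracles that the corollary uses'' also miscounts: the corollary involves three oracles, $\bbbh_{\rm add}^{1}\cap\bbbn^\infty$, $\bbbh_{\rm add}^{1,=}\cap\bbbn^\infty$, and $\bbbh_{\rm add}^{1}$, and the one you overlooked is exactly the one your fixed $\mathcal M_\bbbn$ cannot serve. The repair is easy: for that oracle replace $\mathcal M_\bbbn$ by the ${\sf M}_{\rm add}^{1,=}$-machine $\mathcal M'_\bbbn$ which, on input $z$, enumerates $0,1,2,\ldots$ and halts at the first match $z=n$. Then $\mathcal M'_\bbbn$ halts iff $z\in\bbbn$, its code consists of zeros and ones, and $(1,y_i,\mathrm{code}(\mathcal M'_\bbbn))\in\bbbh_{\rm add}^{1,=}\cap\bbbn^\infty$ iff $y_i\in\bbbn$, which restores a correct naturality test. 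With that one change the argument goes through.
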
 

 For any problem $Q\subseteq \bbbn_+$  below the Halting Problem for Turing machines,  there is no guarantee that    $\bbbn \preceq_{\rm add}^{1,=}Q$ holds. For this reason we restrict the  following statement to the relation $\preceq $.

\begin{prop}For all $P,Q\subseteq \bbbn_+$,  $P$ is decidable by an oracle Turing machine using $Q$ as oracle   if and only if     $P \preceq Q$.
\end{prop}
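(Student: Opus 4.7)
The strategy is to show both directions by simulation, using the fact that on integer inputs the additive BSS model is essentially a Turing model equipped with linear bookkeeping.

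For the direction ``$P \preceq Q \Rightarrow P$ decidable by an oracle Turing machine with oracle $Q$'', fix a machine $\mathcal M\in {\sf M}_{\rm add}^1(Q)$ deciding $P$ and let $x\in \bbbn_+$ be an input. As explained in the discussion preceding Proposition~\ref{Turingred}, for inputs from $\bbbr$ (a fortiori from $\bbbn_+$) the content of every real register of $\mathcal M$ at every step has the form $kx+l$ with $k,l\in \bbbz$. The plan is for the simulating Turing machine $\mathcal T^Q$ to maintain, on its tape, a finite table listing the indices of the finitely many currently ``touched'' real registers together with the pair $(k,l)\in\bbbz^2$ describing each of them, plus the values of the index registers written in binary. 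Every additive instruction updates the pair $(k,l)$ by integer arithmetic; every equality or order test reduces to a linear-arithmetic question $kx+l=0?$ or $kx+l\geq 0?$, which is decidable by $\mathcal T^Q$ from $x$ and $(k,l)$; indirect copies are handled by looking up (and if necessary inserting) the row for the index held in the relevant $I_j$.

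The only delicate step is the oracle instruction $(Z_1,\ldots,Z_{I_1})\in {\mathcal O}$ with ${\mathcal O}=Q\subseteq \bbbn_+$. Since tuples of $Q$ have length $1$, $\mathcal T^Q$ answers NO whenever $I_1\neq 1$; otherwise it evaluates $Z_1=kx+l$ in $\bbbz$, answers NO if this integer is not in $\bbbn_+$, and otherwise queries its own oracle $Q$ on that positive integer. Thus $\mathcal T^Q$ decides $P$ with oracle $Q$. This bookkeeping argument is the main (though routine) obstacle, precisely because indirect addressing through the $I_j$ could in principle touch unboundedly many real registers; but the simulation is only run on halting computations, so the touched set stays finite.

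For the converse direction, suppose $\mathcal T^Q$ is an oracle Turing machine deciding $P$ using $Q$ as oracle. By the remarks just before Proposition~\ref{Turingred}, on any input $x\in \bbbn_+$ an ${\sf M}_{\rm add}^1$-machine can compute the binary representation of $x$ and step-by-step simulate $\mathcal T$. Whenever $\mathcal T$ would query its oracle on some natural number $n$, the simulating additive machine has $n$ available in a register (recovered from its binary representation by a standard doubling-and-adding procedure using only $+$, $-$, and the constant $1$), stores $n$ into $Z_1$, sets $I_1:=1$ (using the available index instructions), and executes the oracle instruction (\ref{OracleInstruction}) with $\mathcal O=Q$; the answer is then used exactly as $\mathcal T$ would have used it. Since $P\subseteq\bbbn_+$ is $\mathcal T^Q$-decidable, the resulting machine in ${\sf M}_{\rm add}^1(Q)$ decides $P$, yielding $P\preceq Q$. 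Combining both directions gives the equivalence.
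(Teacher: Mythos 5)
Your two-way simulation argument is essentially the approach the paper relies on (the paper doesn't supply a separate proof but leans on the discussion preceding Proposition~1: additive machines simulate Turing machines on integer inputs and vice versa, extended to oracle queries). The register-bookkeeping by pairs $(k,l)$ for the forward direction and the binary-encoding simulation for the converse are both correct and routine.

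There is, however, one point you pass over that the paper flags as the reason the statement is phrased with $\preceq$ (i.e.\ $\preceq_{\rm add}^1$) rather than $\preceq_{\rm add}^{1,=}$. In the converse direction you build an additive oracle machine that, ``on any input $x\in\bbbn_+$,'' computes the binary representation and simulates $\mathcal T^Q$. But that machine has to \emph{decide} $P$ on all of $\bbbr^\infty$, so it must first establish whether its input is a single positive integer and reject otherwise. Deciding membership in $\bbbn_+$ is possible for an ${\sf M}_{\rm add}^1$-machine (enumerate $1,2,3,\ldots$ and use an order test to detect overshoot), but it is \emph{not} possible in ${\sf M}_{\rm add}^{1,=}$, where $\bbbn_+$ is only semi-decidable and, without a guaranteed $\bbbn \preceq_{\rm add}^{1,=} Q$, the oracle $Q$ cannot be used to repair this. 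This is exactly what the sentence immediately before the proposition (``there is no guarantee that $\bbbn \preceq_{\rm add}^{1,=} Q$ holds\ldots'') is warning about. Your proof would read identically if $\preceq$ were replaced by $\preceq_{\rm add}^{1,=}$, which shows the decidability-of-$\bbbn_+$ step is being silently assumed rather than proved; make it explicit and note it is the place where an order test is indispensable.

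Two minor remarks on the forward direction. First, since inputs are integers, the values $kx+l$ are themselves integers, so the Turing machine may simply store the register contents directly rather than the pair $(k,l)$; your bookkeeping is nevertheless correct. Second, your observation that only finitely many registers are touched is true but better justified by noting that index registers can only be incremented by $1$ per step, so after $t$ steps only registers with indices at most $\max(n,t)$ can have been addressed; ``the computation halts'' alone does not quite pin down the bound that the simulating Turing machine needs while it is still running.
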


That means that the classical Turing degrees 
 form a substructure of the structure of the degrees with respect to the machines in ${\sf M}_{\rm add}^{1}$.  However, no classical degree is 
equal to the corresponding degree with respect to $\preceq $. Among others, we have, for instance, $\{\frac{n}{k}\mid n\in \bbbz\} \equiv \bbbn\equiv \emptyset$ for any $k\in \bbbn_+$. 
 Moreover, we are able to show that $\{\frac{n}{k}\in \bbbq\setminus \bbbz\mid n\in P\}\strpreceq P$ holds for any $k\in \bbbn_+$ and any undecidable $P\subseteq \bbbn_+$. The latter means that there are further degrees.

\begin{cor} 
The class of all Turing degrees resulting from the classical Turing reduction represent a proper subclass of Turing degrees defined by the relation $\equiv$.
\end{cor}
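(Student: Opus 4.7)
The plan is to prove the corollary by exhibiting an explicit $\equiv$-degree that does not come from any classical Turing degree. First I would use the preceding proposition to set up an embedding: since that proposition gives $P\preceq Q\Leftrightarrow P\leq_T Q$ for all $P,Q\subseteq\bbbn_+$, the relation $\equiv$ restricted to subsets of $\bbbn_+$ coincides with classical Turing equivalence $\equiv_T$, and the assignment sending the classical Turing degree of $P\subseteq\bbbn_+$ to its $\equiv$-degree $[P]_\equiv$ is a well-defined, injective, order-preserving embedding of the classical Turing degrees into the $\equiv$-degrees. It therefore remains only to produce one $\equiv$-degree that lies outside the image of this embedding.

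For the witnessing degree I would invoke Spector's classical minimal-degree theorem to fix an undecidable $P\subseteq\bbbn_+$ whose classical Turing degree is minimal, meaning no classical degree lies strictly between $0$ and $[P]_T$. Fix $k\geq 2$ and set $Q:=\{\frac{n}{k}\in\bbbq\setminus\bbbz\mid n\in P\}$. The claim quoted immediately before the corollary gives $Q\strpreceq P$, in particular $Q\preceq P$ and $Q\not\equiv P$. Assuming toward contradiction that $Q\equiv R$ for some $R\subseteq\bbbn_+$, transitivity yields $R\preceq P$, hence $R\leq_T P$ by the embedding; $Q\not\equiv P$ forces $R\not\equiv_T P$; and, provided $Q$ is shown to be ${\sf M}_{\rm add}^1$-undecidable, $R$ is classically undecidable as well. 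Then $R$ witnesses a classical Turing degree strictly between $0$ and $[P]_T$, contradicting minimality of $P$, so $[Q]_\equiv$ lies outside the image of the embedding.

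The one step that does not follow immediately is that $Q$ itself is ${\sf M}_{\rm add}^1$-undecidable, and I expect this to be the main technical obstacle. I would handle it by a symbolic-simulation argument: along any computation path of a putative decider $M\in{\sf M}_{\rm add}^1$ on a single real input $x$, every branching test has the form $\alpha x+\beta=0?$ or $\alpha x+\beta\geq 0?$ with $\alpha,\beta\in\bbbz$; substituting the unavailable input $x=\frac{n}{k}$ and multiplying through by $k$ converts each such test into an integer-linear test in $n$ alone, which an ${\sf M}_{\rm add}^1$-machine can evaluate directly on integer input $n\in\bbbn_+$. The resulting machine would decide $P$, contradicting undecidability of $P$, and together with the Spector-based argument above this completes the proof.
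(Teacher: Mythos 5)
Your overall plan is workable, but the step you already flagged as the main obstacle — undecidability of $Q$ — is where the argument as written breaks. The simulation on integer input $n$ decides membership of $n/k$ in $Q$, and since $Q\subseteq\bbbq\setminus\bbbz$, for every $n$ divisible by $k$ the simulated machine must reject independently of whether $n\in P$. So your simulation yields a decider for $P\cap\{n:k\nmid n\}$, not for $P$. For an arbitrary set of minimal Turing degree this can fail badly: if $P_0$ has minimal degree and you set $P=\{kn:n\in P_0\}$, then $P$ still has minimal degree but $Q=\emptyset$, which is decidable, and the entire contradiction evaporates. You must first arrange that every element of $P$ is coprime to $k$ — for instance, recursively recode a minimal-degree set into the odd numbers and take $k=2$ — after which the simulation genuinely decides all of $P$ and undecidability of $Q$ follows.

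Beyond that gap, invoking Spector's minimal-degree theorem is heavier machinery than the corollary needs, and it is not the route the paper has in mind. The decisive observation is simply that $Q$ has no integer elements: an ${\sf M}_{\rm add}^{1}(Q)$-machine run on an input in $\bbbn_+$ keeps only integer register contents, so every oracle query hits $\bbbz^\infty$ and is answered negatively, and the machine computes the same function without its oracle. Hence any $R\subseteq\bbbn_+$ with $R\preceq Q$ is already ${\sf M}_{\rm add}^{1}$-decidable, and therefore classically decidable. Once $Q$ is shown undecidable it follows immediately that $Q\not\equiv R$ for every $R\subseteq\bbbn_+$: if $R$ is decidable then $Q\equiv R$ would make $Q$ decidable, and if $R$ is undecidable then $R\not\preceq Q$. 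So $[Q]_\equiv$ contains no subset of $\bbbn_+$ and lies outside the embedded copy of the classical degrees, with no appeal to minimality and no need for the relation $Q\strpreceq P$ at all. This ``oracle is useless on integer inputs'' idea is exactly what the paper later isolates as a proposition about oracles that are decidable on $\bbbz^\infty$, and it is the intended one-step justification of the corollary once $P$ is fixed appropriately.
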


Since additive machines are able to evaluate several properties of rational as well as irrational  inputs, there are further real Turing degrees such as the degrees represented by  $\bbbl_1, \bbbl_2,\ldots$   in (\ref{ErsteHierarchie}).  
The answer of the question about the decidability of these  problems is dependent on  the algebraic  properties  of the underlying structure and we will  show that  
  the ${\sf M}_{\rm add}^{1}$-semi-decidable  problems  $\subseteq \bbbn_+$ and   $\bbbl_1, \bbbl_2,\ldots$ are incomparable with respect to the relation  $\preceq$.  In \cite{4} we showed
 \begin{equation}\bbbq= \bbbl_1\mbox{$\strpreceq$} \bbbl_2\mbox{$\strpreceq$} \cdots \mbox{$\strpreceq$} \bbbl_k\mbox{$\strpreceq$} \cdots\mbox{$\strpreceq$} {\bbbl \preceq \bbbh_{\rm add}^{1}}\label{ErsteHierarchie}\end{equation} where 
$ \bbbl _n=\{ (x_1,\ldots,x_n) \in \bbbr^n
\mid (\exists (q_0, \ldots,q_{n-1})\in \bbbq ^{n} ) (q_0+\sum_{i=1}^{n-1}q_ix_i =x_n)\}$ and $ \bbbl = \mbox{$\bigcup_{k\geq 1}$}\bbbl_k$ are subsets of $\bbbr^\infty $ whose elements can be described by certain linear equations. And a closer look reveals that $\bbbl$ is even easier than $\bbbh_{\rm add}^{1,=}$. Consequently, this raises the question whether $\bbbh_{\rm add}^{1,=}$ and $\bbbh_{\rm add}^{1}$ define the same Turing degree with respect to $\equiv$, and so on. Therefore, we shall now extend our results 
 to a refinement of this hierarchy by presenting a subhierarchy between $\bbbh_{\rm add}^{1,=}$ and $\bbbh_{\rm add}^{1}$ and constructing decision problems that are easier than $\bbbh_{\rm add}^{1,=}$ and harder the $\bbbl$.

\section{The construction of a further problem below $\bbbh_{\rm add}^{1,=}$ by a priority method }

By properties which we recalled above and classical results there are various semi-decidable problems $P\subseteq \bbbr$ and among them some  halting problems $P\subseteq \bbbn$ that are all not decidable by an additive machine without irrational constants. 
Here, we construct an ${\sf M}_{\rm add}^{1,=}$-semi-decidable problem $\bbba\subseteq \bbbn$ below $\bbbh_{\rm add}^{1,=}$ by means of a priority method often used in the classical recursion theory for constructing recursively enumerable sets (cf. \cite{9}, \cite{10}, and \cite{8})  and then  we  extend the results.
 This construction serves two goals. On one hand, we want to contribute to a classification of the ${\sf M}_{\rm add}^{1}$-semi-decidable problems, and on the other hand we want 
to take  the question  into account whether it is possible to transfer results of elementary recursion theory. 
We  will show that the transfer of ideas is possible if
the proofs are carefully analyzed and semantic and syntactic techniques that are usually combined become clearly separated. 
In particular, semantic techniques  can, in our experience, be successfully applied to discuss fundamental questions on the power of random access machines for several underlying structures. This is also confirmed by the following    application 
of the priority  method  that  provides a general way of constructing
  semi-decidable problems  below the halting problem over several structures  of finite signature.  Constructions such as the following are also possible  in cases where all problems that are  semi-decidable with respect to Turing machines are decidable by machines over the considered algebraic  structure (see also the remark after Proposition \ref{AnichtaufK}). 

We want to  define the set $\bbba$ recursively such that $\bbba$ has  the following properties. 
\begin{enumerate}
 \item $\bbba $ is ${\sf M}_{\rm add}^{1,=}$-enumerable and thus ${\sf M}_{\rm add}^{1,=}$-semi-decidable.

 \item $\bbbn\setminus \bbba$ is infinite.

\item
The intersection of $\bbba$ and each infinite ${\sf M}_{\rm add}^{1,=}$-enumerable subset of $\bbbn$ is not empty.

 \item The halting problem $\bbbh_{\rm add}^{1,=}$ cannot be reduced to $\bbba$ by a machine in ${\sf M}_{\rm add}^{1[,=]}$.
\end{enumerate}
Thus, by (2) and (3), 
 $\bbba$ is not decidable by a machine in ${\sf M}_{\rm add}^{1,=}$.
Moreover, we will see that this statement  also remains true if, for   deciding $\bbba$,  we allow
  all additive oracle machines in ${\sf M}_{\rm add}^{1}({ \mathcal Q})$ 
for which  the problem ${\mathcal Q}\cap \bbbz^\infty$    is ${\sf M}_{\rm add}^1$-decidable. The reason is that, for inputs in $\bbbz^\infty$, the oracle queries $x\in { \mathcal Q}${\em ?} have no benefit since they can be simulated by machines in ${\sf M}_{\rm add}^1$ without oracle queries. 
That means that we will, in such a situation, get $\bbba \not\preceq { \mathcal Q}$.

 Provided that a structure of finite signature such as $\bbbr_{\rm add}^{1,=}$ contains a set $\bar \bbbn$ that is decidable and effectively enumerable over this structure, then it is possible to construct a set $\bar \bbba\subseteq \bar \bbbn$ with properties corresponding to (1), \ldots, (4).
 Other algebraic properties of the underlying structure are irrelevant. From this point of view, semi-decidable sets such as $\bar \bbba$ are very simple in the double sense of the word. By analogy with the classical theory of computation we also want to say that a set is {\em simple} ({\em with respect to the underlying structure}) if it is effectively enumerable and its complement related to $\bar\bbbn$ is infinite and contains no infinite effectively  enumerable set, over this structure. In this context, the simplicity of a set that follows, for $\bbba$, from (1), (2), and (3) implies its undecidability. The property (4) follows from the lowness of $\bbba$ where we say that an effectively enumerable set $S$ is {\em low } ({\em with respect to the underlying structure}) if $\bbbk^S \preceq \bbbk^\emptyset$ holds for special halting problems $\bbbk^S$ and $\bbbk^\emptyset$ that we want to define now.

 The fact that the codes of the machines in ${\sf M}_{\rm add}^{1}({ \mathcal O})$ as well as in
${\sf M}_{\rm add}^{1,=}$ can be effectively enumerated by additive machines in ${\sf M}_{\rm add}^{1,=}$ and the next lemma will be of particular importance in the following.

\begin{lem} {\it The intersection of the halting set of any machine in $\,{\sf M}_{\rm add}^{1,=}$ and $\bbbn$ is effectively enumerable and, consequently, it is also a halting set of a machine in $\,{\sf M}_{\rm add}^{1,=}$. }\end{lem}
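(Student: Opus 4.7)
The proof is a classical dovetailing/padding argument adapted to the additive setting. Fix $\mathcal{M}\in{\sf M}_{\rm add}^{1,=}$ and let $S=H(\mathcal{M})\cap\bbbn$, where $H(\mathcal{M})$ denotes the halting set of $\mathcal{M}$. The plan is to build a machine $\mathcal{N}\in{\sf M}_{\rm add}^{1,=}$ computing a total surjection $\bbbn\to S$, from which the halting-set conclusion will follow immediately.

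The first ingredient is a bounded simulator $\mathcal{S}_{\mathcal{M}}\in{\sf M}_{\rm add}^{1,=}$: on input $(n,t)\in\bbbn^2$ it executes at most $t$ steps of $\mathcal{M}$ on input $n$ and reports whether $\mathcal{M}$ has halted. This is available by a hard-wired simulation, since on integer inputs every register content of $\mathcal{M}$ stays an integer (its only constants are $0$ and $1$), and, as already noted, every Turing-computable operation on $\bbbn$ is realisable by an ${\sf M}_{\rm add}^{1,=}$-machine; the program of $\mathcal{M}$ being a fixed finite word, its lines can simply be stepped through one at a time. Combined with a Turing-computable bijective pairing $\pi:\bbbn\to\bbbn^2$ (also ${\sf M}_{\rm add}^{1,=}$-computable for the same reason), this supplies all the dovetailing machinery needed.

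I would then define $\mathcal{N}$ as follows: on input $m$, first search $m'=0,1,2,\ldots$, running $\mathcal{S}_{\mathcal{M}}$ on $\pi(m')$, until a first halting witness $n_0$ is produced; then decode $(n,t)=\pi(m)$, run $\mathcal{S}_{\mathcal{M}}(n,t)$, and output $n$ if this call succeeds and $n_0$ otherwise. Every $n\in S$ is a halting input of $\mathcal{M}$ for some step-count $t$, so the index $m=\pi^{-1}(n,t)$ yields output $n$, giving surjectivity. When $S=\emptyset$ the initial witness search diverges, but in that case $S$ is already vacuously the halting set of any machine that never halts on $\bbbn$, so the enumerability clause is not needed.

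The ``consequently'' clause is obtained by converting the enumeration into a semi-decider: on input $\vec x$, the machine first tests $I_1=1$ and checks whether $Z_1\in\bbbn$ by repeated subtraction of $1$ with equality tests against $0$; it then computes $\mathcal{N}(0),\mathcal{N}(1),\ldots$ in turn and halts as soon as some $\mathcal{N}(k)$ equals $Z_1$. The main technical subtlety I foresee is simply the careful bookkeeping of the two nested searches -- the witness search for $n_0$ and the step-bounded main simulation -- but no genuinely new model-specific trick is required beyond the universal-simulation observation above.
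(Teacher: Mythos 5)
Your proof is correct and follows essentially the same dovetailing approach as the paper: enumerate, via a bounded simulator, the integer inputs together with step counts on which $\mathcal{M}$ halts. The only difference is one of care rather than of route. The paper's construction describes a machine $\bar{\mathcal{M}}$ that on input $j$ produces the $j$-th pair $(n_j,t_j)$ in the enumeration of halting pairs and outputs $n_j$; taken literally, this is a partial function when $H_{\mathcal{M}}\cap\bbbn$ is finite, whereas the definition of effectively enumerable demands a total surjection $f:\bbbn\to S$. Your version patches this by padding: decode $(n,t)=\pi(m)$, output $n$ if the $t$-step simulation succeeds and otherwise a fixed default element $n_0\in S$ found by an initial search. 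This makes the function total whenever $S\neq\emptyset$, and you correctly observe that the $S=\emptyset$ case can be dispatched directly for the ``consequently'' clause. So your argument is the same in substance, slightly more rigorous in handling the finite and empty cases that the paper's terser wording does not explicitly address.
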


\begin{proof} Let $H_{ \mathcal M}$ be the halting set of a machine ${ \mathcal M}\in {\sf M}_{\rm add}^{1,=}$. Then, a machine in ${\sf M}_{\rm add}^{1,=}$ can enumerate all pairs $(n_1,t_1),(n_2,t_2),\ldots$ $\in \bbbn^2$ for which ${ \mathcal M}$ halts on $n_i$ after exactly $t_i$ steps.
Thus, there is also an enumeration machine $\bar{ \mathcal M}$ $\in {\sf M}_{\rm add}^{1,=}$ computing,
 for the input $j$, the first pair $(n_1,t_1)$ from 1 and each further pair $(n_{i+1},t_{i+1})$ step-by-step from its enumerated predecessor $(n_{i},t_{i})$ for all $i<j$, and outputting the integer $n_j$. Consequently, $H_{ \mathcal M}\cap \bbbn$ is semi-decidable by a machine in ${\sf M}_{\rm add}^{1,=}$ using the program of $\bar{ \mathcal M}$ as subprogram.
\end{proof}

For any ${ \mathcal M}\in {\sf M}_{\rm add}^1({ \mathcal O})$, let $ K_{ \mathcal M}=2^{|{\rm code({ \mathcal M}}) |}+ c_{ \mathcal M}$, where $c_{ \mathcal M}$ is the integer whose binary code matches essentially (apart from leading zeros) with ${\rm code({ \mathcal M}})\in\{0,1\}^\infty$ (that means, more precisely, that $ K_{ \mathcal M}= \sum_{i=0}^lc_{i} 2^{l-i}$
if $c_0=1$ and ${\rm code}({ \mathcal M})=(c_1,\ldots,c_l)$) 
 and let ${ \mathcal M}_{ K_{ \mathcal M}} ^{ \mathcal O}={\mathcal M} $. If a number $i$ is not in $ \{ K_{ \mathcal M} \mid { \mathcal M}\in {\sf M}_{\rm add}^1({ \mathcal O}) \}$, then let the corresponding machine ${ \mathcal M}_i ^{ \mathcal O} $ be a simple machine that halts in any case. 
In this way, for any ${ \mathcal O}\subseteq \bbbn$, we get a list of all additive oracle machines by
${ \mathcal M}_1^{ \mathcal O}, { \mathcal M}_2^{ \mathcal O}, \ldots$.
Let ${ \mathcal N}_1, { \mathcal N}_2,\ldots$ be a list of all
additive machines in ${\sf M}_{\rm add}^{1,=}$ and let $\bar {
  \mathcal N}_i$ enumerate the set $W_i$ of all positive integers $n_{i,1},n_{i,2},\ldots$ that belong to the halting set of ${ \mathcal N}_i$.
 For any oracle set ${ \mathcal O}\subseteq \bbbr^\infty $ we will  use the following special halting problem.
\begin{displaymath}\begin{array}{c} 
\bbbk^{ \mathcal O} = \bbbh_{\rm spec}({\sf M}_{\rm add}^{1}( { \mathcal O}) ) \,\,=_{\rm df} \,\,\{ k\in \bbbn_+ \mid { \mathcal M}_k^{ \mathcal O}  \in {\sf M}_{\rm add}^{1}( { \mathcal O}) \,\,\&\,\, { \mathcal M}_k^{ \mathcal O} (k)\downarrow \}.
\end{array}\end{displaymath} Then, both following relations are trivial since, on integers, each order test can be simulated by means of equality tests.

\begin{lem} $ \bbbk^\emptyset \preceq \bbbh_{\rm add}^{1,=}$ and $\bbbk^\emptyset \preceq \bbbh_{\rm add}^1$.\end{lem}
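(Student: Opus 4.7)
The plan is to reduce $\bbbk^\emptyset$ to each halting problem by a single oracle query: on input $k \in \bbbn_+$ the reducing machine extracts (by standard integer decoding) the code of $\mathcal{M}_k^\emptyset$ from the binary representation of $k$, applies a syntactic rewriting that produces an equivalent program in the target class, and then asks the oracle whether this rewritten program halts on $k$. If $k$ is not the $K$-value of any machine, then $\mathcal{M}_k^\emptyset$ is by convention the trivial machine that always halts, and the reducing machine may immediately accept.

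For $\bbbk^\emptyset \preceq \bbbh_{\rm add}^{1}$ the syntactic rewriting is particularly easy: because the oracle is empty, every oracle instruction of the form (\ref{OracleInstruction}) always takes its \emph{else} branch and may therefore be replaced by an unconditional jump to $l_2$ (realised, for example, by ``if $Z_1 = 0$ then goto $l_2$ else goto $l_2$''). The rewritten program $\mathcal{M}'_k$ lies in ${\sf M}_{\rm add}^{1}$ and satisfies $\mathcal{M}'_k(k)\downarrow$ precisely when $\mathcal{M}_k^\emptyset(k)\downarrow$. A single query to $\bbbh_{\rm add}^{1}$ with the tuple $(1.k.\mathrm{code}(\mathcal{M}'_k))$ then decides membership in $\bbbk^\emptyset$.

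For $\bbbk^\emptyset \preceq \bbbh_{\rm add}^{1,=}$ we additionally eliminate the order tests, which is exactly the hint provided by the author. Since $\mathcal{M}'_k$ is started on the integer input $k$ and uses only the constants $0$ and $1$ with addition and subtraction, every register holds an integer at every step of the computation. Consequently each test ``if $Z_j \geq 0$ then goto $l_1$ else goto $l_2$'' can be replaced by a subroutine that, using only equality tests and a counter modified by $\pm 1$, successively compares $Z_j$ with $0,1,-1,2,-2,\ldots$; on any integer argument this subroutine terminates after finitely many steps and returns the sign of $Z_j$. Applying this replacement to every order test of $\mathcal{M}'_k$ yields a machine $\mathcal{M}''_k \in {\sf M}_{\rm add}^{1,=}$ that halts on $k$ iff $\mathcal{M}_k^\emptyset(k)\downarrow$, and a single query to $\bbbh_{\rm add}^{1,=}$ with $(1.k.\mathrm{code}(\mathcal{M}''_k))$ concludes.

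No serious obstacle arises. The only technicality to verify is that the maps $k \mapsto \mathrm{code}(\mathcal{M}'_k)$ and $k \mapsto \mathrm{code}(\mathcal{M}''_k)$ are computable by additive machines using the constants $0$ and $1$ alone, together with addition, subtraction, and index register manipulation; but this is plain integer bookkeeping on a bit string of length $O(\log k)$, of exactly the kind already invoked in Section~2 to equate Turing-computability with ${\sf M}_{\rm add}^{1}$-decidability on $\bbbn_+$, and it can actually be carried out already in ${\sf M}_{\rm add}^{1,=}$.
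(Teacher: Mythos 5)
Your proof is correct and takes essentially the same route as the paper, whose entire argument is the one-line remark that, on integers, each order test can be simulated by means of equality tests. You have merely spelled out the implicit details — replacing the (always-negative) empty-oracle queries by unconditional jumps, and replacing each order test by an equality-test loop over $0,1,-1,2,-2,\ldots$, which terminates because every register holds an integer on input $k\in\bbbn_+$ — and this is exactly what the paper's terse remark intends.
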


By using the ideas of the proof of Proposition 6 in \cite{Oracles} we get the following.

\begin{lem} $ \bbbk^\emptyset \equiv \bbbh_{\rm add}^{1,=}\cap \bbbn^\infty\equiv \bbbh_{\rm add}^{1}\cap \bbbn^\infty$.\end{lem}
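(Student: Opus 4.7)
My plan is to rely on the corollary following Proposition~\ref{Turingred}, which already yields $\bbbh_{\rm add}^{1,=}\cap\bbbn^\infty\equiv_{\rm add}^{1,=}\bbbh_{\rm add}^{1}\cap\bbbn^\infty$ and therefore, in the weaker relation $\equiv$, the equivalence $\bbbh_{\rm add}^{1,=}\cap\bbbn^\infty\equiv\bbbh_{\rm add}^{1}\cap\bbbn^\infty$. What is left is the equivalence $\bbbk^\emptyset\equiv\bbbh_{\rm add}^{1,=}\cap\bbbn^\infty$, which I would prove in two directions: one in the style of the classical $s$-$m$-$n$-theorem, the other by a uniform simulation of order tests by equality tests that becomes possible because additive computations with constants $0,1$ cannot leave $\bbbz$ when started on integer inputs.

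For $\bbbh_{\rm add}^{1,=}\cap\bbbn^\infty\preceq\bbbk^\emptyset$ I would, given an integer tuple $(n,\vec x,{\rm code}({\mathcal N}_i))$, uniformly construct the code of a machine ${\mathcal M}'\in{\sf M}_{\rm add}^{1}$ that ignores its own input, first writes $\vec x\in\bbbn^n$ into its real registers via repeated $+1$-additions starting from $0$, and then runs the program of ${\mathcal N}_i$ on these registers (which is allowed in ${\sf M}_{\rm add}^{1}$ because ${\sf M}_{\rm add}^{1,=}\subseteq{\sf M}_{\rm add}^{1}$). An ${\sf M}_{\rm add}^{1}$-machine then computes ${\rm code}({\mathcal M}')$ as an integer tuple and extracts from it the key $K_{{\mathcal M}'}$ of the enumeration ${\mathcal M}_{1}^\emptyset,{\mathcal M}_{2}^\emptyset,\ldots$. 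Since ${\mathcal M}'$ ignores its input, ${\mathcal M}'(K_{{\mathcal M}'})\!\downarrow$ iff ${\mathcal N}_i(\vec x)\!\downarrow$, so the single oracle query ``$K_{{\mathcal M}'}\in\bbbk^\emptyset$?'' decides the given instance.

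For the converse $\bbbk^\emptyset\preceq\bbbh_{\rm add}^{1,=}\cap\bbbn^\infty$ the key point is that if ${\mathcal M}_k^\emptyset\in{\sf M}_{\rm add}^{1}$ is started on the integer $k$, then, since the only constants are $0$ and $1$ and the only arithmetic operations are addition and subtraction, every register content stays in $\bbbz$ throughout the computation. Every order test ``$Z_j\geq 0$?'' can therefore be replaced by a subroutine that uses a fresh index register as a counter $i$ and tests, for $i=0,1,2,\ldots$, whether $Z_j-i=0$ or $Z_j+i=0$, branching to the ``$\geq 0$''-successor as soon as some $Z_j-i=0$ is detected and to the ``$<0$''-successor as soon as some $Z_j+i=0$ (with $i\geq 1$) is detected; the subroutine terminates in $|Z_j|+1$ steps because $Z_j\in\bbbz$. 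Carrying out this replacement uniformly in the binary code of ${\mathcal M}_k^\emptyset$ (recovered from $k$ by stripping its leading binary $1$) yields an ${\sf M}_{\rm add}^{1,=}$-computable map $k\mapsto{\rm code}({\mathcal N})$ with ${\mathcal N}\in{\sf M}_{\rm add}^{1,=}$ and ${\mathcal N}(k)\!\downarrow$ iff ${\mathcal M}_k^\emptyset(k)\!\downarrow$. A single query ``$(1,k,{\rm code}({\mathcal N}))\in\bbbh_{\rm add}^{1,=}\cap\bbbn^\infty$?'' then decides $k\in\bbbk^\emptyset$; those $k$ whose binary expansion does not code a valid machine are accepted directly, since for them ${\mathcal M}_k^\emptyset$ is by convention a machine that always halts.

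The main obstacle will be the syntactic bookkeeping required in the second direction: one needs consistent renumbering of instruction labels, allocation of fresh index registers for the counter $i$ of the dovetailing loop, and a uniform re-encoding of the resulting program into a binary integer tuple, all performed by a single machine in ${\sf M}_{\rm add}^{1,=}$. This is exactly the kind of uniform code-manipulation that underlies the proof of Proposition~6 in \cite{Oracles} cited in the statement of the lemma, so I would verify it once in detail and then invoke that result rather than reprove it.
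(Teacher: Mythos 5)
Your plan is sound and supplies, in the natural $s$-$m$-$n$ style, exactly the content the paper delegates to a citation of Proposition~6 in \cite{Oracles}; the reuse of the corollary after Proposition~\ref{Turingred} for the second equivalence is also legitimate, since $\equiv_{\rm add}^{1,=}$ refines $\equiv_{\rm add}^{1}$. Two small technical points should be repaired before this counts as complete. First, ${\mathcal M}_k^\emptyset$ ranges over ${\sf M}_{\rm add}^{1}(\emptyset)$, so its code may contain oracle instructions of the form~(\ref{OracleInstruction}); in the construction of ${\mathcal N}\in{\sf M}_{\rm add}^{1,=}$ you must therefore replace each such instruction by an unconditional jump to its else-label (sound because the oracle is $\emptyset$), in addition to replacing the order tests. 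You only describe the latter, and your phrasing ``${\mathcal M}_k^\emptyset\in{\sf M}_{\rm add}^{1}$'' suggests you may have tacitly assumed oracle-free codes. Second, in this model index registers admit only $I_j:=1$, $I_j:=I_j+1$, and equality tests among index registers; there is no way to compare an index register with a $Z$-register, so the running counter $i$ inside your macro for ``$Z_j\geq 0$?'' must live in a fresh real register (built up by repeated $+1$), not in an index register. Neither of these affects the substance of the argument, and the first direction, including the fact that ${\mathcal M}'$ ignores its input so that a single query $K_{{\mathcal M}'}\in\bbbk^\emptyset$ suffices, is correct as stated.
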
 

 By Proposition \ref{Turingred}, the following corollary states that the power of $\bbbk^\emptyset$ in the class of decision problems $P\subseteq \bbbn^\infty$ with respect to the relation $ \preceq$ (that means to $ \preceq_{\rm add}^1$) corresponds to one of the Halting Problem for Turing machines within the classical theory. This is easy to show since $\bbbn$ is ${\sf M}_{\rm add}^{1}$-decidable and, consequently, we can use the relationship $\bbbn \preceq \bbbk^\emptyset$. But, with respect to the relation $ \preceq_{\rm add}^{1,=}$ the power of $\bbbk^\emptyset$ is weaker; we have,  for any ${\mathcal O}\subseteq \bbbr^\infty$,  $\bbbn \not\preceq _{\rm add}^{1,=} \bbbk^{\mathcal O}$. The latter can be shown as follows. Since $\bbbk^{\mathcal O}$ is not ${\sf M}_{\rm add}^{1}$-decidable (cf.\,\,Lemma \ref{unentOrakelm}), $\bbbn \setminus \bbbk^{\mathcal O}$ is not finite. That means that any oracle machine in ${\sf M}_{\rm add}^{1,=} (\bbbk^{\mathcal O})$ goes, for all irrational inputs and for an infinite number of positive integers, through the finite computation path that results from answering nontrivial tests {\em $kx+l=0$?} (where $k\not=0$) and queries {\em $x\in \bbbk^{\mathcal O}$?} in the negative.

\begin{cor}\label{ReduktAequv} For any $P\subseteq \bbbn$, $P \preceq \bbbk^\emptyset$ if and only if $P \preceq \bbbh_{\rm add}^{1,=}$  and if and only if $P \preceq \bbbh_{\rm add}^1$.\end{cor}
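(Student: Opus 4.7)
The corollary asserts the three-way equivalence
$P \preceq \bbbk^\emptyset \Leftrightarrow P \preceq \bbbh_{\rm add}^{1,=} \Leftrightarrow P \preceq \bbbh_{\rm add}^{1}$
for arbitrary $P \subseteq \bbbn$. My plan is to peel off the two forward implications using the $\preceq$-inequalities already at hand and then attack the single nontrivial return direction by an integer-preservation argument combined with the preceding identification $\bbbh_{\rm add}^{1} \cap \bbbn^\infty \equiv \bbbk^\emptyset$.

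First I would dispose of the easy chain. The preceding lemma supplies $\bbbk^\emptyset \preceq \bbbh_{\rm add}^{1,=}$ and $\bbbk^\emptyset \preceq \bbbh_{\rm add}^{1}$, and the excerpt has noted $\bbbh_{\rm add}^{1,=} \preceq \bbbh_{\rm add}^{1}$. Since $\preceq$ is transitive (oracle machines compose in the standard way by replacing each oracle call with the inner reduction), the implications $P \preceq \bbbk^\emptyset \Rightarrow P \preceq \bbbh_{\rm add}^{1,=} \Rightarrow P \preceq \bbbh_{\rm add}^{1}$ are immediate. Closing the triangle thus reduces to establishing $P \preceq \bbbh_{\rm add}^{1} \Rightarrow P \preceq \bbbk^\emptyset$.

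For this last direction, fix an oracle machine $\mathcal M \in {\sf M}_{\rm add}^{1}(\bbbh_{\rm add}^{1})$ that decides $P$. Because $\bbbn$ is ${\sf M}_{\rm add}^{1}$-decidable, I may prepend to $\mathcal M$ a test that rejects every input outside $\bbbn$, so that only the behaviour on integer inputs is relevant. The key observation is that $\mathcal M$ uses only the real constants $0$ and $1$ and only the operations $+$ and $-$ on the real registers; hence on any integer input every real register content remains in $\bbbz$ throughout the computation, and in particular every oracle query tuple $(Z_1,\ldots,Z_{I_1})$ is posed with integer components. At that point the oracle $\bbbh_{\rm add}^{1}$ can be replaced by $\bbbh_{\rm add}^{1} \cap \bbbn^\infty$ without changing any answer: tuples with a non-positive first component or with code bits outside $\{0,1\}$ are structurally outside $\bbbh_{\rm add}^{1}$ and can be screened away by ${\sf M}_{\rm add}^{1}$-preprocessing, while the remaining signed integer queries are rewritten into $\bbbn^\infty$ by the standard pairing of each component with a sign bit via a uniform wrapper code that, when simulated, reassembles the intended integer inputs before running the queried program. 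This yields $P \preceq \bbbh_{\rm add}^{1} \cap \bbbn^\infty$, and the preceding lemma gives $\bbbh_{\rm add}^{1} \cap \bbbn^\infty \equiv \bbbk^\emptyset$, so $P \preceq \bbbk^\emptyset$ follows by one more application of transitivity.

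The only point where care is needed is this ${\sf M}_{\rm add}^{1}$-computable rewriting of integer oracle queries into natural-number queries; it is routine but should be spelled out. Beyond this bookkeeping the corollary is a direct consequence of the preceding $\preceq$-inequalities and of the identification $\bbbh_{\rm add}^{1} \cap \bbbn^\infty \equiv \bbbk^\emptyset$, with no further machinery required.
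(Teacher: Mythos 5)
Your proposal is correct, and the central observation — that on integer inputs every real register of a machine in ${\sf M}_{\rm add}^{1}$ stays in $\bbbz$, so all oracle queries are integer tuples — is exactly the right lever. Where you differ from the paper is in the packaging: the paper derives the corollary by invoking Proposition~\ref{Turingred} (which already equates $P\preceq\bbbh_{\rm add}^{1,=}$, $P\preceq\bbbh_{\rm add}^{1}$, and classical Turing reducibility to the Halting Problem for $P\subseteq\bbbn_+$) together with the lemma $\bbbk^\emptyset\equiv\bbbh_{\rm add}^{1,=}\cap\bbbn^\infty\equiv\bbbh_{\rm add}^{1}\cap\bbbn^\infty$ and the remark that $\bbbn$ is ${\sf M}_{\rm add}^{1}$-decidable, so the proof is essentially a citation chain through the classical setting. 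You instead run the integer-preservation argument directly against the oracle machine, reducing $P\preceq\bbbh_{\rm add}^{1}$ to $P\preceq\bbbh_{\rm add}^{1}\cap\bbbn^\infty$ and only then invoking the lemma. This is more self-contained (it does not detour through Turing machines) and has the merit of making explicit the one genuinely technical point that the paper leaves implicit inside Proposition~\ref{Turingred}: queries with negative integer components must be pushed into $\bbbn^\infty$ via a sign-bit encoding together with an ${\sf M}_{\rm add}^{1}$-computable rewriting of the queried program's code so that the wrapped program reassembles the signed integers before simulating. You are right to flag that step as the only place requiring care; it is a routine syntactic transformation of the $\{0,1\}$-code, but it does need to be said. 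Either route works; yours trades reliance on the earlier proposition for a slightly longer but more transparent argument.
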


In analogy with the classical setting, we have also the following.
\begin{lem}\label{unentOrakelm} For any oracle set ${ \mathcal O}\subseteq \bbbr^\infty $, we have $ \bbbk^{ \mathcal O} \not \preceq { \mathcal O}$.\end{lem}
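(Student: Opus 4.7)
The plan is a standard diagonal argument in the style of the classical undecidability proof for the halting problem, carried over to the oracle setting. Suppose for contradiction that $\bbbk^{ \mathcal O} \preceq { \mathcal O}$. Then by definition of $\preceq$ there is a machine ${ \mathcal N}\in {\sf M}_{\rm add}^{1}({ \mathcal O})$ that decides $\bbbk^{ \mathcal O}$, i.e.\ ${ \mathcal N}$ halts on every input and its accepting inputs are exactly the elements of $\bbbk^{ \mathcal O}$. From ${ \mathcal N}$ I would build a new machine ${ \mathcal N}^\ast \in {\sf M}_{\rm add}^{1}({ \mathcal O})$ which, on input $k\in \bbbr$, first simulates ${ \mathcal N}$ on $k$ and then diverges if ${ \mathcal N}$ accepts $k$ (for example by jumping into an infinite loop) and halts if ${ \mathcal N}$ rejects $k$. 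This construction only modifies the terminal behaviour of ${ \mathcal N}$ and uses no new basic operations, so ${ \mathcal N}^\ast$ is again in the class ${\sf M}_{\rm add}^{1}({ \mathcal O})$ and not in some strictly larger class.

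Next I would apply the enumeration of oracle machines fixed before the lemma. Setting $k_0 := K_{{ \mathcal N}^\ast}$ gives a positive integer with ${ \mathcal M}_{k_0}^{ \mathcal O} = { \mathcal N}^\ast$, so membership $k_0 \in \bbbk^{ \mathcal O}$ is equivalent to ${ \mathcal N}^\ast(k_0)\downarrow$. Feeding $k_0$ into ${ \mathcal N}^\ast$ produces the usual contradiction: if $k_0 \in \bbbk^{ \mathcal O}$, then ${ \mathcal N}$ accepts $k_0$ and hence by construction ${ \mathcal N}^\ast(k_0)\uparrow$, so $k_0 \notin \bbbk^{ \mathcal O}$; conversely, if $k_0 \notin \bbbk^{ \mathcal O}$, then ${ \mathcal N}$ rejects $k_0$ and ${ \mathcal N}^\ast(k_0)\downarrow$, so $k_0 \in \bbbk^{ \mathcal O}$. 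Either way the assumption $\bbbk^{ \mathcal O} \preceq { \mathcal O}$ fails.

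The only point that needs a bit of attention is verifying that the chosen $k_0$ really does satisfy ${ \mathcal M}_{k_0}^{ \mathcal O} = { \mathcal N}^\ast$, since the enumeration was defined so that indices not of the form $K_{ \mathcal M}$ are attached to a trivial always-halting machine. This is ensured by construction: ${ \mathcal N}^\ast$ belongs to ${\sf M}_{\rm add}^{1}({ \mathcal O})$, hence has a well-defined code, and $K_{{ \mathcal N}^\ast}$ is exactly the index at which ${ \mathcal N}^\ast$ appears in the list ${ \mathcal M}_1^{ \mathcal O},{ \mathcal M}_2^{ \mathcal O},\dots$. I expect this bookkeeping step and the verification that ${ \mathcal N}^\ast$ inherits membership in ${\sf M}_{\rm add}^{1}({ \mathcal O})$ from ${ \mathcal N}$ to be the only places where slight care is required; everything else is the transparent Cantor-style diagonal.
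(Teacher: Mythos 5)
Your proof is correct and matches the approach the paper has in mind: the paper states this lemma with only the remark ``In analogy with the classical setting, we have also the following,'' i.e.\ it appeals to exactly the standard Cantor/Turing diagonalization that you carry out. Your verification that ${\mathcal N}^\ast$ stays in ${\sf M}_{\rm add}^1({\mathcal O})$ and that $K_{{\mathcal N}^\ast}$ is indeed an index satisfying ${\mathcal M}_{K_{{\mathcal N}^\ast}}^{\mathcal O}={\mathcal N}^\ast$ (rather than falling into the default always-halting case) is precisely the bookkeeping the paper's enumeration was set up to make painless, so the argument goes through as you present it.
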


Now, the set $\bbba=\bigcup_{s\geq 1}\bbba_s$ will be defined in stages where, for any machine ${ \mathcal M}$ and any input $\vec x$, ${ \mathcal M}(\vec x)\downarrow ^t$ means that ${ \mathcal M}$ halts on $\vec x$ within $t$ steps and ${ \mathcal M}(\vec x)\uparrow ^t$ means that ${ \mathcal M}$ does not halt on $\vec x$ within $t$ steps. 
 Let
 $\bbba_1=\emptyset$. Assume that, for $s\geq 1$, $\bbba_s$ is already defined. Then, for any $j\leq s$,
let {$a(j,s)$} be defined as follows.
 If ${ \mathcal M}_j^{\bbba_s}(j)\downarrow ^s$, then let
{$a(j,s)$} 
be the greatest integer which is used in a query by ${ \mathcal M}_j^{\bbba_s}$ on input $j$ within the first $s$ steps, 
and if ${ \mathcal M}_j^{\bbba_s}(j)\uparrow ^s$, then let
$a(j,s)$ be 0. Moreover, let $W_{i,s} \subseteq W_i$ 
be the set  of  the positive integers enumerated by $\bar { \mathcal N}_i$ for the input $s$ within the first $s$ steps and let
\[I_{s}= \{i\leq s\mid \bbba_s \cap  W_{i,s} =\emptyset \,\,\&\,\, (\exists x \in W_{i,s})\phi(i,s,x) \}\]
where $\phi(i,s,x) =_{\rm df} 2i<x \,\,\&\,\, (\forall j\leq i) (a(j,s)< x)$. If $I_{s}\not=\emptyset$, then let $i_s=\min I_{s}$ be the {\em active index} for extending $\bbba_s$ by
 $x_{i_s}=\min \{x \in W_{i_s,s} \mid\phi( i_s,s,x) \}$. Finally let
\begin{displaymath} \begin{array}{lcl}\label{DefA}
\bbba_{s+1} &=_{\rm df}& 
\left\{\begin{array}{l@{\quad}l}\bbba_{s} 
 &\mbox{if $I_{s}=\emptyset$},\\ 
 \bbba_{s}\cup \{x_{i_s}\} &\mbox{otherwise.}\\ \end{array} \right.
\end{array}\end{displaymath}
 \begin{lem}\label{ErsteEigenschaft}
 $\bbba $ is effectively enumerable by an additive machine in ${\sf M}_{\rm add}^{1,=}$.
\end{lem}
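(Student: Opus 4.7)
The plan is to exhibit an additive machine $\mathcal{E}\in {\sf M}_{\rm add}^{1,=}$ which, when run repeatedly, outputs exactly the elements of $\bbba$. Since the definition of $\bbba=\bigcup_{s\geq 1}\bbba_s$ is laid out stage by stage from entirely explicit data (the oracle machines $\mathcal{M}_j^{\bbba_s}$, the enumerators $\bar{\mathcal N}_i$, and the predicate $\phi$), the proof reduces to verifying that the stage-$s$ transition $\bbba_s\mapsto \bbba_{s+1}$ is effective and implementable with addition, subtraction, the constants $0$ and $1$, and equality tests only.

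In detail, $\mathcal{E}$ would maintain $\bbba_s$ as a finite list of integers in its real registers (starting with the empty list) and, at stage $s$, carry out the following substeps: (i) for each $i\leq s$, simulate $\bar{\mathcal N}_i$ on input $s$ for $s$ steps in order to collect the finite set $W_{i,s}$; (ii) for each $j\leq s$, simulate $\mathcal{M}_j^{\bbba_s}(j)$ for $s$ steps, answering each oracle query ``$\vec x\in \bbba_s$?'' by scanning the stored list of $\bbba_s$ using componentwise equality tests, and tracking the largest integer appearing in any query in order to extract $a(j,s)$; (iii) evaluate the conjuncts defining $I_s$ by finitely many integer comparisons and, if $I_s\neq\emptyset$, identify $i_s=\min I_s$ together with the corresponding minimal $x_{i_s}$, append it to the stored list, and output it.

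The main obstacle is making sure that every one of these operations genuinely lies within ${\sf M}_{\rm add}^{1,=}$. All data that arise are integers, so each order comparison reduces to equality tests by dovetailing against the built-in enumeration of $\bbbn_+$, as already noted earlier in the text. The simulation of $\mathcal{M}_j^{\bbba_s}\in {\sf M}_{\rm add}^{1}(\bbba_s)$ on an integer input for a bounded number of steps poses no real-valued difficulty either, since only finitely many queries are issued during the simulation and the oracle is a finite set of integers that $\mathcal{E}$ has already written down; the order tests and constant $1$ appearing in such a $\mathcal{M}_j^{\bbba_s}$ can, under simulation on integer data, be handled exactly as for $\mathcal{E}$ itself.

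Outputting $x_{i_s}$ whenever a stage contributes a new element, and repeating the most recent output at stages with $I_s=\emptyset$, then yields a total ${\sf M}_{\rm add}^{1,=}$-computable function whose range is precisely $\bbba$, so $\bbba$ is effectively enumerated by a machine in ${\sf M}_{\rm add}^{1,=}$. The requirement that $\bbba$ be non-empty in order to have a surjection in the strict sense is not a concern, since the forthcoming lemmas establishing properties (2) and (3) make non-emptiness immediate.
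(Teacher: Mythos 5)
Your proof is correct and takes essentially the same approach as the paper: simulate the stage-by-stage construction, noting that all data are integers so order tests and oracle queries against the finite approximation $\bbba_s$ reduce to equality tests available in ${\sf M}_{\rm add}^{1,=}$. You are somewhat more explicit than the paper (which frames the argument as if by induction on $s$) in describing a single enumerating machine maintaining the list $\bbba_s$ in its registers, and you also flag the non-emptiness requirement that the paper leaves implicit, but the substance is the same.
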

\begin{proof}
For $s\geq 1$,  let us assume that  $\bbba_s$ is ${\sf M}_{\rm add}^{1,=}$-enumerable  and let us  consider 
\begin{displaymath}\begin{array}{lll} 
W_{1,s}&=&\{n_{1,1},n_{1,2},\ldots, n_{1,t_1}\},\\
W_{2,s}&=&\{n_{2,1},n_{2,2},\ldots, n_{2,t_2}\},\\
 &\ldots,\\
W_{s,s}&=&\{n_{s,1},n_{s,2},\ldots, n_{s,t_s}\} \\
\end{array}\end{displaymath}
where $t_1,\ldots,t_s<s$.
 Then, for any $i=1,\ldots,s$ and $k=1,\ldots,t_i$, the inequalities 
$\phi(i,s,n_{i,k})$
 can be checked by an additive machine in ${\sf M}_{\rm add}^{1,=}$ since each ${ \mathcal M}_j^{\bbba_s}$ computes only integers on input $j$ and, for all  integers $x$, any query $x\in \bbba_s${\em ?} and   any test $x\geq 0${\em ?} are decidable by a machine in ${\sf M}_{\rm add}^{1,=}$. 
If we have the inequalities $\phi(i',s,n_{i',k})$ for some $i' \leq s$ with  $  \bbba_s  \cap W_{i',s}=\emptyset$,
 then $I_s\not=\emptyset$ and we can fix $i_s$ and enumerate the next $x_{i_s}\in \bbba_{s+1}\setminus \bbba_s$. Note that, in the following steps, the index ${i_s}$ will 
 not be considered further since we have $\bbba_{s+1}\cap  W_{i_s,s} =\{x_{i_s}\}$. 
\end{proof}
 \begin{lem}\label{ZweiteEigenschaft}
 $\bbbn\setminus \bbba$ is infinite. 
\end{lem}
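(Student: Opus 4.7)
The plan is to exploit the condition $2i < x$ built into the predicate $\phi(i,s,x)$, which ensures that any element added to $\bbba$ on behalf of an index $i$ is strictly greater than $2i$, together with the observation that each index $i$ can trigger an enumeration into $\bbba$ at most once. Taken together, these two facts force $\bbba$ to be very sparse in initial segments of $\bbbn$, so in particular $\bbbn\setminus\bbba$ must be infinite.

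First I would argue that every index $i\geq 1$ is active in at most one stage. Suppose $i=i_s$ for some $s$, so $x_{i_s}\in W_{i_s,s}$ is added to $\bbba_{s+1}$. For every later stage $t>s$ we have $x_{i_s}\in\bbba_t$, and by the standard (monotonic) reading of the snapshots we have $W_{i_s,s}\subseteq W_{i_s,t}$, so $\bbba_t\cap W_{i_s,t}\supseteq\{x_{i_s}\}\neq\emptyset$, which disqualifies $i_s$ from membership in $I_t$. Hence once $i$ becomes active, it never becomes active again.

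Next I would use the lower bound $x_{i_s}>2i_s$ coming directly from the first conjunct of $\phi(i_s,s,x_{i_s})$. Combining this with the previous paragraph, for any fixed $k\geq 1$ every element of $\bbba$ lying in the interval $[1,2k]$ arises from the activation of some index $i$ with $2i<x_{i_s}\leq 2k$, i.e.\ $i<k$. Since each such index contributes at most one element, we obtain
\[
|\bbba\cap\{1,2,\ldots,2k\}|\leq k-1,
\]
and therefore $|\{1,2,\ldots,2k\}\setminus\bbba|\geq 2k-(k-1)=k+1$.

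Letting $k\to\infty$, the complement $\bbbn\setminus\bbba$ contains arbitrarily many elements, so it is infinite. The only delicate point is the monotonicity of the snapshots $W_{i,s}$ used in step one; this is the implicit convention of the construction (the enumeration machine $\bar{\mathcal N}_i$ outputs its values in order and $W_{i,s}$ collects those produced by stage $s$), so no new argument is required, but it should be made explicit because the "once active, never active again" property is the main structural ingredient of the proof.
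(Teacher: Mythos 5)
Your argument is correct and coincides with the paper's proof: both observe that each index can be active at most once (so $\bbba$ is enumerated by a sequence of distinct witnesses), use the conjunct $2i<x$ of $\phi$ to bound the index of any witness $\le 2k$ by $k-1$, and conclude $|\bbba\cap\{1,\ldots,2k\}|\le k-1$. You merely spell out the "once active, never active again" step, which the paper leaves implicit here (it is stated explicitly in the proof of the lowness conditions), so no substantive difference in the route.
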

\begin{proof} By definition of $\bbba$, there is a sequence of active indices $i_{m_1}< i_{m_2}< \cdots$ such that $\bbba=\{x_{i_{m_1}},x_{i_{m_2}},\ldots \}$ and $2i_{m_r} <x_{i_{m_s}}$ for all $r\leq s$. Therefore, we have 
 $|\bbba \cap \{0,\ldots, 2 i\}|< i$ and thus $|\{0,\ldots, 2 i\}\setminus \bbba|> i$
for all $ i$.
\end{proof}

In the following we will need that the intersection of $\bbba$ and any
infinite halting set $W_i\subseteq \bbbn$ is not empty. Let us show it
in two stages by proving the conditions for lowness and simplicity for
$\bbba$.

\begin{lem}[Conditions for lowness] For all $n>0$, $(N_n)$ holds.

\vspace{0.2cm}

 {($N_n$)} If ${ \mathcal M}_n^{\bbba_t}(n)\downarrow^t$ for infinitely many $t$, then ${ \mathcal M}_n^\bbba(n)\downarrow$.
\end{lem}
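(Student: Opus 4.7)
The plan is to use the priority protection built into $\phi$: once no index $\leq n$ can still act, every future addition to $\bbba$ must exceed the current use $a(n,\cdot)$, so the portion of $\bbba$ that ${\mathcal M}_n$ can see on input $n$ eventually stabilizes.

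To set this up, first observe that each index $i\in\bbbn_+$ activates at most once, since as soon as $x_{i_s}\in\bbba_{s+1}$ the intersection $\bbba_{s'}\cap W_{i_s,s'}$ is nonempty for all $s'>s$, and so $i_s\notin I_{s'}$ ever after. Hence at most $n$ activation stages carry $i_s\leq n$, and I would pick $s_0$ past them all; then every activation stage $s\geq s_0$ satisfies $i_s>n$, so the condition $\phi(i_s,s,x_{i_s})$ forces $a(n,s)<x_{i_s}$.

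Assume now that $S=\{t:{\mathcal M}_n^{\bbba_t}(n)\downarrow^t\}$ is infinite, pick any $t\in S$ with $t\geq s_0$, and write $a=a(n,t)$. The goal is to show $\bbba\cap\{0,\ldots,a\}=\bbba_t\cap\{0,\ldots,a\}$: once that holds, every oracle query of the halting computation ${\mathcal M}_n^{\bbba_t}(n)$ (all of them $\leq a$) receives the same answer under $\bbba$, so ${\mathcal M}_n^{\bbba}(n)$ replays that computation and halts in at most $t$ steps, which is exactly $(N_n)$.

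The agreement I prove by contradiction. Suppose some $y\leq a$ lies in $\bbba\setminus\bbba_t$, and take the one added at the least stage $\sigma+1>t$. By that minimality, no element $\leq a$ is enumerated into $\bbba$ at any of the stages $t+1,\ldots,\sigma$, so $\bbba_\sigma$ and $\bbba_t$ agree on $\{0,\ldots,a\}$; hence ${\mathcal M}_n^{\bbba_\sigma}(n)$ copies the halting computation at stage $t$, giving $\sigma\in S$ with $a(n,\sigma)=a$. But $\sigma\geq s_0$ forces $i_\sigma>n$, and then $\phi(i_\sigma,\sigma,y)$ yields $a(n,\sigma)<y\leq a$, contradicting $a(n,\sigma)=a$. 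The hardest part will be precisely this last coupling: the least-stage choice of $y$ is what transports the use value $a$ forward from stage $t$ to the activation stage $\sigma$, so that the $\phi$-protection collides with itself rather than merely bounding some unrelated quantity.
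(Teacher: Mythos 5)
Your proof is correct and takes essentially the same approach as the paper: choose a stage $s_0$ after which no index $\leq n$ can still act, pick a halting stage $t\geq s_0$, and show the use $a(n,t)$ is never injured because $\phi$ forces every later addition to exceed it. The paper establishes the stabilization $a(n,t)=a(n,t+1)=\cdots$ by a direct inductive remark, whereas you phrase it as a minimal-counterexample argument tracking the least stage at which an element $\leq a(n,t)$ could enter $\bbba$; these are the same finite-injury idea.
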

\begin{proof}
 Let us assume that, for an $n=j$, there is an infinite sequence $t_1< t_2< \cdots$ such that 
\begin{equation}\label{Folge00} { \mathcal M}_j^{\bbba_{t_1}}(j)\downarrow^{t_1}, { \mathcal M}_j^{\bbba_{t_2}}(j)\downarrow^{t_2}, \ldots.\end{equation}
 Let $s$ be great enough such that, for each $t\geq s$, either
$I_t=\emptyset \mbox{ or } j \leq i_t=\min I_t$. The existence of $s$ is ensured since the active indices form a sequence
 $(i_t)_{t\geq 1}$ without repetition.
 It follows from
(\ref{Folge00}) that there is at least one $t_\nu \geq s$ such that ${ \mathcal M}_j^{\bbba_{t_\nu}}(j)\downarrow^{t_\nu}$. Consequently, for any $t=t_\nu,t_\nu+1,\ldots $ for which $I_t\not=\emptyset$ and $x_{i_t}$ is defined, we have $x_{i_t}> a(j,t_\nu) $ and $ a(j,t_\nu)=a(j,t_\nu+1)=\cdots \,\,$ since the set $\bbba\setminus \bbba_{t_\nu}$ contains, by definition, no elements appearing in a query of ${\mathcal M}_j^{\mathcal O}$ on $j$ if ${\mathcal O}={{\bbba}_{t_\nu+1}},{{\bbba}_{t_\nu+2}}, \ldots$. Therefore, 
 the computation of ${\mathcal M}_j^{\bbba}$  on $j$
proceeds in exactly the same
manner as one of 
${ \mathcal M}_j^{\bbba_{t_\nu+1}}$, ${ \mathcal M}_j^{\bbba_{t_\nu+2}}, \ldots$, receives the same answer to its oracle queries, and halts with
the same output in the same number ($\leq t_\nu$) of steps.
 \end{proof}

\begin{lem}[Conditions for simplicity] \label{ConditionsSimplicity}

For all $n>0$, $(P_n)$ holds.

\vspace{0.2cm}
{$(P_n)$} If $W_n=\bigcup_{i\geq 1} W_{n,i}$ is infinite, then $\bbba\cap W_n\not=\emptyset$.
\end{lem}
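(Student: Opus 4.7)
The plan is to argue by contradiction. Suppose $W_n$ is infinite and $\bbba \cap W_n = \emptyset$; I aim to show that at some sufficiently large stage $s$ the index $n$ itself is forced to become the active index $i_s$, yielding $x_{i_s}\in \bbba\cap W_n$ and contradicting the assumption.

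A first observation is a form of monotonicity: once an index $i$ has been chosen as some $i_s$, the element $x_i\in W_{i,s}$ lies in $\bbba_{s+1}$, and hence in $\bbba_t$ for every $t\geq s+1$, so $\bbba_t\cap W_{i,t}\neq\emptyset$ and $i$ is permanently disqualified from $I_t$. Under the contradiction hypothesis the index $n$ is never chosen, and each $j<n$ is chosen at most once. Consequently there exists a stage $s_0$ such that for every $s\geq s_0$ no $j<n$ lies in $I_s$; otherwise $i_s=\min I_s \leq j<n$ would be a fresh choice.

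Next I would establish that for every $j\leq n$ the value $a(j,s)$ stabilizes to some bound $M_j$ as $s\to\infty$. Two cases, paralleling the lowness property $(N_j)$. If ${ \mathcal M}_j^{\bbba}(j)\downarrow$, then a repetition of the argument from the previous lemma yields a stage $s_j^\star$ beyond which $\bbba_s$ agrees with $\bbba$ on every integer query of ${ \mathcal M}_j^{\bbba}(j)$, so ${ \mathcal M}_j^{\bbba_s}(j)$ halts identically and $a(j,s)$ takes the constant value $M_j$ equal to the largest such query. If ${ \mathcal M}_j^{\bbba}(j)\uparrow$, then $(N_j)$ forces ${ \mathcal M}_j^{\bbba_s}(j)\downarrow^s$ for only finitely many $s$, whence $a(j,s)=0$ for all large $s$ and $M_j=0$ suffices. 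Set $M=\max_{j\leq n}M_j$. Now I would use the infinitude of $W_n$ to choose some $x\in W_n$ with $x>\max(2n,M)$, and then pick $s\geq s_0$ large enough that $x\in W_{n,s}$ and every $a(j,s)$ with $j\leq n$ has reached its limit. Then $\phi(n,s,x)$ holds and $\bbba_s\cap W_{n,s}=\emptyset$ by hypothesis, so $n\in I_s$. Because no $j<n$ belongs to $I_s$, the active index must be $i_s=n$; the construction then puts $x_{i_s}\in \bbba\cap W_n$, the desired contradiction.

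The main obstacle is the stabilization step: since $\bbba_s$ keeps changing, one has to rule out that the behavior of ${ \mathcal M}_j^{\bbba_s}(j)$ fluctuates indefinitely. The priority setup is precisely what saves us: every newly active index $i_t$ enrolls an $x_{i_t}$ strictly larger than every $a(j,t)$ with $j\leq i_t$, so past the stage $s_0$ future enlargements of $\bbba$ lie above the witnesses of any halting computation of ${ \mathcal M}_j^{\bbba}(j)$ with $j\leq n$ and cannot alter the queries made within the first $s$ steps of that computation. This is the same mechanism that powered the lowness lemma and is exactly what is needed to guarantee that the required witness $x\in W_n$ eventually becomes usable.
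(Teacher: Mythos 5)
Your proof is correct, and it takes a mildly but genuinely different route from the one in the paper. The paper argues by taking the \emph{minimal} $i$ for which $(P_i)$ fails, extracts a single index $k\le i$ and an infinite sequence of stages $s_1<s_2<\cdots$ along which $a(k,s_l)\ge\max(W_{i,s_l})\to\infty$, and then invokes the lowness condition $(N_k)$ once, to conclude that $a(k,\cdot)$ is eventually constant, contradicting unboundedness. You instead dispense with the minimality device: you show directly that \emph{every} $a(j,s)$ with $j\le n$ stabilizes (using $(N_j)$ for the nonhalting case, and a direct ``$\bbba_s$ agrees with $\bbba$ on the finitely many integers queried'' argument for the halting case), and then exhibit an explicit large stage at which $n$ is forced to be the active index, putting an element of $W_n$ into $\bbba$. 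The two contradictions are dual — the paper deduces that some $a(k,\cdot)$ cannot be unbounded, you deduce that $n$ must eventually be enrolled — but your decomposition is cleaner: it avoids the induction on a minimal counterexample, and the stabilization of $a(j,s)$ is isolated as a clear intermediate claim. One small imprecision in your closing paragraph: the stabilization in the case $\mathcal{M}_j^{\bbba}(j)\downarrow$ does not itself rely on the priority mechanism (the mere fact that $\bbba_s\nearrow\bbba$ and the halting computation queries only finitely many integers already forces $a(j,s)$ to be eventually constant); the priority structure is what underwrites the lemma $(N_j)$ that you correctly invoke in the other case, where $\mathcal{M}_j^{\bbba}(j)\uparrow$. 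This does not affect the correctness of the argument, only the attribution of which piece does the work.
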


\begin{proof}
 Let us assume that   $(P_n)$ does not hold for an   $n>0$. Then, let    $i$ be the smallest index such that the
 property $(P_i) $ fails. Consequently, $W_i$ is infinite and  $\bbba \cap W_i =\emptyset $.  Moreover, for any $i' < i$, let $r_{i'}$ be a positive integer satisfying $\bbba\cap W_{i',r_{i'}}  \not=\emptyset$ if $\bbba\cap W_{i'}\not =\emptyset$ --
regardless of whether
$W_{i'}=\bigcup_{s\geq 1} W_{i',s} $ is a finite or an infinite set. Moreover, let $r_i$ be a positive integer with $W_{i,r_{i}}\setminus \{0,\ldots, 2i\} \not=\emptyset$.
Then, by definition of $\bbba$ and because of $\bbba\cap W_{i} =\emptyset$, there is some $s_0\geq \max \{r_{i'}\mid i'\leq i\}$ such that for any $s\geq s_0$ and any $x\in W_{i,s}$ with $x> 2i$, there is some $j_s\leq i$ such that $a(j_s,s)\geq x$.
Consequently, there is an element $k$ in the subset $\bigcup_{s\geq s_0}\{j_s\}$ of the finite set $\{ 1,\ldots, i\}$ such that $k=j_s$ holds for infinitely many $s\geq s_0$ and 
\begin{equation}
a( k,s_1)\geq \max (W_{i,s_1}),\quad a( k,s_2)\geq \max (W_{i,s_2}), \quad \ldots \label{folge1}\end{equation} hold
for an infinite sequence $s_1, s_2,\ldots$ with increasing maxima $\max (W_{i,s_1}) < \max (W_{i,s_2}) < \cdots$.
The maxima of $W_{i,s_1},W_{i,s_2}, \ldots $ in (\ref{folge1}) 
are greater than 0 (because of $s_1,s_2,\ldots \geq r_i$), and therefore it follows from the definition of $ a(j,s)$ that the $ k$-th oracle machine on input $ k$ halts for an infinite number of subsets of $\bbba$. The reason is that we have ${\mathcal M}_{ k}^{{\bbba}_{s_1}}({ k})\downarrow ^{s_1}$, ${\mathcal M}_{ k}^{{\bbba}_{s_2}}({ k})\downarrow ^{s_2}, \, \ldots$. By $(N_{ k})$ this means that 
${\mathcal M}_{ k}^{\bbba}({ k})\downarrow $ and, consequently, ${\mathcal M}_{ k}^{{\bbba}}( k)\downarrow ^t$ for some $t$.
This implies that there is a
 $t'$ such that any query of ${\mathcal M}_{ k}^{{\bbba}}$ on $ k$ does not refer to $\bbba\setminus \bbba_{t'}$ and thus we have  ${ \mathcal M}_k^{\bbba_{t'}}( k)\downarrow^t$. Moreover, 
${ \mathcal M}_{ k}^{\bbba_{t'}}, \ldots,{ \mathcal M}_{ k}^{\bbba_{\rm max\{t',t\}} } $, and ${ \mathcal M}_{ k}^{\bbba_{\rm max\{t',t\}+1} }, \ldots$ perform the same computation on input $k$.
In this way we get 
${ \mathcal M}_{ k}^{\bbba_{\rm max\{t',t\}} } ( k)\downarrow^{\rm max\{t',t\}} $  and   $\{a( k,s_1), a( k,s_2), \ldots\}$ is  therefore finite. Thus, there is an $ l$ such that $a( k,s_{ l}) < \max (W_{i,s_{ l}})$ since $\max (W_{i,s_1}) < \max (W_{i,s_2}) < \cdots$ is properly increasing. 
But,  this contradicts (\ref{folge1}) and, thus, our assumption is wrong and  we must  have $ \bbba\cap W_{i} \not=\emptyset$. 
 \end{proof}

Thus, $\bbba$ has the usual properties of a low and a simple set.
Since the intersection of $\bbba$ and any infinite halting set
$W_i\subseteq \bbbn$ is not empty and the intersection of the
complement $\bbba^{ \rm c}$ and $\bbbn$ is an infinite set (with
$\bbba^{ \rm c}\cap \bbba=\emptyset$), $\bbba^{ \rm c}\cap \bbbn$ is
not ${\sf M}_ {\rm add}^{1,=}$-semi-decidable.
\begin{lem} $\bbba^{ \rm c}\cap \bbbn$ and $\bbba^{ \rm c} $ are not semi-decidable by a machine in ${\sf M}_ {\rm add}^{1,=}$.
\end{lem}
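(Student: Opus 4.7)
The plan is a routine two-step argument: reduce the claim about $\bbba^{\rm c}$ in $\bbbr^\infty$ to the corresponding claim about $\bbba^{\rm c}\cap \bbbn$, and then derive the latter from Lemma \ref{ConditionsSimplicity} (the ``simplicity'' conditions) together with the earlier lemma that the intersection of any ${\sf M}_{\rm add}^{1,=}$-halting set with $\bbbn$ is effectively enumerable.

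First I would dispatch the second half: if $\bbba^{\rm c}$ were semi-decidable by some machine ${\mathcal M}\in {\sf M}_{\rm add}^{1,=}$, then $\bbba^{\rm c}$ would be the halting set of ${\mathcal M}$, so by the earlier lemma the restriction $\bbba^{\rm c}\cap \bbbn$ would again be the halting set of some machine in ${\sf M}_{\rm add}^{1,=}$, i.e.\ also semi-decidable. Hence it suffices to rule out ${\sf M}_{\rm add}^{1,=}$-semi-decidability of $\bbba^{\rm c}\cap \bbbn$.

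Next, suppose for contradiction that $\bbba^{\rm c}\cap \bbbn$ were ${\sf M}_{\rm add}^{1,=}$-semi-decidable. Since $\bbba\subseteq \bbbn_+$ (by construction every element $x_{i_s}$ of $\bbba$ satisfies $x_{i_s}>2i_s\geq 2$), the set $\bbba^{\rm c}\cap \bbbn_+$ differs from $\bbba^{\rm c}\cap \bbbn$ only at the single point $0$, so it too would be ${\sf M}_{\rm add}^{1,=}$-semi-decidable and, by the enumeration list ${\mathcal N}_1,{\mathcal N}_2,\ldots$ of all machines in ${\sf M}_{\rm add}^{1,=}$, it would coincide with some $W_j$. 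By Lemma \ref{ZweiteEigenschaft}, $\bbbn\setminus \bbba$ is infinite, so $W_j=\bbba^{\rm c}\cap \bbbn_+$ is infinite. But then property $(P_j)$ from Lemma \ref{ConditionsSimplicity} gives $\bbba\cap W_j\neq \emptyset$, contradicting $W_j\subseteq \bbba^{\rm c}$.

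The only step that requires any care is the bookkeeping between $\bbbn$ and $\bbbn_+$ (to line up the assumed semi-decidable set with a member of the enumeration $W_1,W_2,\ldots$ of halting sets restricted to $\bbbn_+$), and noting that $\bbba$ contains no non-positive integer; there is no genuine obstacle, since all the real work was already done in establishing the simplicity conditions $(P_n)$.
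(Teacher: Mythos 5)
Your proof is correct and takes essentially the same approach as the paper: the paper itself gives this argument informally in the paragraph immediately preceding the lemma (simplicity of $\bbba$ via $(P_n)$ plus infinitude of $\bbbn\setminus\bbba$), and your write-up just adds the routine bookkeeping between $\bbbn$ and $\bbbn_+$ and the invocation of the earlier lemma to pass from $\bbba^{\rm c}$ to $\bbba^{\rm c}\cap\bbbn$.
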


\begin{cor} $\bbba$ is not decidable by a machine in ${\sf M}_ {\rm add}^{1,=}$.
\end{cor}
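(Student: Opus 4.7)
The plan is to derive the corollary as an immediate consequence of the preceding lemma together with the standard equivalence between decidability and two-sided semi-decidability, which was recorded in Section 2.

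First I would recall from the discussion in Section 2 that a set $S\subseteq \bbbr^\infty$ is ${\sf M}_{\rm add}^{1,=}$-decidable if and only if both $S$ and its complement $\bbbr^\infty\setminus S$ are ${\sf M}_{\rm add}^{1,=}$-semi-decidable. The ``only if'' direction is the one I need, and it is syntactic: given a machine in ${\sf M}_{\rm add}^{1,=}$ computing the characteristic function of $S$, one obtains machines semi-deciding $S$ and $S^{\rm c}$ by, respectively, halting on outputs $1$ and halting on outputs $0$ (and looping otherwise).

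Next I would argue by contraposition. Suppose for contradiction that $\bbba$ were decidable by some machine $\mathcal M\in{\sf M}_{\rm add}^{1,=}$. By the observation above, $\bbba^{\rm c}=\bbbr^\infty\setminus\bbba$ would be ${\sf M}_{\rm add}^{1,=}$-semi-decidable. Restricting attention to integer inputs, the intersection $\bbba^{\rm c}\cap\bbbn$ would then also be ${\sf M}_{\rm add}^{1,=}$-semi-decidable (simply compose the semi-decision procedure with a preliminary integrality test, which is available in ${\sf M}_{\rm add}^{1,=}$ since $\bbbn$ is ${\sf M}_{\rm add}^{1,=}$-semi-decidable as noted in Section 2). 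This directly contradicts the preceding lemma, which asserts that neither $\bbba^{\rm c}\cap\bbbn$ nor $\bbba^{\rm c}$ is ${\sf M}_{\rm add}^{1,=}$-semi-decidable.

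There is no real obstacle here: all the work has been done in the Conditions for Simplicity lemma and Lemma \ref{ZweiteEigenschaft}, which together yield the non-semi-decidability of $\bbba^{\rm c}$. The corollary is just the standard step from ``complement is not semi-decidable'' to ``set is not decidable.''
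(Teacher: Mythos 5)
Your proof is correct and matches the paper's intent exactly: the corollary is stated without proof, as the immediate consequence of the preceding lemma (that $\bbba^{\rm c}$ and $\bbba^{\rm c}\cap\bbbn$ are not ${\sf M}_{\rm add}^{1,=}$-semi-decidable) together with the standard fact that a decidable set has a semi-decidable complement. The intermediate restriction to $\bbba^{\rm c}\cap\bbbn$ is harmless but unnecessary, since the lemma already gives non-semi-decidability of $\bbba^{\rm c}$ itself.
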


Although, in analogy with the classical setting, we could syntactically prove the following lemma, we will give a semantic proof that shows in detail the possibility to transfer the construction idea to machines over an arbitrary structure of finite signature if there is an effectively enumerable and decidable set $\bar \bbbn$ of elements of the structure like $\bbbn$. 

\begin{lem}\label{DiedritteEig} $\bbbk^\bbba\preceq \bbbk^\emptyset$. 
\end{lem}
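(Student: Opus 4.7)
The plan is to transfer the classical Limit-Lemma argument---that the jump of a low r.e. set is $\leq_T \emptyset'$---into our BSS setting. I define a total function $f:\bbbn_+\times\bbbn_+\to\{0,1\}$ by $f(k,s)=1$ if $\mathcal{M}_k^{\bbba_s}(k)\downarrow^s$ and $f(k,s)=0$ otherwise, and argue that $f$ is computable by a machine in ${\sf M}_{\rm add}^{1,=}$ without oracle. On integer input $j$ every additive oracle machine performs only integer arithmetic, and $\bbba_s$ is a finite integer set that can be constructed stagewise by simulating $\bar{\mathcal N}_1,\ldots,\bar{\mathcal N}_s$ for $s$ steps to obtain $W_{i,s}$ and then recursively evaluating the values $a(j,t)$ and the predicates $\phi(i,t,x)$ for $t\leq s$; the subsequent simulation of $\mathcal{M}_k^{\bbba_s}(k)$ for up to $s$ steps then requires only integer arithmetic, equality tests and the trivially decidable membership queries $x\in\bbba_s$ for integers $x$.

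Next I verify $\lim_{s\to\infty}f(k,s)=\chi_{\bbbk^\bbba}(k)$ for every $k\in\bbbn_+$. If $k\in\bbbk^\bbba$, the terminating computation of $\mathcal{M}_k^\bbba(k)$ uses only finitely many oracle queries, all about integers in some interval $[0,a]$; since $\bbba=\bigcup_s\bbba_s$ is an increasing union of finite sets, $\bbba_s\cap[0,a]=\bbba\cap[0,a]$ from some stage $s_0$ on, so for all sufficiently large $s$ the computation $\mathcal{M}_k^{\bbba_s}(k)$ proceeds identically to $\mathcal{M}_k^\bbba(k)$ and halts within $s$ steps, whence $f(k,s)=1$. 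If $k\notin\bbbk^\bbba$, then by the contrapositive of condition $(N_k)$ proved in the preceding lowness lemma only finitely many $s$ satisfy $\mathcal{M}_k^{\bbba_s}(k)\downarrow^s$, so $f(k,s)=0$ eventually. Thus the pointwise limit exists and agrees with $\chi_{\bbbk^\bbba}$.

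To finish I invoke the classical Limit Lemma: with the Halting Problem for Turing machines as oracle one can, for each $k$, search for the least $s$ at which the co-r.e. predicate ``$\forall s'>s\colon f(k,s')=f(k,s+1)$'' holds and then output $f(k,s+1)$; since the limit exists this search always terminates and the output equals $\chi_{\bbbk^\bbba}(k)$. Because the restriction of $f$ to integer arguments is Turing-computable and $\bbbk^\bbba\subseteq\bbbn_+$, Corollary \ref{ReduktAequv} together with Proposition \ref{Turingred} translates this classical reduction into the required BSS reduction $\bbbk^\bbba\preceq\bbbk^\emptyset$. The main obstacle is the careful unpacking in the first paragraph that the stagewise construction of $\bbba_s$---and hence the function $f$---can genuinely be carried out inside ${\sf M}_{\rm add}^{1,=}$; once this is done, everything else is a clean transfer of the classical Limit-Lemma proof through the already established bridge between the classical and the additive-BSS degrees on integer sets.
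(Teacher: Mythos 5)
Your proposal is correct, but it takes a genuinely different route from the paper's. You invoke the classical Limit Lemma: you define the computable approximation $f(k,s)$ (the stage-$s$ guess for membership in $\bbbk^\bbba$), verify that it converges to $\chi_{\bbbk^\bbba}(k)$, conclude classically that $\bbbk^\bbba \leq_T \emptyset'$, and then push this reduction through the bridge established by Proposition \ref{Turingred} and Corollary \ref{ReduktAequv} to obtain $\bbbk^\bbba \preceq \bbbk^\emptyset$. The paper instead gives a direct semantic argument entirely inside the BSS framework: it exhibits two explicit oracle machines in ${\sf M}_{\rm add}^1(\bbbk^\emptyset)$, one semi-deciding $\bbbk^\bbba$ by simulating ${\mathcal M}_k^\bbba(k)$ and resolving each oracle query $x\in\bbba$\emph{?} via a query $K_{{\mathcal L}_x}\in\bbbk^\emptyset$\emph{?} for an auxiliary machine ${\mathcal L}_x$ that enumerates $\bbba$ until finding $x$, and a second semi-deciding $\bbbn\setminus\bbbk^\bbba$ by searching for an $i$ for which an auxiliary machine ${\mathcal L}_i^{(k)}$ (which simulates ${\mathcal M}_k^{\bbba_j}(k)$ for successively larger $j>i$) never halts. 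The paper explicitly remarks that one \emph{could} prove the lemma syntactically in analogy with the classical setting --- which is exactly your route --- but deliberately chooses the semantic version to make visible that the construction transfers to BSS RAM's over any structure of finite signature containing a decidable, effectively enumerable copy of $\bbbn$. Your version is shorter and leans on the classical theorem as a black box; the paper's is longer but self-contained and illustrative of the transfer principle that is one of the paper's stated goals. One point worth tightening: the stabilization $\bbba_s\cap[0,a]=\bbba\cap[0,a]$ for large $s$ follows because $\bbba\cap[0,a]$ is a \emph{finite set of integers}, each entering at some finite stage; ``increasing union of finite sets'' alone does not give this. This is in effect the converse of $(N_k)$ (the use principle), which the paper also relies on in part (b) of its proof by citing the observation made inside the proof of Lemma \ref{ConditionsSimplicity}.
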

\begin{proof}
 Since $\bbbn$ is ${\sf M}_{\rm add}^{1}$-decidable, it is enough to show that $\bbbk^\bbba$ and $\bbbn \setminus \bbbk^\bbba$ are semi-decidable by a machine in ${\sf M}_{\rm add}^{1}( \bbbk^\emptyset)$. To simplify matters, we will give corresponding algorithms for inputs $k\in \bbbr$ in a short form where 
at the beginning
the machines enumerate all positive integers and compare any enumerated number with the input $k$. If the input $k$ is not a positive integer, then the enumeration of integers will not be stopped. 

\vspace{0.2cm}

\noindent a) $\bbbk^\bbba$ is semi-decidable by:

\begin{itemize}
\item{Input:} $k\in \bbbr$.
\item Put $n:=0$. 
\item Repeat $n:=n+1$ until $k=n$. 
\item For $k\in \bbbn _+$, simulate ${ \mathcal M}_k^{\bbba} \in {\sf M}_{\rm add}^{1}({\bbba})$ on $k$ where the simulation of the queries whether $x\in \bbba$ is done as follows: 

Compute $ K_{\!{\mathcal L}_x} $ for the following ${ \mathcal L}_x\in {\sf M}_{\rm add}^{1}$ and ask whether $ K_{\!{\mathcal L}_x}\in \bbbk^\emptyset$.

\begin{description}

\item[${ \mathcal L}_x$]

\,\, Enumerate the elements of $\bbba$ until an enumerated element is equal to $x$.
\end{description}
 \end{itemize}

\vspace{0.2cm}

\noindent b) By condition $(N_k)$ we know that ${ \mathcal M}_k^{{\bbba}}(k)\downarrow $ holds if for every $i$ there is a $j> i$ such that ${ \mathcal M}_k^{{\bbba}_j}(k)\downarrow ^j $. Moreover, as  shown in the proof of Lemma 
\ref{ConditionsSimplicity}, ${ \mathcal M}_k^{{\bbba}}(k)\downarrow $ holds only if  there is an  $s$ such that  ${ \mathcal M}_k^{{\bbba}_j}(k)\downarrow ^j $ holds for all   $j> s$. 
Consequently, ${ \mathcal M}_k^{{\bbba}}(k)\uparrow $ holds if and only if  there is an $i$ such that the following machine ${ \mathcal L}_i^{(k)}\in {\sf M}_{\rm add}^{1}$ does not halt and thus $ K_{\!{\mathcal L}_i^{(k)}}\not \in \bbbk^\emptyset$.

\begin{description}
\item[${ \mathcal L}_i^{(k)}$]

\hspace{0.1cm} Input: $x\in \bbbr$.

\quad Put $j:=i$. 

\quad Repeat (1), (2), and (3) until an output instruction is simulated 

\quad\quad\quad \quad \quad \quad (that means until ${ \mathcal M}_k^{{\bbba}_j}(k)\downarrow ^j $ for some $j>i$).

\quad (1) Compute the elements $x_{i_1},\ldots, x_{i_{j}}$ of the set $\bbba$. 

\quad (2) Increment $j$ by 1. 

\quad (3) Simulate $j$ steps of the execution of ${ \mathcal M}_k^{{\bbba}_j} $ on $k$. 
 \end{description}
 ${ \mathcal L}_i^{(k)}\in {\sf M}_{\rm add}^{1}$ halts if and only if ${ \mathcal M}_k^{{\bbba}_j}(k)\downarrow ^j $ for some $j> i$. This means that 
there is an $i$ for that ${ \mathcal L}_i^{(k)}$ does not halt and thus $ K_{\!{\mathcal L}_i^{(k)}}\not \in \bbbk^\emptyset$ holds if and only if 
${ \mathcal M}_k^{{\bbba}}(k)\uparrow $ and, thus, $k\not \in \bbbk^\bbba$  are valid.

\vspace{0.1cm}

\noindent Consequently, $\bbbn\setminus \bbbk^\bbba$ is semi-decidable by:

\begin{itemize}
\item {Input:} $k\in \bbbr $.

\item Put $n:=0$. 

\item Repeat $n:=n+1$ until $k=n$. 

\item For $k\in \bbbn_{ +}$,  enumerate the positive integers $i$, compute, for every $i$, $ K_{\!{\mathcal L}_i^{(k)}} $ and ask whether $ K_{\!{\mathcal L}_i^{(k)}}\in \bbbk^\emptyset$. Halt if the answer is no.\qedhere
\end{itemize} 
\end{proof}

\noindent Consequently, we get the following.
\begin{prop}\label{AnichtaufK} $\bbba\mbox{$\strpreceq$} \bbbk^\emptyset$. \end{prop}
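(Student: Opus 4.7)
The plan is to unpack the notation $\bbba \strpreceq \bbbk^\emptyset$ into its two components, $\bbba \preceq \bbbk^\emptyset$ and $\bbbk^\emptyset \not\preceq \bbba$, and establish each of them using results already assembled above. The first half is routine, the second is the interesting part and is where the lowness of $\bbba$ pays off.

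For $\bbba \preceq \bbbk^\emptyset$, I would just observe that by Lemma \ref{ErsteEigenschaft} the set $\bbba\subseteq \bbbn$ is ${\sf M}_{\rm add}^{1,=}$-semi-decidable, hence is (or is trivially reducible to) a halting set of a machine in ${\sf M}_{\rm add}^{1,=}$, and therefore $\bbba \preceq \bbbh_{\rm add}^{1,=}$. Since $\bbba \subseteq \bbbn$, Corollary \ref{ReduktAequv} immediately converts this into $\bbba \preceq \bbbk^\emptyset$.

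For $\bbbk^\emptyset \not\preceq \bbba$, I would argue by contradiction using transitivity of $\preceq$. Suppose, toward contradiction, that $\bbbk^\emptyset \preceq \bbba$. Combined with Lemma \ref{DiedritteEig} ($\bbbk^\bbba \preceq \bbbk^\emptyset$), transitivity of Turing reducibility by machines in ${\sf M}_{\rm add}^1$ yields $\bbbk^\bbba \preceq \bbba$: given a decision procedure for $\bbbk^\bbba$ in ${\sf M}_{\rm add}^1(\bbbk^\emptyset)$ and a decision procedure for $\bbbk^\emptyset$ in ${\sf M}_{\rm add}^1(\bbba)$, each oracle query to $\bbbk^\emptyset$ in the former can be replaced by an invocation of the latter, producing an ${\sf M}_{\rm add}^1(\bbba)$-procedure for $\bbbk^\bbba$. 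But this contradicts Lemma \ref{unentOrakelm} applied with ${ \mathcal O}=\bbba$, which states $\bbbk^\bbba \not\preceq \bbba$. Hence $\bbbk^\emptyset \not\preceq \bbba$, and together with the first half we obtain $\bbba \strpreceq \bbbk^\emptyset$.

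The only step requiring any care is the transitivity argument; it is essentially routine, but one should note that composing the two reductions stays inside ${\sf M}_{\rm add}^1$ with $\bbba$ as oracle (no new machine constants or test types are introduced), so that the conclusion $\bbbk^\bbba \preceq \bbba$ really is in the form demanded by Lemma \ref{unentOrakelm}. Beyond that, the proposition is an immediate consequence of Lemmas \ref{DiedritteEig} and \ref{unentOrakelm} and Corollary \ref{ReduktAequv}.
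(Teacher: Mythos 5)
Your proposal is correct and follows the paper's overall strategy (unpack $\strpreceq$, get $\bbba \preceq \bbbk^\emptyset$ from Lemma \ref{ErsteEigenschaft} and Corollary \ref{ReduktAequv}, then derive a contradiction from $\bbbk^\emptyset \preceq \bbba$ by playing Lemma \ref{DiedritteEig} off against Lemma \ref{unentOrakelm}), but the second half is assembled slightly differently, and your version is arguably the cleaner of the two. The paper chains $\bbbk^{\bbbk^\emptyset} \preceq \bbbk^\bbba \preceq \bbbk^\emptyset$ and then applies Lemma \ref{unentOrakelm} with ${\mathcal O}=\bbbk^\emptyset$; the first link $\bbbk^{\bbbk^\emptyset} \preceq \bbbk^\bbba$ is a jump-monotonicity step (from ${\mathcal O}_1 \preceq {\mathcal O}_2$ infer $\bbbk^{{\mathcal O}_1}\preceq \bbbk^{{\mathcal O}_2}$), which the paper leaves implicit and which requires an effective translation between oracle-machine indices. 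You instead chain $\bbbk^\bbba \preceq \bbbk^\emptyset \preceq \bbba$ and apply Lemma \ref{unentOrakelm} with ${\mathcal O}=\bbba$; this needs only plain transitivity of $\preceq$ for oracle machines in ${\sf M}_{\rm add}^1$ (replace each oracle query by a call to the deciding subroutine), which you correctly flag as the one step deserving care. The net effect is the same — the assumption $\bbbk^\emptyset \preceq \bbba$ collapses a relativized halting problem onto its own oracle, which Lemma \ref{unentOrakelm} forbids — but your route dispenses with the implicit s-m-n-style ingredient and is thus a bit more self-contained given what the paper has actually stated.
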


 \begin{proof} 
 $\bbba\preceq \bbbk^\emptyset$ follows from Corollary \ref{ReduktAequv} and Lemma \ref{ErsteEigenschaft}.
Let us assume ${\bbbk^\emptyset } \preceq \bbba$. Then, we would have $\bbbk^{\bbbk^\emptyset } \preceq \bbbk^\bbba \preceq \bbbk^\emptyset$ by Lemma \ref{DiedritteEig} and the halting problem $\bbbk^{\bbbk^\emptyset }$ would be decidable by a machine in ${\sf M}_{\rm add}^1(\bbbk^\emptyset)$ in contradiction to Lemma \ref{unentOrakelm}.
\end{proof}

Let us remark that 
the algorithms given in the proof of Lemma \ref{DiedritteEig} are sufficient to show $\bbbk^\bbba\strpreceq\!\!_{\rm add}^{1,=}\,\,\, \bbbh_{\rm add}^{1,=} $ ($\bbbk^\bbba\preceq_ {\rm add}^{1,=}\, \bbbh_{\rm add}^{1,=} $ holds since $\bbbn \preceq_{\rm add}^{1,=}\, \bbbh_{\rm add}^{1,=} $   and
$\bbbh_{\rm add}^{1,=}  \not \preceq_ {\rm add}^{1,=}\,  \bbbk^\bbba$ follows from  
 $\bbbn \not \preceq_{\rm add}^{1,=}\, \bbbk^\bbba $), but they are not sufficient to show $\bbbk^\bbba\preceq_{\rm add}^{1,=} \bbbk^\emptyset$  (since  $\bbbn \not \preceq_{\rm add}^{1,=}\, \bbbk^\emptyset $).

Moreover, the construction of $\bbba$  and the proofs show clearly that a similar construction  can be done 
 for the BSS RAM's over an arbitrary algebraic structure   of   finite signature if there is an infinite  set that is  enumerable and decidable over  this structure  like the set $\bbbn$. The 	restriction to a  finite number of operations, constants, and relations  allows  the  enumeration   and the simulation of  all BSS RAM's over such a structure by a universal machine and the  definition of  undecidable semi-decidable problems below the corresponding halting problem in the same way as above.  As the following example shows, the transfer is also  possible, even if $\bbbk^{ \emptyset} = \bbbh_{\rm spec}({\sf M}_{\rm add}^{1}(\emptyset))$   and, consequently,  the   Halting Problem    for Turing machines are decidable over  the considered  structure.  Let us consider the BSS RAM's  over  $\bbbr^{1,r_1}_{\rm add}=_{\rm df}(\bbbr;0,1,r_1;+,-;\geq)$ where the digits $\alpha_1, \alpha_2,\ldots \in  \{0,1\}$ of the real number $r_1=\sum_{j=1}^{\infty} \alpha_j 10^{-j}$ are defined by
 $\alpha_j= 1$  if $j\in  \bbbk^{ \emptyset} $ and  by  $\alpha_j=0 $ otherwise. Then, on one hand,  $\bbbk^{ \emptyset}$ is decidable by means of $r_1$ and, on the other hand,
 $\bbbr^{1,r_1}_{\rm add}$  allows to apply the priority method for constructing  new semi-decidable sets  since it is a structure of finite signature containing $\bbbn$.

\vspace{0.2cm}

 In order to extend the hierarchy (\ref{ErsteHierarchie}), we want to use that, for all inputs in $\bbbz^\infty$, problems such as $\bbbl\cap \bbbz^\infty\subseteq \bbbr^\infty$ are decidable in the classical sense. 
 More precisely, the decidability of $\bbbl\subseteq \bbbr^\infty$ for inputs in $\bbbz^\infty$ means that $\bbbl \cap \bbbz^\infty$ is ${\sf M}_{\rm add}^{1}$-decidable and, thus, the oracle $\bbbl$ has no benefit for the computation of additive machines in ${\sf M}_{\rm add}^{1}(\bbbl)$ on the input space $\bbbz^\infty$.
 This implies that the next lemma allows to transfer and generalize several results from the recursion theory and to show the incomparability of certain problems. 
\begin{lem} Let $ \mathcal Q$ be ${\sf M}_{\rm add}^{1}$-decidable for all inputs in $\bbbz^\infty$ and $P\subseteq \bbbn$. Then, we have $P \not\preceq \mathcal Q$ if $(\bbbr^\infty\setminus P)\cap \bbbn$ is not semi-decidable by a machine in ${\sf M}_{\rm add}^{1,=}$. 
\end{lem}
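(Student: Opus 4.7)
The plan is to argue by contrapositive: assuming $P \preceq \mathcal{Q}$, I construct a machine $\mathcal{N} \in {\sf M}_{\rm add}^{1,=}$ that semi-decides $(\bbbr^\infty\setminus P)\cap \bbbn$. Let $\mathcal{M}\in {\sf M}_{\rm add}^{1}(\mathcal{Q})$ be a machine deciding $P$, and let $\mathcal{D}\in {\sf M}_{\rm add}^{1}$ be a machine deciding $\mathcal{Q}\cap \bbbz^\infty$, which exists by hypothesis.

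First, $\mathcal{N}$ verifies that its input is a single non-negative integer $n$ by enumerating $0,1,2,\ldots$ (via $Z:=Z+1$) and using equality tests to compare with the input; on any input that is not a length-one tuple with value in $\bbbn$, this initial search never halts, which is exactly what we want for a semi-decider of a subset of $\bbbn$. Once $n$ is recognized, $\mathcal{N}$ simulates $\mathcal{M}$ step by step on input $n$ and halts if and only if the simulation ends in a rejecting state.

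The key structural observation is the integrality invariant: since $n$, the constants $0,1$, and the operations $+,-$ all preserve integrality, every register content appearing during the simulated computation of $\mathcal{M}(n)$ is an integer, and every oracle query argument is an integer tuple. This lets me replace the two features that do not belong to ${\sf M}_{\rm add}^{1,=}$ by ${\sf M}_{\rm add}^{1,=}$-subroutines. An order test ``$Z_j\geq 0$?'' on an integer $Z_j$ is resolved by enumerating $0,1,-1,2,-2,\ldots$ and checking equality with $Z_j$; the first match reveals the sign and the subroutine always terminates on integer inputs. An oracle query ``$(Z_1,\ldots,Z_{I_1})\in \mathcal{Q}$?'' with integer entries is answered by simulating $\mathcal{D}$ on that tuple, again replacing $\mathcal{D}$'s order tests by the same enumeration-based subroutine, since $\mathcal{D}$ also preserves integrality on integer inputs.

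The main technical obstacle is verifying that the integrality invariant is genuinely preserved across the nested simulation (in particular, that $\mathcal{D}$ does not introduce non-integer values on integer inputs and that the results of oracle queries do not break integrality --- which is immediate, as queries return only a branching decision). Once this is checked, $\mathcal{M}$ always halts because it decides $P$, every replacement subroutine terminates, so $\mathcal{N}$ halts on every integer input and its halting set on $\bbbn$ is exactly $\bbbn\setminus P = (\bbbr^\infty\setminus P)\cap \bbbn$. No dovetailing is needed at the top level; the construction is purely sequential with bounded searches at each test, making it realizable within ${\sf M}_{\rm add}^{1,=}$, contradicting the assumption on $(\bbbr^\infty\setminus P)\cap \bbbn$.
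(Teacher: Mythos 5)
Your proof is correct and follows essentially the same route as the paper's: assume $P \preceq \mathcal{Q}$, prepend an integer-check loop, then exploit the integrality of all register contents to replace each order test by an enumeration-based equality test and each oracle query by a call to the ${\sf M}_{\rm add}^{1}$-decider for $\mathcal{Q}\cap\bbbz^\infty$, yielding an ${\sf M}_{\rm add}^{1,=}$-semi-decider for $\bbbn\setminus P$. You spell out the integrality invariant and the sign-determination subroutine more explicitly than the paper, but the underlying argument is identical.
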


\begin{proof} Let us assume that $P$ is decidable by a machine in $ {\sf M}_{\rm add}^{1}( \mathcal Q)$.
Consequently, $\bbbr\setminus P$ is semi-decidable by a machine ${ \mathcal M} \in {\sf M}_{\rm add}^{1}( \mathcal Q)$. 
Let us modify ${ \mathcal M} $ such that at the beginning
the machine ${ \mathcal M} $ enumerates all non-negative integers and compares any enumerated number with the input. 
If the input is a non-negative integer, then the further instructions of ${ \mathcal M}$ can be simulated by a machine in ${\sf M}_{\rm add}^{1,=} $ since each order test can be simulated by means of few equality tests and all queries of ${ \mathcal M}$ can be replaced by a decision procedure.
Thus $(\bbbr\setminus P)\cap \bbbn$ is semi-decidable by a machine in ${\sf M}_{\rm add}^{1,=}$.
Consequently, if $(\bbbr^\infty\setminus P)\cap \bbbn$ is not semi-decidable by a machine in ${\sf M}_{\rm add}^{1,=}$, then the assumption is wrong. 
\end{proof}

\begin{prop}\label{Uebertr} Let $ \mathcal Q$ be decidable for all inputs in $\bbbz^\infty$ and $P\subseteq \bbbn$. Then, we have $P \not\preceq \mathcal Q$ if $P$ or $(\bbbr^\infty\setminus P)\cap \bbbn$ is not semi-decidable by a machine in ${\sf M}_{\rm add}^{1,=}$.
\end{prop}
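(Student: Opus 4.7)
The plan is to reduce to the immediately preceding lemma. That lemma already yields $P\not\preceq \mathcal Q$ whenever $(\bbbr^\infty\setminus P)\cap \bbbn$ fails to be ${\sf M}_{\rm add}^{1,=}$-semi-decidable; so the only new case is the one in which $P$ itself is not ${\sf M}_{\rm add}^{1,=}$-semi-decidable. My strategy is to adapt the transformation used in that lemma, applying it now to the accepting (rather than the rejecting) branch of a decision procedure for $P$.

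Concretely, assume toward a contradiction that $P\preceq \mathcal Q$, and let ${ \mathcal M}\in {\sf M}_{\rm add}^{1}( \mathcal Q)$ decide $P$. Since $P\subseteq \bbbn$, the same $\mathcal M$ serves as a semi-decision procedure for $P$ once we discard the rejecting halts. Following the construction in the proof of the previous lemma, I would prefix $\mathcal M$ with the enumeration loop that walks through $0,1,2,\ldots$ and equality-tests each against the input, continuing the simulation of $\mathcal M$ only after a successful match. On any non-negative integer input every register contents thereafter remains an integer (because $\mathcal M$ uses only the constants $0,1$ together with addition and subtraction), so each test $Z\geq 0$ can be simulated by a small routine of equality tests, and every oracle query $\vec x\in \mathcal Q$ — now asked on an integer tuple — can be replaced by a call to the ${\sf M}_{\rm add}^{1}$-decision procedure for $\mathcal Q\cap \bbbz^\infty$ granted by the hypothesis, which in turn is itself reducible to ${\sf M}_{\rm add}^{1,=}$ on integer data. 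The resulting machine lies in ${\sf M}_{\rm add}^{1,=}$ and semi-decides $P\cap \bbbn=P$, contradicting the assumption.

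The main obstacle is conceptual rather than technical: the entire combinatorial work — prefixing with the $\bbbn$-filter and rewriting order tests and oracle queries in the equality-only, oracle-free fragment — was already carried out for the rejecting side in the preceding lemma, so I only need to observe that the transformation is symmetric in the two halting behaviours of a decision procedure. Combining this new case with the previous lemma then gives both disjuncts of the proposition.
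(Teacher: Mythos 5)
Your proposal is correct and matches the argument the paper intends: the preceding lemma handles the case where $(\bbbr^\infty\setminus P)\cap \bbbn$ is not ${\sf M}_{\rm add}^{1,=}$-semi-decidable, and the new case (where $P$ itself is not ${\sf M}_{\rm add}^{1,=}$-semi-decidable) follows by applying the identical transformation — prefix with the $\bbbn$-filter, rewrite order tests via equality tests on integers, replace each oracle query by the ${\sf M}_{\rm add}^{1}$-decision procedure for $\mathcal Q\cap\bbbz^\infty$ (which on integer data reduces to ${\sf M}_{\rm add}^{1,=}$) — now to the accepting branch of the deciding machine. The paper leaves this symmetric step implicit; your write-up simply makes it explicit.
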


Therefore, we can extend the hierarchy (\ref{ErsteHierarchie}) since we have $P \not\preceq \bbbl$ for any $P\subseteq \bbbn$ satisfying the conditions in accordance with Proposition \ref{Uebertr} 
 and, moreover,   we have 
 $ \bbbl\not\preceq P$ for any $P\subseteq \bbbn$.
 The latter holds since
$ \bbbl_2\not\preceq P$ 
can be shown in analogy with 
 $ \bbbl_2\not\preceq \bbbq$ (cf. \cite{4}) and, furthermore, $ \bbbl_1\not\preceq P$ 
can be shown in a similar way  by proving  that, for  any machine ${\mathcal M}\in {\sf M}_{\rm add}^{1}(P)$  that  decides a  problem, there is a rational number $q$ such that  ${\mathcal M}$ goes through the same computation path for  both inputs $q$ and  $\pi$. (For more information about this kind of proofs   see  also      the proof of Lemma \ref{LLL9}.)

Hence, for any problem $P\subseteq \bbbn$ that is ${\sf M}_{\rm add}^{1,=}$-semi-decidable, $ \bbbl$ is incomparable with  $P$  if $(\bbbr^\infty\setminus P)\cap \bbbn$ is not semi-decidable by a machine in ${\sf M}_{\rm add}^{1,=}$. Under consideration  of  Proposition \ref{Turingred}
and Corollary \ref{ReduktAequv} we get the following.  

\begin{prop}\label{AKunverglL} $ \bbbl$ is incomparable with the Halting Problem for Turing machines and the problems $\bbbk^\emptyset$ and $\bbba$ with respect to the strong Turing reduction $\preceq$.\end{prop}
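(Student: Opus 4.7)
The plan is to show that each of the three candidate problems $P$—the Halting Problem for Turing machines as a subset of $\bbbn$, $\bbbk^\emptyset$, and $\bbba$—satisfies the hypothesis of the general incomparability criterion stated just before the proposition: $P\subseteq\bbbn$, $P$ is ${\sf M}_{\rm add}^{1,=}$-semi-decidable, and $(\bbbr^\infty\setminus P)\cap\bbbn$ is not. Once verified, Proposition \ref{Uebertr} supplies $P\not\preceq\bbbl$, and the path-preservation sketch indicated in the paragraph after Proposition \ref{Uebertr}, which shows $\bbbl_1\not\preceq P$ and hence (via $\bbbl\cap\bbbr=\bbbl_1=\bbbq$) $\bbbl\not\preceq P$, supplies the opposite non-reduction.

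For each candidate the two conditions reduce to results already established. For the Halting Problem for Turing machines, Turing-semi-decidability combined with the equivalence stated before Proposition \ref{Turingred} gives ${\sf M}_{\rm add}^{1,=}$-semi-decidability, while the classical non-co-semi-decidability of the halting problem (again transferred by that equivalence) shows its complement in $\bbbn$ is not ${\sf M}_{\rm add}^{1,=}$-semi-decidable. For $\bbbk^\emptyset$, semi-decidability follows directly from its definition (filter the input to $\bbbn_+$ and then simulate ${\mathcal M}_k^\emptyset$ on $k$); if $(\bbbr^\infty\setminus\bbbk^\emptyset)\cap\bbbn$ were also ${\sf M}_{\rm add}^{1,=}$-semi-decidable, then combined with the ${\sf M}_{\rm add}^1$-decidability of $\bbbn$ this would yield ${\sf M}_{\rm add}^1$-decidability of $\bbbk^\emptyset$, contradicting Lemma \ref{unentOrakelm}. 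For $\bbba$ the two requirements are exactly Lemma \ref{ErsteEigenschaft} and the lemma immediately preceding Proposition \ref{AnichtaufK}.

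The only step not already explicit in the text is the path-preservation argument for $\bbbl_1\not\preceq P$ when $P\subseteq\bbbn$, which I would carry out as follows. Suppose ${\mathcal M}\in{\sf M}_{\rm add}^1(P)$ decides $\bbbq$ and run it on $\pi$. Each register holds an affine form $k\pi+\ell$ with $k,\ell\in\bbbz$, so every order or equality test is a rational linear (in)equality, and every oracle query $(k_1\pi+\ell_1,\ldots,k_m\pi+\ell_m)\in P$ is answered ``no'' whenever some $k_i\neq 0$ (the entry is irrational, hence outside $\bbbn\supseteq P$) and otherwise reduces to a fixed boolean on an integer tuple independent of $\pi$. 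Therefore the set of inputs tracing the same finite computation path as $\pi$ equals the intersection of a rational open interval with the complement of finitely many exceptional rationals (values of $x$ at which some previously-irrational query entry $k_ix+\ell_i$ could become integral), and by density contains a rational $q$. Then ${\mathcal M}(q)={\mathcal M}(\pi)$, contradicting $q\in\bbbq$, $\pi\notin\bbbq$. I expect this oracle-query analysis to be the main obstacle: the hypothesis $P\subseteq\bbbn$ is precisely what collapses every nontrivial affine query to ``no'' at $\pi$ and preserves that answer under small rational perturbations; once this is in place, the finite-avoidance choice of $q$ and the verifications in the first part are routine.
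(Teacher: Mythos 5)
Your proposal follows the paper's own derivation: $P\not\preceq\bbbl$ comes from Proposition \ref{Uebertr} once the two hypotheses ($P\subseteq\bbbn$ is ${\sf M}_{\rm add}^{1,=}$-semi-decidable while $(\bbbr^\infty\setminus P)\cap\bbbn$ is not) are verified for the three problems via Proposition \ref{Turingred}, Corollary \ref{ReduktAequv}, Lemma \ref{ErsteEigenschaft}, and the lemma that $\bbba^{\rm c}\cap\bbbn$ is not ${\sf M}_{\rm add}^{1,=}$-semi-decidable, and $\bbbl\not\preceq P$ is exactly the path-preservation argument for $\bbbl_1\not\preceq P$ that the paper sketches with reference to Lemma \ref{LLL9}. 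Your spelled-out path-preservation step is sound, up to one small imprecision: the set of rationals $q$ at which a query value $k_iq+\ell_i$ becomes a natural number is an arithmetic progression rather than a finite set, so one should first restrict to a bounded neighborhood of $\pi$ (or note that these progressions are closed and miss $\pi$) before invoking density to pick $q$.
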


\begin{cor} $\bbbk^\emptyset\mbox{$\strpreceq$} \bbbh_{\rm add}^{1,=}$ and $\bbba\mbox{$\strpreceq$} \bbbh_{\rm add}^{1,=}$. \end{cor}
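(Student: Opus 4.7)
The plan is to derive both strict reductions essentially for free, by combining results already proved in the paper. The positive halves $\bbbk^\emptyset \preceq \bbbh_{\rm add}^{1,=}$ and $\bbba \preceq \bbbh_{\rm add}^{1,=}$ are either directly stated or follow by one application of transitivity, so the real content is the two non-reducibilities $\bbbh_{\rm add}^{1,=}\not\preceq \bbbk^\emptyset$ and $\bbbh_{\rm add}^{1,=}\not\preceq \bbba$. The natural lever for both is Proposition \ref{AKunverglL}, which provides a problem (namely $\bbbl$) that sits below $\bbbh_{\rm add}^{1,=}$ but is incomparable with each of $\bbbk^\emptyset$ and $\bbba$.

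For the first part, $\bbbk^\emptyset \preceq \bbbh_{\rm add}^{1,=}$ is recorded in the lemma immediately following the definition of $\bbbk^{\mathcal O}$. To rule out $\bbbh_{\rm add}^{1,=} \preceq \bbbk^\emptyset$, I will use the remark from the paragraph introducing the hierarchy~(\ref{ErsteHierarchie}) that $\bbbl$ is even easier than $\bbbh_{\rm add}^{1,=}$, i.e.\ $\bbbl \preceq \bbbh_{\rm add}^{1,=}$. Combining this with $\bbbh_{\rm add}^{1,=} \preceq \bbbk^\emptyset$ by transitivity would yield $\bbbl \preceq \bbbk^\emptyset$, contradicting the incomparability of $\bbbl$ and $\bbbk^\emptyset$ asserted in Proposition~\ref{AKunverglL}. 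Hence $\bbbk^\emptyset \strpreceq \bbbh_{\rm add}^{1,=}$.

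For the second part, $\bbba \preceq \bbbh_{\rm add}^{1,=}$ follows by chaining $\bbba \preceq \bbbk^\emptyset$ (established at the beginning of the proof of Proposition~\ref{AnichtaufK} from Corollary~\ref{ReduktAequv} and Lemma~\ref{ErsteEigenschaft}) with $\bbbk^\emptyset \preceq \bbbh_{\rm add}^{1,=}$. The non-reduction $\bbbh_{\rm add}^{1,=} \not\preceq \bbba$ is obtained by repeating the same argument with $\bbba$ in place of $\bbbk^\emptyset$: Proposition~\ref{AKunverglL} also gives $\bbbl \not\preceq \bbba$, and if we had $\bbbh_{\rm add}^{1,=} \preceq \bbba$, then together with $\bbbl \preceq \bbbh_{\rm add}^{1,=}$ transitivity would force $\bbbl \preceq \bbba$, a contradiction.

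No step seems genuinely hard; the only point worth being careful about is invoking the sharpened reduction $\bbbl \preceq \bbbh_{\rm add}^{1,=}$ (rather than merely the weaker $\bbbl \preceq \bbbh_{\rm add}^{1}$ that appears inside~(\ref{ErsteHierarchie})), since without this refinement the contradiction with Proposition~\ref{AKunverglL} would not close. Once this is in hand, the corollary is a two-line assembly.
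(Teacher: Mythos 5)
Your proposal is correct and takes the same route the paper intends: the corollary follows directly from Proposition~\ref{AKunverglL} (incomparability of $\bbbl$ with $\bbbk^\emptyset$ and $\bbba$) combined with the earlier observation that $\bbbl \preceq \bbbh_{\rm add}^{1,=}$ and the positive reductions $\bbbk^\emptyset \preceq \bbbh_{\rm add}^{1,=}$ and $\bbba \preceq \bbbk^\emptyset \preceq \bbbh_{\rm add}^{1,=}$, closed under transitivity. You are also right to flag that the sharpened reduction $\bbbl \preceq \bbbh_{\rm add}^{1,=}$ (not just $\bbbl \preceq \bbbh_{\rm add}^{1}$ from the displayed hierarchy) is the load-bearing fact here.
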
 

 Since the positive integers can be unary encoded, for any $P\subseteq \bbbn$ we also want to consider the decision problem
\[\bbbo_P=_{\rm df}\bigcup_{n\in P} \bbbr^n =\bigcup_{n=1}^\infty\{(x_1,\ldots,x_n)\in\bbbr^n\mid n\in P\}.\]
Then, we have $\bbbo_\bbba\not\preceq \bbbl $ and $ \bbbl \not\preceq \bbbo_\bbba $. On the other hand, for any $n\in \bbbn$, $\bbbl_n$ is decidable by an additive oracle machine using the oracle set 
 $\bbbo_P\cap \bbbl=\bigcup_{k\in P} \bbbl_k$ if $P\subseteq\bbbn$ is infinite, ${\sf M}_{\rm add}^{1,=}$-semi-decidable, and thus ${\sf M}_{\rm add}^{1,=}$-enumerable. This implies the following by Proposition \ref{AnichtaufK} and Proposition \ref{AKunverglL}.
\begin{thm} We have

\vspace{0.2cm}

$ \bbbl \mbox{$\strpreceq$} \bigcup_{k\in \bbba} \bbbl_k \mbox{$\strpreceq$} \bigcup_{k\in \bbbk^ \emptyset} \bbbl_k \preceq \bbbh_{\rm add}^{1,=}$.
\end {thm}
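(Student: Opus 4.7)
The plan is to verify the three reductions (with the two strict ones) by combining the preparatory statement just before the theorem---that $\bbbl_n$ is decidable by an additive oracle machine with oracle $\bbbo_P\cap\bbbl=\bigcup_{k\in P}\bbbl_k$ for every infinite ${\sf M}_{\rm add}^{1,=}$-semi-decidable $P\subseteq\bbbn$---with the incomparability results of Proposition~\ref{AKunverglL} and the strict reduction $\bbba\strpreceq\bbbk^\emptyset$ of Proposition~\ref{AnichtaufK}. The rightmost $\preceq$ is immediate: $\bigcup_{k\in\bbbk^\emptyset}\bbbl_k$ is ${\sf M}_{\rm add}^{1,=}$-semi-decidable (semi-decide $n\in\bbbk^\emptyset$ and $(x_1,\ldots,x_n)\in\bbbl_n$ in parallel), so it reduces to $\bbbh_{\rm add}^{1,=}$ by the standard many-one encoding via machine codes.

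For $\bbbl\preceq\bigcup_{k\in\bbba}\bbbl_k$ I would apply the cited preparatory statement with $P=\bbba$, which is infinite and ${\sf M}_{\rm add}^{1,=}$-enumerable by Lemmas~\ref{ErsteEigenschaft} and~\ref{ZweiteEigenschaft}: on input $(x_1,\ldots,x_n)$ the given oracle machine decides membership in $\bbbl_n$, which is all we need. For $\bigcup_{k\in\bbba}\bbbl_k\preceq\bigcup_{k\in\bbbk^\emptyset}\bbbl_k$ I would use the same statement with $P=\bbbk^\emptyset$: first test $n\in\bbba$ via the composition $\bbba\preceq\bbbk^\emptyset\preceq\bigcup_{k\in\bbbk^\emptyset}\bbbl_k$---the last step because every rational $n$-tuple lies in $\bbbl_n$, so querying $(1,\ldots,1)$ of length $n$ answers whether $n\in\bbbk^\emptyset$---and then, if $n\in\bbba$, use the same oracle to decide $(x_1,\ldots,x_n)\in\bbbl_n$.

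The two strictness assertions then use complementary ``algebraic'' reductions in opposite directions. For $\bbbl\strpreceq\bigcup_{k\in\bbba}\bbbl_k$, note that $\bbba\preceq\bigcup_{k\in\bbba}\bbbl_k$ by the same ``query $(1,\ldots,1)$ of length $n$'' trick, so a hypothetical reduction $\bigcup_{k\in\bbba}\bbbl_k\preceq\bbbl$ would yield $\bbba\preceq\bbbl$, contradicting Proposition~\ref{AKunverglL}. For $\bigcup_{k\in\bbba}\bbbl_k\strpreceq\bigcup_{k\in\bbbk^\emptyset}\bbbl_k$, assume $\bigcup_{k\in\bbbk^\emptyset}\bbbl_k\preceq\bigcup_{k\in\bbba}\bbbl_k$; composing with $\bbbk^\emptyset\preceq\bigcup_{k\in\bbbk^\emptyset}\bbbl_k$ yields $\bbbk^\emptyset\preceq\bigcup_{k\in\bbba}\bbbl_k$. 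Since $\bbbk^\emptyset\subseteq\bbbn$ and every register of an additive machine stays integer when started on an integer input, every oracle query consists of an integer tuple, which automatically lies in $\bbbl_m$, so the answer depends only on whether $m\in\bbba$. Replacing each such query by a query to $\bbba$ on the length $I_1$ (copied into $Z_1$) gives a simulation witnessing $\bbbk^\emptyset\preceq\bbba$, contradicting $\bbba\strpreceq\bbbk^\emptyset$.

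The hard part is this last contradiction: it rests on the semantic observation that on integer computations the oracle $\bigcup_{k\in\bbba}\bbbl_k$ collapses to $\bbba$, the same principle underlying Proposition~\ref{Uebertr} but invoked here by an explicit simulation rather than as a generic non-reducibility statement. Once this collapse is in hand, the rest of the theorem assembles routinely from the already-established incomparability results and the preparatory statement about decidability of $\bbbl_n$ with a sparse oracle.
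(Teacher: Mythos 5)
Your proposal is correct and reconstructs the argument the paper sketches from the same ingredients: the preparatory decidability statement with $P=\bbba$ and $P=\bbbk^\emptyset$, the ``pad with ones'' trick giving $P\preceq\bigcup_{k\in P}\bbbl_k$, Proposition~\ref{AKunverglL} (for $\bbba\not\preceq\bbbl$), and Proposition~\ref{AnichtaufK} together with the integer-collapse of $\bigcup_{k\in\bbba}\bbbl_k$ to $\bbba$ on $\bbbz^\infty$ (for $\bbbk^\emptyset\not\preceq\bigcup_{k\in\bbba}\bbbl_k$). One small misattribution: Lemma~\ref{ZweiteEigenschaft} gives that $\bbbn\setminus\bbba$ is infinite, not that $\bbba$ is infinite; the infiniteness of $\bbba$ (needed to apply the preparatory statement) instead follows, e.g., from its undecidability (Corollary after Lemma~\ref{ConditionsSimplicity}), or directly from Lemma~\ref{ConditionsSimplicity} applied to $W_n=\{B+1,B+2,\ldots\}$ for any candidate bound $B$.
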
 

\section{ A hierarchy between $\bbbh_{\rm add}^{1,=}$ and $\bbbh_{\rm add}^{1}$ }
The aim of this section is to show $\bbbh_{\rm add}^{1,=}\mbox{$\strpreceq$} \bbbh_{\rm add}^{1}$ and to define an infinite hierarchy between $\bbbh_{\rm add}^{1,=}$ and $\bbbh_{\rm add}^{1}$ by using the properties of prime numbers and their square roots and the fact that the set $\bbbr \setminus \{r\}$ is semi-decidable by a machine in ${\sf M}_{\rm add}^{1}$ if $r$ is Turing computable. That means, we can use that, for every prime number $p$, the set $\bbbr \setminus \{\sqrt p \}$ is  ${\sf M}_{\rm add}^{1}$-semi-decidable. On the other hand, 
$\bbbr \setminus \{\sqrt p \}$ is not ${\sf M}_{\rm add}^{1}$-decidable  
and
we can even  show that $\bbbr \setminus \{\sqrt p \}$ cannot be BSS reduced to $\bbbh_{\rm add}^{1,=}$ with respect to $\preceq$. Before we start the construction of a hierarchy between $\bbbh_{\rm add}^{1,=}$ and $\bbbh_{\rm add}^{1}$, let us mention that the latter fact implies also that
 the sets $\bbbr \setminus \{\sqrt p \}$ and $\bbbh_{\rm add}^{1,=}$ define incomparable strong Turing degrees since $ \bbbl_2\not\preceq \bbbr \setminus \{\sqrt p \}$ can also be shown in analogy with $ \bbbl_2\not\preceq \bbbq$.

Let $p_1=2,p_2=3,\ldots$ be an enumeration of the prime numbers and let ${ \mathcal K}$ be an additive machine in ${\sf M}_{\rm add}^{1}$ that semi-decides $\bigcup_{i=1}^\infty \{i\}\times (\bbbr\setminus \{\sqrt{p_i} \})$ by checking, for any input $(i,x)$, the condition ($x< \frac{r}{q}$ and $\frac{r^2}{q^2}< p_i$) or ($x> \frac{r}{q}$ and $\frac{r^2}{q^2}> p_i$) for all enumerated $(r,q)\in \bbbn\times \bbbn_+$ until the condition is satisfied. Note, that the machine ${ \mathcal K}$ can compute the prime number $p=p_j$ from $j$ by executing instructions realizing the algorithm $k:=1$; $p:=2$; while $k<j$ \{$p:=p+1$; if $(\forall (t,s)\in\{2,\ldots,p-1\}^2)(p\not=t\cdot s)$ then $k:=k+1$;\}.
Therefore,  we get
\[
\bbbp_i \,\,=_{\rm df} \,\,\{(2\,.\,(i, x)\,.\,{\rm code}({ \mathcal K})) \mid x\in \bbbr\setminus \{\sqrt{p_i} \}\} \subseteq \bbbh_{\rm add}^{1}
\]
 for $i\geq 1$  and, for the halting problems 
\[\hspace{0.5cm}
\bbbh_i\,\,=_{\rm df}\,\,\bbbh_{\rm add}^{1,=}\cup \bigcup_{j\leq i} \bbbp_j 
\] 
 (including  $\bbbh_0=\bbbh_{\rm add}^{1,=}$),  the following hierarchy. 
\begin{lem}$ \bbbh_{\rm add}^{1,=} = \bbbh_0 \preceq \bbbh_1 \preceq \bbbh_2\preceq \cdots\preceq \bbbh_{\rm add}^{1}$.\end{lem}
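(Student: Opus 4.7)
The plan is to verify the lemma in three pieces: the leftmost equality $\bbbh_{\rm add}^{1,=}=\bbbh_0$ is immediate from the definition (since $\bbbh_0=\bbbh_{\rm add}^{1,=}\cup\bigcup_{j\le 0}\bbbp_j=\bbbh_{\rm add}^{1,=}$); the successor step $\bbbh_i\preceq\bbbh_{i+1}$ is the heart of the argument; and the final bound $\bbbh_i\preceq\bbbh_{\rm add}^1$ will follow by the same syntactic pattern-matching idea, combined with the already-established $\bbbh_{\rm add}^{1,=}\preceq\bbbh_{\rm add}^1$. The common observation driving everything is that membership in $\bbbp_j$ is \emph{syntactically identifiable}: every element of $\bbbp_j$ has the fixed-length shape $(2,j,x,c_1,\ldots,c_m)$ where $c_1\ldots c_m={\rm code}(\mathcal{K})\in\{0,1\}^m$ is one fixed binary string (since $\mathcal{K}\in{\sf M}_{\rm add}^1$ uses only the constants $0,1$). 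A machine in ${\sf M}_{\rm add}^{1,=}$ can, using only equality tests against hard-coded integer constants and equality tests on its index registers for the tuple length, recognise this pattern without any oracle access.

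For the step $\bbbh_i\preceq\bbbh_{i+1}$, I would describe the following decision procedure with oracle $\bbbh_{i+1}$: on input $y$, first query the oracle; if $y\notin\bbbh_{i+1}$, reject (since $\bbbh_i\subseteq\bbbh_{i+1}$). If $y\in\bbbh_{i+1}$, syntactically test whether $y$ matches the pattern for $\bbbp_{i+1}$ (length $m+3$, first component $2$, second component $i+1$, positions $4,\ldots,m+3$ equal to the bits of ${\rm code}(\mathcal{K})$). If it does not match, $y\notin\bbbp_{i+1}$, hence $y\in\bbbh_i$; accept. If it does match, I claim $y\notin\bbbh_i$: the suffix ${\rm code}(\mathcal{K})$ encodes a machine using order tests, so $y$ cannot lie in $\bbbh_{\rm add}^{1,=}$ (which, by construction, only contains tuples whose code-suffix encodes a machine in ${\sf M}_{\rm add}^{1,=}$); and for $j\le i$ the second component of elements of $\bbbp_j$ is $j\ne i+1$, so $y\notin\bbbp_j$. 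Hence we reject, correctly.

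For $\bbbh_i\preceq\bbbh_{\rm add}^1$ I would decompose $\bbbh_i$ into the finite disjoint-in-syntax pieces $\bbbh_{\rm add}^{1,=}$ and $\bbbp_1,\ldots,\bbbp_i$. On input $y$, first test which (if any) of the finitely many $\bbbp_j$-patterns $y$ matches; if $y$ matches the pattern for some $j\le i$, then $y\in\bbbp_j$ iff $\mathcal{K}$ halts on $(j,x)$, iff the oracle query $y\in\bbbh_{\rm add}^1$ answers yes. If $y$ matches none of the patterns, then $y\notin\bbbp_j$ for any $j\le i$, so $y\in\bbbh_i$ iff $y\in\bbbh_{\rm add}^{1,=}$, and this last question is answered by the oracle $\bbbh_{\rm add}^1$ through the reduction $\bbbh_{\rm add}^{1,=}\preceq\bbbh_{\rm add}^1$ recalled at the start of Section 2. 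The only point requiring care—and the ``main obstacle,'' though a mild one—is ensuring that all these pattern checks (tuple-length comparison and equality with fixed $0/1$ constants) are genuinely executable within ${\sf M}_{\rm add}^{1,=}$ so that no hidden use of order tests sneaks into the reduction; this is immediate once one recalls that index registers permit length comparison via $I_j=I_k$ tests and that $c_1,\ldots,c_m\in\{0,1\}$ means equality tests suffice for the suffix.
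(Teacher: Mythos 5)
Your proof is correct, and it fills in explicitly what the paper leaves to the reader: the lemma has no written proof, being an immediate consequence of the facts recorded in Section 2 (a semi-decidable set is always $\preceq \bbbh_{\rm add}^1$, and if $\mathcal{A},\mathcal{B}$ are disjoint and ${\sf M}_{\rm add}^1$-semi-decidable then $\mathcal{A}\preceq\mathcal{A}\cup\mathcal{B}$, applied with $\mathcal{A}=\bbbh_i$, $\mathcal{B}=\bbbp_{i+1}$), and your explicit oracle procedures implement exactly those facts via the syntactic disjointness of the $\bbbp_j$'s from each other and from $\bbbh_{\rm add}^{1,=}$. The only shortcut you could take: for the last link, rather than pattern-matching against each $\bbbp_j$, just observe that $\bbbh_i$ is ${\sf M}_{\rm add}^1$-semi-decidable as a finite union of semi-decidable sets, and every ${\sf M}_{\rm add}^1$-semi-decidable problem reduces to $\bbbh_{\rm add}^1$ by prepending the size, the input, and the code of a semi-deciding machine and querying the oracle once.
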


The following lemmas are useful for giving further incomparable Turing degrees with respect to $\preceq$.
\begin{lem}\label{L9} For any $1\leq k\leq i$, $\{\sqrt{p_k}\} \preceq
\{\sqrt{p_1},\ldots, \sqrt{p_i} \}$. \end{lem}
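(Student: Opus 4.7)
The plan is to exploit the fact that $k$ and $i$ are fixed constants of the statement (not inputs), so the reducing oracle machine can have specific integer constants hard-coded into its program. On input $x\in\bbbr$, the machine first writes $x$ into $Z_1$, sets $I_1:=1$, and fires the oracle query $(x)\in \{\sqrt{p_1},\ldots,\sqrt{p_i}\}${\em ?} in the form (\ref{OracleInstruction}). If the answer is negative, then certainly $x\neq \sqrt{p_k}$ and the machine rejects. If the answer is positive, we know that $x=\sqrt{p_j}$ for some unique $j\in\{1,\ldots,i\}$, and the only remaining task is to decide whether $j=k$.

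To separate $\sqrt{p_k}$ from the other finitely many candidates, I would use that, for each $j\in\{1,\ldots,i\}\setminus\{k\}$, the reals $\sqrt{p_k}$ and $\sqrt{p_j}$ are distinct, so by density of $\bbbq$ in $\bbbr$ there exist positive integers $r_j,s_j$ with $r_j/s_j$ strictly between $\sqrt{p_k}$ and $\sqrt{p_j}$; concrete values can be read off from truncated decimal expansions. Since $k$ and $i$ are fixed, these $2(i-1)$ integers together with the information ``$\sqrt{p_k}<r_j/s_j$ or $\sqrt{p_k}>r_j/s_j$'' are built directly into the program. The integer constants $r_j$ and $s_j$ are produced from the available constants $0$ and $1$ by repeated use of the addition instruction, and the comparison of $x$ with $r_j/s_j$ is carried out without multiplication by first forming $s_j\cdot x=\underbrace{x+\cdots+x}_{s_j\text{ times}}$ using $Z_i:=Z_j+Z_k$, and then applying one order test against the integer $r_j$.

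Assembling this, after the positive oracle answer the machine runs through the list of all $j\neq k$ and, for each $j$, checks whether $x$ lies on the same side of $r_j/s_j$ as $\sqrt{p_k}$. It accepts iff all tests succeed. Correctness is immediate: if $x=\sqrt{p_k}$, every test is satisfied by construction; if $x=\sqrt{p_j}$ for some $j\neq k$, the test for that specific $j$ fails, because $r_j/s_j$ was chosen to separate $\sqrt{p_k}$ from $\sqrt{p_j}$.

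The only conceptual obstacle is the fact that machines in ${\sf M}_{\rm add}^{1}$ have no multiplication and only the constants $0$ and $1$, which at first glance seems to prevent the use of rational thresholds. This is overcome by the two observations above: rational comparisons $x\lessgtr r_j/s_j$ can be rewritten as integer comparisons $s_j\cdot x\lessgtr r_j$ using only addition, and the needed integers can be generated from $0$ and $1$ because $k$ and $i$ (hence the finite set of separating rationals) are fixed. All other verifications, such as checking that the described program belongs to ${\sf M}_{\rm add}^{1}(\{\sqrt{p_1},\ldots,\sqrt{p_i}\})$ and always halts, are routine.
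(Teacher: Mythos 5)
Your proposal is correct, and it reaches the same basic structure as the paper's proof (first a single oracle query to $\{\sqrt{p_1},\ldots,\sqrt{p_i}\}$ to narrow the candidates to $i$ specific irrationals, then rational comparisons to decide whether the input equals $\sqrt{p_k}$), but it realizes the second step differently. You hard-code, for each $j\neq k$ with $j\le i$, a fixed separating rational $r_j/s_j$ together with the side on which $\sqrt{p_k}$ lies, and then test the input against this finite list; this works because $k$ and $i$ are fixed parameters of the lemma, the separating rationals exist by density, and the machine can generate the needed integer constants from $0,1$ by repeated addition and replace a comparison against $r_j/s_j$ by the addition-only comparison $s_j x \lessgtr r_j$. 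The paper instead avoids hard-coding any separating rationals: it has the reducing machine simulate the fixed ${\sf M}_{\rm add}^1$-machine $\mathcal{K}$ (which semi-decides $\bigcup_i\{i\}\times(\bbbr\setminus\{\sqrt{p_i}\})$ by enumerating rational approximants) on $(1,x),\ldots,(i,x)$ in parallel, deleting each index $j$ from the candidate set as soon as a rational separating $x$ from $\sqrt{p_j}$ is discovered, accepting when $k$ is the sole survivor and rejecting as soon as $k$ is deleted. Your version is more elementary and self-contained for the fixed-$k,i$ statement, whereas the paper's version computes its separating rationals at runtime via a single pre-existing machine, a form that is more convenient for it to reuse (e.g.\ in Lemma~\ref{MengenPrimzahlen}) and that makes no appeal to an external choice of rationals.
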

\begin{proof} Let us give, for the oracle ${\mathcal O}_i=\{\sqrt{p_1},\ldots, \sqrt{p_i} \}$, a machine ${ \mathcal L}$ in ${\sf M}_{\rm add}^1({\mathcal O}_i)$ 
 that outputs 1 if and only if the input is $\sqrt{p_k}$ and otherwise 0. At the beginning ${ \mathcal L}$ queries the oracle ${\mathcal O}_i$ whether the input $x$ belongs to this oracle and outputs 0 if the answer is in the negative. 
Otherwise,
 $x$ is in $ {\mathcal O}_i$ and
there is an index $j_0\leq i$ for which 
we have $x=\sqrt{p_{j_0}} $. In this case, the index $j_0$ is searched by means of    comparisons of $x^2$ with the squares of the elements of ${\mathcal O}_i$ and a procedure of excluding members from the 
set $\{1,\ldots, i\}$ as follows.

 Let $I:=\{1,\ldots, i\}$ be the set of the indices of the first $i$ prime numbers and let ${ \mathcal L}$ simulate the machine ${ \mathcal K} $ for the inputs $(1,x),\ldots (i,x)$ simultaneously. For all enumerated $(r,q)$, ${ \mathcal L}$ checks $x< \frac{r}{q} $ and $ \frac{r^2}{q^2}< p_j$ as well as $x> \frac{r}{q}$ and $ \frac{r^2}{q^2}> p_j$ for any $j\in I$. If one of both conditions is satisfied for some $j$ and, consequently, $x\not=\sqrt{p_j}$, then $j$ is deleted in $I$ by putting $I:=I\setminus \{j\}$. If $k=j$, then the machine halts and the output is 0. Otherwise, the machine continues its operations. If $I$ contains only one element, then this member of $I$ can be only the index $j_0=k$ and should not  be deleted,  and in this case the output is 1.
\end{proof}
 Since relationships as $\{\sqrt{p_1},\ldots, \sqrt{p_i} \} \equiv \bbbr \setminus \{\sqrt{p_1},\ldots, \sqrt{p_i}\} $
are characteristic for the Turing reducibility relation and $\{\sqrt{p_j}\} \equiv \bbbp_j $ is also easy to see, we even have the following.
 \begin{lem}\label{MengenPrimzahlen} For any $i\geq 1$, 
 $
\{\sqrt{p_1},\ldots, \sqrt{p_i} \} \equiv \bigcup_{j\leq i} \bbbp_j $. \end{lem}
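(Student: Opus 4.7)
The plan is to establish the equivalence by two direct reductions, both exploiting the rigid syntactic shape of the tuples in each $\bbbp_j$.

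First, for $\bigcup_{j\leq i}\bbbp_j \preceq \{\sqrt{p_1},\ldots,\sqrt{p_i}\}$, I would design an oracle machine in ${\sf M}_{\rm add}^1(\{\sqrt{p_1},\ldots,\sqrt{p_i}\})$ that, on input an arbitrary tuple $\vec t$, first tests whether $\vec t$ has the form $(2\,.\,(j,x)\,.\,{\rm code}({\mathcal K}))$ for some integer $j$ with $1\leq j\leq i$. Since the length indicator, the integer $j$, and the fixed code sequence of ${\mathcal K}$ consist entirely of $0$'s and $1$'s, this parsing needs only index arithmetic and integer equality tests, and no oracle query; if the parsing fails the machine outputs $0$. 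Otherwise, having recovered $j$ and $x$, I invoke the subroutine produced by Lemma \ref{L9} (with $k=j$) to decide whether $x=\sqrt{p_j}$, and output $1$ iff $x\neq\sqrt{p_j}$. Correctness follows because any tuple whose first coordinate is $2$ and whose suffix matches ${\rm code}({\mathcal K})$ can only belong to $\bigcup_{k\leq i}\bbbp_k$ via $\bbbp_j$, so membership there is equivalent to $x\neq\sqrt{p_j}$.

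For the reverse direction $\{\sqrt{p_1},\ldots,\sqrt{p_i}\}\preceq\bigcup_{j\leq i}\bbbp_j$, I would design an oracle machine in ${\sf M}_{\rm add}^1(\bigcup_{j\leq i}\bbbp_j)$ that, on input $x\in\bbbr$, assembles for each $j=1,\ldots,i$ the tuple $\vec t_j=(2\,.\,(j,x)\,.\,{\rm code}({\mathcal K}))$ by loading the constants $2$ and $j$, the real $x$, and the hard-coded bits of ${\rm code}({\mathcal K})$ into consecutive real registers, and setting $I_1$ to $3+|{\rm code}({\mathcal K})|$. Each query then returns YES exactly when $\vec t_j\in\bbbp_j$, i.e.\ when $x\neq\sqrt{p_j}$, so the machine accepts as soon as some query returns NO.

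The only point requiring attention is the bookkeeping around the code of $\mathcal K$: I have to confirm that an additive BSS RAM using only the constants $0$ and $1$ can both compare the tail of $\vec t$ against the fixed bit sequence ${\rm code}({\mathcal K})$ symbol by symbol and assemble the tuples $\vec t_j$ with the correct value in $I_1$. Both are routine, since ${\rm code}({\mathcal K})$ is a fixed finite integer sequence that can be produced by a straight-line program using $Z_j:=0$ and $Z_j:=1$ instructions, integer equality tests are available, and the real $x$ is merely copied between registers. Beyond this purely syntactic step the argument is a direct packaging of Lemma \ref{L9} together with the defining form of the sets $\bbbp_j$.
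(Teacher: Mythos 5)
Your proof is correct and follows essentially the same route as the paper: both directions exploit the rigid form $(2\,.\,(j,x)\,.\,{\rm code}(\mathcal K))$ of the tuples in $\bbbp_j$, the reduction $\bigcup_{j\le i}\bbbp_j\preceq\{\sqrt{p_1},\ldots,\sqrt{p_i}\}$ is obtained by parsing the index $j$ and invoking Lemma~\ref{L9}, and the reverse reduction is obtained by assembling the $i$ query tuples and observing that $x$ avoids all $\sqrt{p_j}$ exactly when every query answers positively. Your version is merely more explicit about the index-register bookkeeping and absorbs the complement step (the paper states $\bbbr\setminus\{\sqrt{p_1},\ldots,\sqrt{p_i}\}\preceq\bigcup_{j\le i}\bbbp_j$ and relies on $A\equiv A^{\rm c}$) directly into the accept/reject logic, which amounts to the same thing.
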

\begin{proof} 1. $\bbbr \setminus \{\sqrt{p_1},\ldots, \sqrt{p_i}\}\preceq \bigcup_{j\leq i} \bbbp_j $ 
 holds since $x \in \bbbr \setminus \{\sqrt{p_1},\ldots, \sqrt{p_i}\}$ is satisfied 
 if and only if $\{(2\,.\, (1,x)\,.\, {\rm code}({ \mathcal K})),\ldots, (2\,.\, (i,x)\,.\, {\rm code}({ \mathcal K}))\} \subset \bigcup_{j\leq i} \bbbp_j$ holds. 2. For any input $(2\,.\, (j,x)\,.\, {\rm code}({ \mathcal K}))$ with  $j\leq i$, we can check whether $x\not =\sqrt{p_j}$ by using the oracle $\{(1,\sqrt{p_1}),\ldots, (i,\sqrt{p_i})\} $. By Lemma \ref{L9} $\{(1,\sqrt{p_1}),\ldots, (i,\sqrt{p_i})\} $ can be reduced to $\{\sqrt{p_1},\ldots, \sqrt{p_i} \}$.
 Thus, we also obtain $\bigcup_{j\leq i} \bbbp_j \preceq \{\sqrt{p_1},\ldots, \sqrt{p_i} \}$.
\end{proof}
 Therefore, the decidability of the set $\{{\rm code}({ \mathcal K})\}$ implies also the following.

\begin{lem}\label{LL9} Let $i\geq 1$ and ${ \mathcal O}\subseteq  \bbbr^\infty$. If $\,\bbbh_i$ is decidable by an oracle machine in ${\sf M}_{\rm add}^{1} ({ \mathcal O}) $, then the problems $\bigcup_{j\leq i} \bbbp_j $ and $\{\sqrt{p_1},\ldots, \sqrt{p_i}\}$ are also decidable by some machines in ${\sf M}_{\rm add}^{1} ({ \mathcal O}) $. 
\end{lem}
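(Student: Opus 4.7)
The plan is to reduce $\bigcup_{j\leq i}\bbbp_j$ to $\bbbh_i$ and then compose this reduction with the hypothesised decision procedure for $\bbbh_i$ in ${\sf M}_{\rm add}^{1}({ \mathcal O})$. Decidability of $\{\sqrt{p_1},\ldots,\sqrt{p_i}\}$ will then follow from Lemma~\ref{MengenPrimzahlen} and one more composition of procedures.

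To establish $\bigcup_{j\leq i}\bbbp_j\preceq \bbbh_i$, I would describe a machine in ${\sf M}_{\rm add}^{1}(\bbbh_i)$ which, on input $\vec y\in \bbbr^\infty$, first performs a purely syntactic format check. Reading the length from $I_1$, it verifies that $\vec y$ has the form $(2\,.\,(j,x)\,.\,{\rm code}({ \mathcal K}))$ with $j\in\{1,\ldots,i\}$, $x\in \bbbr$ arbitrary, and tail equal to the specific binary string ${\rm code}({ \mathcal K})$. Matching the tail componentwise against ${\rm code}({ \mathcal K})$ uses only equality tests to $0$ and $1$ (this is precisely what the decidability of $\{{\rm code}({ \mathcal K})\}$ mentioned just before the lemma amounts to); checking the second component against $1,2,\ldots,i$ is a bounded sequence of equality tests; and the overall length is compared against the fixed integer $3+|{\rm code}({ \mathcal K})|$ using the index register. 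If any check fails, the machine outputs $0$; otherwise it continues.

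When the format check succeeds, the machine makes a single oracle query $\vec y\in \bbbh_i$\emph{?} and outputs that answer. The key observation justifying this is twofold. First, the specific ${ \mathcal K}$ uses order tests, so ${ \mathcal K}\notin {\sf M}_{\rm add}^{1,=}$, and consequently no tuple whose code-tail equals ${\rm code}({ \mathcal K})$ can lie in $\bbbh_{\rm add}^{1,=}$ by definition of the latter. Second, a tuple with second component $j$ and code-tail ${\rm code}({ \mathcal K})$ can belong to $\bbbp_k$ only for $k=j$, since the second component of any element of $\bbbp_k$ is $k$. Therefore, on well-formatted inputs, $\vec y\in \bbbh_i$ is equivalent to $\vec y\in \bbbp_j$, which is exactly what we wish to decide.

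Composing this reduction with the given decision procedure for $\bbbh_i$ in ${\sf M}_{\rm add}^{1}({ \mathcal O})$ yields a decision procedure for $\bigcup_{j\leq i}\bbbp_j$ in ${\sf M}_{\rm add}^{1}({ \mathcal O})$. Finally, Lemma~\ref{MengenPrimzahlen} gives $\{\sqrt{p_1},\ldots,\sqrt{p_i}\}\preceq \bigcup_{j\leq i}\bbbp_j$; composing with the procedure just obtained produces a decision procedure for $\{\sqrt{p_1},\ldots,\sqrt{p_i}\}$ in ${\sf M}_{\rm add}^{1}({ \mathcal O})$, as required. No step is particularly delicate; the one conceptual point worth pinning down is the observation that the $\bbbh_{\rm add}^{1,=}$-part of $\bbbh_i$ contributes nothing when the input is already known to carry the tail ${\rm code}({ \mathcal K})$, and this is exactly where the explicit decidability of $\{{\rm code}({ \mathcal K})\}$ and the distinction between ${\sf M}_{\rm add}^{1,=}$ and ${\sf M}_{\rm add}^{1}$ are used.
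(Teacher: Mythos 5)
Your proof is correct and elaborates precisely the argument the paper leaves implicit ("the decidability of the set $\{{\rm code}({\mathcal K})\}$ implies also the following"): use decidability of $\{{\rm code}({\mathcal K})\}$ to do a syntactic format check, then make one query to $\bbbh_i$, observing that well-formatted tuples never lie in the $\bbbh_{\rm add}^{1,=}$-part (since ${\mathcal K}\notin{\sf M}_{\rm add}^{1,=}$) and lie in $\bbbp_k$ only for $k=j$; the second conclusion then follows from Lemma~\ref{MengenPrimzahlen}. This matches the intended route.
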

Together with Lemma \ref{L9} this means that the problem $ \bbbp_{i+1}$ as well as the problem $\{ \sqrt{p_{i+1}}\}$ would be decidable by machines in ${\sf M}_{\rm add}^{1} (\bbbh_i) $ if $\bbbh_{i+1}$ would be decidable by a machine in ${\sf M}_{\rm add}^{1} (\bbbh_i) $. In the following we will use it.

\begin{cor}\label{Zusammenh} For any $i\geq 0$, if $\{ \sqrt{p_{i+1}}\}\not \preceq\bbbh_i$, then $\bbbh_{i+1}\not\preceq\bbbh_i $. 
\end{cor}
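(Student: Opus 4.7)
I would prove the contrapositive: assuming $\bbbh_{i+1}\preceq \bbbh_i$, I would derive $\{\sqrt{p_{i+1}}\}\preceq \bbbh_i$. The entire argument is a short chaining of the two preceding lemmas together with the transitivity of the strong reduction $\preceq$.

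Concretely, suppose that some machine in ${\sf M}_{\rm add}^{1}(\bbbh_i)$ decides $\bbbh_{i+1}$. First, I would apply Lemma \ref{LL9} with the role of $i$ played by $i+1$ and with the oracle ${\mathcal O}$ taken to be $\bbbh_i$. The hypothesis of that lemma is exactly the supposition just made, so its conclusion yields that $\{\sqrt{p_1},\ldots,\sqrt{p_{i+1}}\}$ is decidable by some machine in ${\sf M}_{\rm add}^{1}(\bbbh_i)$, i.e., $\{\sqrt{p_1},\ldots,\sqrt{p_{i+1}}\}\preceq \bbbh_i$. Next, I would invoke Lemma \ref{L9} with $k := i+1$, which supplies $\{\sqrt{p_{i+1}}\}\preceq \{\sqrt{p_1},\ldots,\sqrt{p_{i+1}}\}$. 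Composing these two reductions by transitivity --- that is, using the $\bbbh_i$-oracle decision procedure for $\{\sqrt{p_1},\ldots,\sqrt{p_{i+1}}\}$ as a subroutine that answers the oracle queries issued by the machine produced by Lemma \ref{L9} --- yields a single machine in ${\sf M}_{\rm add}^{1}(\bbbh_i)$ deciding $\{\sqrt{p_{i+1}}\}$. Hence $\{\sqrt{p_{i+1}}\}\preceq \bbbh_i$, contradicting the hypothesis of the corollary, and therefore $\bbbh_{i+1}\not\preceq \bbbh_i$.

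There is essentially no obstacle here: the corollary is designed to package the content of Lemmas \ref{L9} and \ref{LL9} into the form most convenient for separating consecutive halting problems $\bbbh_i$ and $\bbbh_{i+1}$ by exhibiting a suitable square root. The transitivity of $\preceq$ that fuses the two reductions is immediate from the definition of ${\sf M}_{\rm add}^{1}(\bbbh_i)$, since an oracle machine with oracle $\bbbh_i$ can without loss call any previously-exhibited $\bbbh_i$-oracle subroutine in place of a literal oracle query. In the degenerate case $i=0$, Lemma \ref{L9} is applied vacuously (the ``set'' is already a singleton) and Lemma \ref{LL9} delivers the conclusion directly.
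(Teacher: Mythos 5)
Your proof is correct and follows exactly the route the paper indicates: the paper has no separate proof of the corollary but prefaces it by noting that Lemma \ref{LL9} applied at level $i+1$ with oracle $\bbbh_i$, combined with Lemma \ref{L9} and transitivity of $\preceq$, gives the contrapositive, which is precisely your chain. Your remark on the degenerate case $i=0$ is also a correct (if unnecessary) observation, since Lemma \ref{LL9} needs its index $\geq 1$ and you correctly supply $i+1$.
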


First, let us consider the machines in $ {\sf M}_{\rm add}^{1} (\bbbh_{\rm add}^{1,=}) $ and in this context, in particular, queries for inputs in $\bbbr$. Recall that, for inputs $x\in \bbbr$, at any time of the computation the content of a register of these machines
can be described by a term $kx+l$ with integers $k$ and $l$ that have the following features.
$k$ and $l$ are integers that result only from the execution of the computation instructions along the traversed path, and, for any fixed computation path, $k$ and $l$ are independent of the input values. In the following, we will consider only this type of terms and we will use that,
 for any fixed computation path and inputs $x\in \bbbr$, any oracle query coincides with a question
 of the form \begin{equation}\label{OrakelAnfr0}(k_0 x+l_0,k_1 x+l_1,\ldots, k_m x+l_m) \in \bbbh_{\rm add}^{1,=}\mbox{\em?}\end{equation} where $k_i,l_i\in \bbbz$ ($i=0,\ldots,m$) are independent of the input values in the described sense. 
Moreover, we use that, in case of $k\not=0$, we have $kx+l\in \bbbr\setminus \bbbq$ if and only if $x\in \bbbr\setminus \bbbq$.

\begin{lem}\label{OrakelWurzUndPi} Let ${\mathcal M}$ be a machine in $ {\sf M}_{\rm add}^{1} (\bbbh_{\rm add}^{1,=}) $. 

1. For all inputs $x\in \bbbr$, the oracle queries $\vec z \in \bbbh_{\rm add}^{1,=}\mbox{\em ?} $ of ${\mathcal M} $ which are not of the form \begin{equation}\label{OrakelAnfr}(j\,.\,( k_1 x+l_1,\ldots, k_j x+l_j)\,.\, {\rm code}({ \mathcal N}))\in \bbbh_{\rm add}^{1,=}\mbox{\em?}\end{equation} 
 for some $j, k_i,l_i\in \bbbz$ and ${\mathcal N}\in {\sf M}_{\rm add}^{1,=} $ are always answered in the negative. 

 2. If, for some input $x_0\in \bbbr\setminus \bbbq$, an oracle query of  the form (\ref{OrakelAnfr0}) coincides  with   \begin{equation}\label{Queryforx0}(j\,.\,( k_1 x_0+l_1,\ldots, k_j x_0+l_j)\,.\, {\rm code}({ \mathcal N}))\in \bbbh_{\rm add}^{1,=}\mbox{\em?}\end{equation} for some $j, k_i,l_i\in \bbbz$ and ${\mathcal N}\in {\sf M}_{\rm add}^{1,=}$, then this oracle query has the same form (\ref{OrakelAnfr}) (with the same $k_i,l_i\in \bbbz$) for all inputs $x\in \bbbr$ for which ${\mathcal M}$ goes through the same computation path as for $x_0$ until  this query  is executed.
\end{lem}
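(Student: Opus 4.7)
My plan is to exploit two features of additive BSS machines in ${\sf M}_{\rm add}^{1}(\bbbh_{\rm add}^{1,=})$. First, as already recalled in the excerpt, along any fixed computation path the content of every real-valued register on input $x\in\bbbr$ has the form $k x + l$ with $k, l \in \bbbz$ determined only by the path, and the value of the index register $I_1$ at any step is likewise a path-determined positive integer, since index registers are manipulated only by instructions independent of the real input. Hence at any fixed step along the path an oracle query is a tuple $(k_0 x + l_0, \ldots, k_m x + l_m)$ whose length $m+1$ is independent of $x$ and whose coefficients $k_i,l_i\in\bbbz$ are fixed by the path. Second, by the definition of $\bbbh_{\rm add}^{1,=}$, any tuple belonging to this oracle must begin with a positive integer $j$, continue with $j$ real components, and end with a binary string coding some $\mathcal{N}\in{\sf M}_{\rm add}^{1,=}$.

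For part 1, I would simply combine the two observations. If the query $(k_0 x + l_0, \ldots, k_m x + l_m)$ lies in $\bbbh_{\rm add}^{1,=}$, then $k_0 x + l_0$ must equal a positive integer $j$, the last $m-j$ entries $k_i x + l_i$ must all lie in $\{0,1\}$, and the resulting binary word must be the code of some $\mathcal{N}\in{\sf M}_{\rm add}^{1,=}$ with $m+1 = 1 + j + |{\rm code}(\mathcal{N})|$. Any query violating these conditions automatically misses $\bbbh_{\rm add}^{1,=}$ and is answered in the negative; conversely, whenever the answer is positive the query has precisely the form announced in part 1.

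For part 2, the crucial arithmetic ingredient is that for $x_0 \in \bbbr\setminus\bbbq$ and $k,l\in\bbbz$ the membership $kx_0+l \in \bbbz$ forces $k=0$. Suppose the query at the given step coincides, for an irrational $x_0$, with $(j\,.\,(k_1 x_0 + l_1, \ldots, k_j x_0 + l_j)\,.\,{\rm code}(\mathcal{N}))$. Applying the fact to the first component $k_0 x_0 + l_0 = j$ yields $k_0 = 0$ and $l_0 = j$; applying it to each trailing component, which must lie in $\{0,1\}\subseteq\bbbz$, yields $k_i = 0$ and $l_i \in \{0,1\}$ for every $i\geq j+1$. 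Hence the first entry and the whole code tail are integer constants prescribed by the path alone. For any other $x\in\bbbr$ traversing the same computation path up to this query, the same path-determined integer coefficients apply, so the query again reads $(j\,.\,(k_1 x + l_1, \ldots, k_j x + l_j)\,.\,{\rm code}(\mathcal{N}))$ with identical $j$, $k_i$, $l_i$, and $\mathcal{N}$, which is exactly the assertion.

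The only point to check with care is that the length $m+1$ of the query and all coefficients $k_i,l_i$ depend solely on the computation path and not on $x$; this is the content of the first paragraph and is a standard property of the additive BSS model. Once this is in hand, both parts follow from the syntactic definition of $\bbbh_{\rm add}^{1,=}$ together with the elementary fact about integer linear combinations of an irrational, and no substantial obstacle remains.
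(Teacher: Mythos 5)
Your proof is correct and follows essentially the same route as the paper: both proofs rest on the path-determined affine form $k x + l$ of register contents and on the key observation that for an irrational $x_0$, $k x_0 + l$ can be a fixed integer only when $k = 0$, which forces the first coordinate (the length $j$) and the trailing binary code of $\mathcal N$ to be constants depending only on the computation path. The paper handles the two parts somewhat more tersely, but the decomposition and the arithmetic core are identical to yours.
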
 
\begin{proof} Let $B_{x_0}$ be the computation path traversed by ${\mathcal M}$ on $x_0\in \bbbr\setminus \bbbq$. Then, for  inputs $x\in \bbbr$, any query of $B_{x_0}$ has the general form
\begin{equation}\label{Queryforx} (k_0x+l_0\,.\,( k_1 x+l_1,\ldots, k_j x+l_{j})\,.\,(
\underbrace{ k_{j+1} x+l_{j+1},\ldots, k_{j+r} x+l_{j+r}}_{=\vec y}))\in \bbbh_{\rm add}^{1,=}\mbox{\em ?}.
\end{equation}
 Let us assume that a query $Q$ of the form (\ref{Queryforx}) corresponds, for the input $x_0$, to (\ref{Queryforx0}). Then, we have $k_0x_0+l_0=j$ and there is a machine ${\mathcal N}\in {\sf M}_{\rm add}^{1,=} $ such that $\vec y$ is its code ${\rm code}({\mathcal N})\in \{0,1\}^\infty$. 
Since $x_0$ satisfies equations of the form $f(x)=_{\rm df} k_0 x+l_0=j$, $g_\nu(x)=_{\rm df} k_\nu x+l_\nu=0$, and $h_\mu(x)=_{\rm df} k_\mu x+l_\mu=1$ only if $k_0= k_\nu=k_\mu=0$ holds, the number $j$ and ${\rm code}({\mathcal N})$ are determined by constant functions $f(x)=j$, $g_\nu(x)=0$, and $h_\mu(x)=1$ and thus, for all $x\in\bbbr$ for which ${\mathcal M}$ goes through the computation path $B_{x_0}$, the query $Q$ has the same form (\ref{OrakelAnfr}).
\end{proof}
\begin{lem}\label{LLL9} Let ${ \mathcal M}\in {\sf M}_{\rm add}^{1} (\bbbh_{\rm add}^{1,=}) $ be a machine deciding a problem $S\subseteq \bbbr$. Then, there are $n,m\in \bbbn_{ +}$ such that ${ \mathcal M}$ rejects the inputs $\sqrt{2}$ and $\frac{n}{m}\pi$ or ${ \mathcal M}$ accepts both inputs.
\end{lem}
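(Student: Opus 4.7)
The plan is to show that $\mathcal M$ traverses the same computation path on input $\sqrt{2}$ as on a suitably chosen input $\frac{n}{m}\pi$, so that the accept/reject outputs coincide. I begin by inspecting the finite computation path $B$ traversed by $\mathcal M$ on $\sqrt{2}$ (finite because $\mathcal M$ decides $S$). Along $B$ every real-register content is a term $kx+l$ with $k,l\in\bbbz$ depending only on the path, so the non-oracle tests have the form $kx+l=0$? or $kx+l\geq 0$?. Equality tests with $(k,l)=(0,0)$ succeed trivially, and all others fail for every irrational $x$; the order tests impose finitely many strict inequalities $x<-l/k$ or $x>-l/k$ (when $k\neq 0$) which, because $\sqrt{2}$ is irrational, carve out a nonempty open interval $I\ni\sqrt{2}$ such that every $x\in I$ answers all non-oracle tests of $B$ as $\sqrt{2}$ does. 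Since $\sqrt{2}>0$ and $\{\tfrac{n}{m}\pi\mid n,m\in\bbbn_+\}$ is dense in $(0,\infty)$, I can fix $n,m\in\bbbn_+$ with $\frac{n}{m}\pi\in I$.

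I then handle the oracle queries of $B$ using Lemma~\ref{OrakelWurzUndPi}. For the irrational input $\sqrt{2}$, each oracle query along $B$ either (a) has the special form $(j\,.\,(k_1 x+l_1,\ldots,k_j x+l_j)\,.\,\mathrm{code}(\mathcal N))$ for some $\mathcal N\in{\sf M}_{\rm add}^{1,=}$ and integers $j,k_i,l_i$, with its answer equal to whether $\mathcal N$ halts on $(k_1\sqrt{2}+l_1,\ldots,k_j\sqrt{2}+l_j)$, or (b) fails to fit this form, in which case by part~1 of the lemma the answer is negative. By part~2, case (a) provides the same integers $j,k_i,l_i$ and the same $\mathcal N$ for every input that follows $B$ up to this query, so under input $\frac{n}{m}\pi$ the query becomes $(j\,.\,(k_1\tfrac{n}{m}\pi+l_1,\ldots,k_j\tfrac{n}{m}\pi+l_j)\,.\,\mathrm{code}(\mathcal N))$; by part~1, case (b) remains negative for every irrational input, in particular for $\frac{n}{m}\pi$.

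The crux is then to show that $\mathcal N$ halts on $(k_1\sqrt{2}+l_1,\ldots,k_j\sqrt{2}+l_j)$ if and only if it halts on $(k_1\tfrac{n}{m}\pi+l_1,\ldots,k_j\tfrac{n}{m}\pi+l_j)$. Since $\mathcal N\in{\sf M}_{\rm add}^{1,=}$ uses only equality tests, every test on a tuple $(y_1,\ldots,y_j)$ has the form $\sum_i b_i y_i+c=0$? with $b_i,c\in\bbbz$ depending only on the path of $\mathcal N$. Substituting $y_i=k_i x+l_i$ reduces this to $(\sum_i b_i k_i)\,x+(\sum_i b_i l_i+c)=0$?, which for $x=\sqrt{2}$ holds iff $\sum_i b_i k_i=0$ and $\sum_i b_i l_i+c=0$; because $\frac{n}{m}\pi$ is also irrational (as $\pi\notin\bbbq$ and $n,m>0$), exactly the same two integer conditions characterise the answer for $x=\frac{n}{m}\pi$. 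Hence $\mathcal N$ executes step-by-step identical computations on the two tuples, and halts on one iff on the other.

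Combining these observations inductively along $B$, $\mathcal M$ on input $\frac{n}{m}\pi$ produces the same sequence of test answers and oracle answers as on $\sqrt{2}$, and therefore the same final accept/reject output. The main technical obstacle is the careful reduction of oracle queries to the linear-term form supplied by Lemma~\ref{OrakelWurzUndPi} and the accompanying analysis of the inner machine $\mathcal N$; once the linear structure of register contents is exploited, the remaining arguments are straightforward.
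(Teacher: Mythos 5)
Your proof is correct and follows essentially the same route as the paper's own argument. You both rely on Lemma~\ref{OrakelWurzUndPi} to normalize the oracle queries along the (finite) path of $\mathcal M$ on $\sqrt{2}$, observe that the inner machine $\mathcal N\in{\sf M}_{\rm add}^{1,=}$ only ever applies linear equality tests whose answers on $(k_1 x+l_1,\ldots,k_j x+l_j)$ are invariant across all irrational $x$, and then use the density of $\{\frac{n}{m}\pi\mid n,m\in\bbbn_+\}$ to land $\frac{n}{m}\pi$ inside the region cut out by the finitely many order tests along that path; the paper phrases the last step as an explicit $\varepsilon$-approximation of $\sqrt{2}/\pi$, while you phrase it as choosing a point of a dense set in a nonempty open interval, but these are the same idea.
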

\begin{proof}By Lemma \ref {OrakelWurzUndPi}, for any computation path $B$ of an ${ \mathcal M}\in {\sf M}_{\rm add}^{1} (\bbbh_{\rm add}^{1,=}) $, there is a system $S_B$ of conditions of the form
\begin{eqnarray}
k_\nu x+l_\nu\,\, \geq\,\, 0 \quad \mbox{ and } \quad k_\mu x+l_\mu&>& 0,\label{System1}\\
(j\,.\, (k_1 x+l_1,\ldots, k_j x+l_{j})\,.\, {\rm code}({\mathcal N}))&\in& \bbbh_{\rm add}^{1,=},\label{System2} \\
(j\,.\, (k_1 x+l_1,\ldots, k_j x+l_{j})\,.\,{\rm code}({\mathcal N}))&\not\in& \bbbh_{\rm add}^{1,=}\label{System3}
\end{eqnarray}
($k_i,l_i\in \bbbz$, ${ \mathcal N}\in {\sf M}_{\rm add}^{1,=}
$) that is satisfied by an input $x\in \bbbr\setminus \bbbq$ if and only if this path $B$ is traversed by ${ \mathcal M}$ on $x$.

Moreover, every computation path of a machine  ${ \mathcal N}$ in ${\sf M}_{\rm add}^{1,=} $ on an input of the form $ ( k_1 x+l_1,\ldots, k_j x+l_{j})$ can be described by equations and inequalities of the form 
\begin{displaymath}\begin{array}{c}
k_\nu x+l_\nu\,=\,0 \quad \mbox{ and } \quad 
k_\mu x+l_\mu\,\not =\, 0. 
\end{array}\end{displaymath}
Since, for any $n,m\in \bbbn_{ +}$, $\frac{n}{m}\pi$ and $\sqrt{2}$ satisfy the same equations of the form $k_\nu x+l_\nu=0$, every ${ \mathcal N}\in {\sf M}_{\rm add}^{1,=} $ halts on $ (k_1 \frac{n}{m}\pi +l_1,\ldots, k_j \frac{n}{m}\pi +l_j)$ if and only if it halts on $ (k_1 \sqrt{2} +l_1,\ldots, k_j \sqrt{2} +l_j)$. Moreover, since the computation path $B_ {\sqrt{2}}$ of ${\mathcal M}$ on input $\sqrt{2}$ is finite,  the system $S_{B_{\sqrt{2}}}$ contains 
only a finite number of inequalities of the form (\ref{System1}) and, thus,  these   inequalities  can be described  by  some $(k_\nu, l_\nu), (k_\mu, l_\mu) \in K\times L $  where $K\subseteq \bbbz\setminus \{0\} $  and  $L\subseteq \bbbz$ are  finite sets. From this it follows that there is a real number  $\varepsilon$ with $k \sqrt{2}+l>\varepsilon>0$ for all $(k,l)\in K\times L $. Moreover, we also have integers $n_\varepsilon,m_\varepsilon \in \bbbn_{+}$ with $|\frac{n_\varepsilon }{m_\varepsilon }-\frac{\sqrt{2}}{\pi} |<\frac{\varepsilon }{|k\pi|} $ for $k\in K$ such that $|k \frac{n_\varepsilon}{m_\varepsilon}\pi -k {\sqrt{2}}|< \varepsilon$ and, consequently, $k {\sqrt{2}}-k\frac{n_\varepsilon}{m_\varepsilon}\pi < \varepsilon$ and
 \[k \frac{n_\varepsilon}{m_\varepsilon}\pi+l=
k\sqrt{2} +l +k \frac{n_\varepsilon}{m_\varepsilon}\pi -k {\sqrt{2}}=\underbrace{k \sqrt{2} +l }_{>\varepsilon}-(\underbrace{k {\sqrt{2}}-k \frac{n_\varepsilon}{m_\varepsilon}\pi }_{< \varepsilon}) >0\] hold for all  $(k,l)\in K\times L $.
That means that
$\frac{n_\varepsilon}{m_\varepsilon}\pi$ and $\sqrt{2}$ satisfy the same system $S_{B_ {\sqrt{2}}}$ and $B_ {\sqrt{2}}$ is traversed by ${\mathcal M}$ for both inputs $\sqrt{2}$ and $\frac{n_\varepsilon}{m_\varepsilon}\pi$.
\end{proof}

Thus, we get the following corollaries.
\begin{cor} The problem $\{\sqrt{2}\}$ is not decidable by a machine in ${\sf M}_{\rm add}^{1} (\bbbh_{\rm add}^{1,=})$.\end{cor}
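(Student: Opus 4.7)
The plan is to derive this as a direct contradiction from Lemma \ref{LLL9}. Suppose for contradiction that $\{\sqrt 2\}$ is decidable by some machine ${\mathcal M}\in {\sf M}_{\rm add}^{1}(\bbbh_{\rm add}^{1,=})$. Since $\{\sqrt{2}\}\subseteq \bbbr$ is a problem in the sense considered by the lemma, applying Lemma \ref{LLL9} to this ${\mathcal M}$ yields positive integers $n,m\in\bbbn_+$ such that ${\mathcal M}$ either accepts both $\sqrt 2$ and $\tfrac{n}{m}\pi$ or rejects both.

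However, the genuine characteristic function of $\{\sqrt 2\}$ must accept $\sqrt 2$ and reject $\tfrac{n}{m}\pi$, because $\tfrac{n}{m}\pi \neq \sqrt 2$ for every $n,m\in\bbbn_+$ (otherwise $\pi = \tfrac{m\sqrt 2}{n}$ would be algebraic, contradicting the transcendence of $\pi$). This contradicts the assumption that ${\mathcal M}$ decides $\{\sqrt 2\}$, so no such ${\mathcal M}$ exists. There is no real obstacle; all the work has been done in establishing Lemma \ref{LLL9}, and the corollary is a two-line deduction once one notes that the transcendence of $\pi$ forces $\sqrt 2$ and $\tfrac{n}{m}\pi$ to disagree on membership in $\{\sqrt 2\}$.
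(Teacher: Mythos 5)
Your proof is correct and is exactly the intended deduction: the paper states this as a corollary of Lemma~\ref{LLL9} with no separate argument, and the only content needed is the observation that $\tfrac{n}{m}\pi\neq\sqrt2$ for all $n,m\in\bbbn_+$, which you justify via the transcendence of $\pi$.
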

\begin{cor} 
$\bbbh_{\rm add}^{1,=} \strpreceq \bbbh_1 $.\end{cor}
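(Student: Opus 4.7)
The plan is to unpack the definition of $\strpreceq$: I need both $\bbbh_{\rm add}^{1,=} \preceq \bbbh_1$ and $\bbbh_1 \not\preceq \bbbh_{\rm add}^{1,=}$. The first half is immediate from the hierarchy lemma already established, namely $\bbbh_0 \preceq \bbbh_1$, since $\bbbh_0 = \bbbh_{\rm add}^{1,=}$ by definition. Alternatively, since $\bbbh_{\rm add}^{1,=} \subseteq \bbbh_1 = \bbbh_{\rm add}^{1,=} \cup \bbbp_1$, an oracle machine in ${\sf M}_{\rm add}^{1}(\bbbh_1)$ can decide $\bbbh_{\rm add}^{1,=}$ simply by asking membership in $\bbbh_1$ for queries that syntactically have the form appearing in $\bbbh_{\rm add}^{1,=}$ and rejecting otherwise.

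For the nontrivial direction, I would invoke Corollary \ref{Zusammenh} with $i=0$: if $\{\sqrt{p_1}\} \not\preceq \bbbh_0$, then $\bbbh_1 \not\preceq \bbbh_0$. Since $p_1 = 2$, this means I need exactly the statement that $\{\sqrt{2}\}$ is not decidable by any machine in ${\sf M}_{\rm add}^{1}(\bbbh_{\rm add}^{1,=})$, which is the immediately preceding corollary in the excerpt (the one derived from Lemma \ref{LLL9}). Combining these two facts gives $\bbbh_1 \not\preceq \bbbh_{\rm add}^{1,=}$, so together with the first half we obtain $\bbbh_{\rm add}^{1,=} \strpreceq \bbbh_1$.

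There is no real obstacle here: the work has been done in Lemma \ref{LLL9}, which used the system $S_{B_{\sqrt{2}}}$ together with the approximation $\frac{n_\varepsilon}{m_\varepsilon}\pi$ of $\sqrt{2}$ to show that any ${\sf M}_{\rm add}^{1}(\bbbh_{\rm add}^{1,=})$ decider agrees on $\sqrt{2}$ and some rational multiple of $\pi$, and in Corollary \ref{Zusammenh}, which chained this through Lemma \ref{LL9} to the full halting problem $\bbbh_1$. The proof of the statement is therefore essentially a one-line combination of these two previously established results.
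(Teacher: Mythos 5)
Your proof is correct and follows exactly the paper's intended route: the first half comes from the already-stated hierarchy lemma $\bbbh_0 \preceq \bbbh_1$ (with $\bbbh_0 = \bbbh_{\rm add}^{1,=}$), and the second half is precisely Corollary \ref{Zusammenh} at $i=0$ combined with the immediately preceding corollary that $\{\sqrt 2\} = \{\sqrt{p_1}\}$ is not decidable by any machine in ${\sf M}_{\rm add}^{1}(\bbbh_{\rm add}^{1,=})$, which is what Lemma \ref{LLL9} was set up to establish. The paper leaves this corollary without an explicit proof because it is exactly this two-line combination, so your reconstruction matches.
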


Our next goal is to show that $\bbbh_i$ is strictly easier than $\bbbh_{i+1}$ for any $i\geq 1$ by using Corollary \ref{Zusammenh}.

\begin{lem}\label{Wurzeltest} Let ${ \mathcal M}\in {\sf M}_{\rm add}^{1} (\bbbh_i) $ be a machine deciding a problem $S\subseteq \bbbr$. Then, there are $n,m\in \bbbn_{ +}$ such that ${ \mathcal M}$ rejects $\sqrt{p_{i+1}}$ and $\frac{n}{m}\pi$ or ${ \mathcal M}$ accepts the both inputs.
\end{lem}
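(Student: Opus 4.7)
The plan is to adapt the proof of Lemma~\ref{LLL9}, treating the extra oracle power from $\bigcup_{j\leq i}\bbbp_j$ as a bookkeeping issue on top of the $\bbbh_{\rm add}^{1,=}$-argument. I start with the finite computation path $B_{\sqrt{p_{i+1}}}$ traversed by $\mathcal{M}$ on input $\sqrt{p_{i+1}}$, described by a system $S_B$ of equality tests, strict and non-strict order tests on expressions $k_\nu x+l_\nu$ with $k_\nu,l_\nu\in\bbbz$, and finitely many oracle queries $(k_0x+l_0,\ldots,k_m x+l_m)\in\bbbh_i$. I seek $n,m\in\bbbn_+$ so that $\mathcal{M}$ takes the same path on input $\frac{n}{m}\pi$, which forces the same accept/reject output.

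The new step is verifying that every oracle query along $B_{\sqrt{p_{i+1}}}$ gives the same answer for $\frac{n}{m}\pi$ as for $\sqrt{p_{i+1}}$. Since $\bbbh_i=\bbbh_{\rm add}^{1,=}\cup\bigcup_{j\leq i}\bbbp_j$, there are two cases. For membership in $\bbbh_{\rm add}^{1,=}$, Lemma~\ref{OrakelWurzUndPi} forces the coefficients $k_\nu$ at slots requiring fixed integers (the arity $j$, the bits of ${\rm code}(\mathcal{N})$) to vanish, and the simulated $\mathcal{N}\in{\sf M}_{\rm add}^{1,=}$ branches only on equalities $kx+l=0$; for any irrational $x$ these reduce to $k=l=0$, so $\mathcal{N}$ halts on the input tuple under $\sqrt{p_{i+1}}$ iff it halts under $\frac{n}{m}\pi$, exactly as in the proof of Lemma~\ref{LLL9}. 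For membership in some $\bbbp_j$ with $j\leq i$, the structural form $(2.(j,x').{\rm code}(\mathcal{K}))$ forces $k_0=0,l_0=2$, $k_1=0,l_1=j$ and $(k_\mu,l_\mu)$ matching ${\rm code}(\mathcal{K})$ for the trailing slots, leaving a single free term $x'=k_2 x+l_2$ with semantic condition $x'\neq\sqrt{p_j}$. Here I use algebra: $k_2\sqrt{p_{i+1}}+l_2\in\bbbz+\bbbz\sqrt{p_{i+1}}$ cannot equal $\sqrt{p_j}$ for $j\neq i+1$ because $\{1,\sqrt{p_{i+1}},\sqrt{p_j}\}$ is $\bbbq$-linearly independent; and $k_2\frac{n}{m}\pi+l_2$ is transcendental when $k_2\neq 0$ and an integer otherwise, so it cannot equal the algebraic irrational $\sqrt{p_j}$ either. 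Hence each $\bbbp_j$-query is answered ``yes'' for both inputs iff the structural coefficients line up, which is an input-independent condition.

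For the non-oracle tests along $B_{\sqrt{p_{i+1}}}$, equality tests $k_\nu x+l_\nu=0$ are resolved identically for every irrational $x$, while for the finitely many strict order tests with $k_\nu\neq 0$ the values $|k_\nu\sqrt{p_{i+1}}+l_\nu|$ admit a common lower bound $\varepsilon>0$. Since $\pi$ is transcendental and $\{\frac{n}{m}\pi\mid n,m\in\bbbn_+\}$ is dense in $\bbbr$, I can pick $n,m$ with $|\frac{n}{m}\pi-\sqrt{p_{i+1}}|<\varepsilon/\max_\nu|k_\nu|$, so every order test on $B_{\sqrt{p_{i+1}}}$ keeps its sign on input $\frac{n}{m}\pi$. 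Combining the two parts, $\mathcal{M}$ traverses $B_{\sqrt{p_{i+1}}}$ on $\frac{n}{m}\pi$ and delivers the same verdict, finishing the proof.

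The main obstacle, and the one I would check first, is the structural analysis of $\bbbp_j$-queries: one must ensure that the only nontrivial free slot in a query landing in $\bbbp_j$ is the $x'$-slot, and that the algebraic irrationality arguments ($\sqrt{p_j}\notin\bbbq(\sqrt{p_{i+1}})$ for $j\neq i+1$, transcendence of $\pi$) indeed rule out $k_2 x+l_2=\sqrt{p_j}$ for both candidate inputs regardless of the choice of $n,m$.
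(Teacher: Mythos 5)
Your proposal is correct and follows essentially the same strategy as the paper: reduce to the argument of Lemma~\ref{LLL9} for the non-oracle tests and the $\bbbh_{\rm add}^{1,=}$-queries, and then handle the additional $\bbbp_j$-style queries by observing that $k_2\sqrt{p_{i+1}}+l_2\neq\sqrt{p_j}$ (by $\bbbq$-linear independence of square roots of distinct primes) and $k_2\tfrac{n}{m}\pi+l_2\neq\sqrt{p_j}$ (by transcendence of $\pi$), so those queries are answered positively for both inputs regardless of the chosen $n,m$. The paper's proof is terser but makes exactly these two points.
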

\begin{proof} In analogy with the proof of Lemma \ref{LLL9}, we can show  that   for any  finite system of  conditions of the form (\ref{System1}), (\ref{System2}), and (\ref{System3}), there are $n,m\in \bbbn_{ +}$ such that both $\sqrt{p_{i+1}}$ and $\frac{n}{m}\pi$ satisfy this system. Moreover, we have $k_2\sqrt{p_{i+1}}+l_2\not=\sqrt{p_{j}}$ and $k_2\frac{n}{m}\pi+ l_2\not=\sqrt{p_{j}}$ for any $k_2\not=0$, $j\leq i$, and $m,n\in \bbbn_{ +}$. Hence, for all $j\leq i$,  questions of the form $ (2\,.\,(j,k_2 \frac{n}{m}\pi+ l_2)\,.\,{\rm code}({ \mathcal K})) \in \bbbh_i\mbox{\em?}
$ and $ (2\,.\,(j,k_2 \sqrt{p_{i+1}}+ l_2)\,.\,{\rm code}({ \mathcal K})) \in \bbbh_i\mbox{\em?}
$  are answered in the positive.
\end{proof}

From this lemma we deduce that $\{\sqrt{p_{i+1}}\} $ is not decidable by a machine in ${\sf M}_{\rm add}^{1} (\bbbh_i)$.
\begin{cor} For any $i\geq 1$, $\{\sqrt{p_{i+1}}\}\not \preceq \bbbh_i$.\end{cor}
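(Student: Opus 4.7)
The plan is to derive this as an immediate contradiction from Lemma \ref{Wurzeltest}. I would suppose, for contradiction, that there exists a machine ${\mathcal M} \in {\sf M}_{\rm add}^{1}(\bbbh_i)$ that decides the singleton $\{\sqrt{p_{i+1}}\} \subseteq \bbbr$. Applying Lemma \ref{Wurzeltest} to ${\mathcal M}$ yields integers $n, m \in \bbbn_+$ such that ${\mathcal M}$ either accepts both $\sqrt{p_{i+1}}$ and $\frac{n}{m}\pi$, or rejects both.

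Now I would invoke the transcendence of $\pi$ (a classical result of Lindemann) to conclude that $\frac{n}{m}\pi$ is transcendental for every $n,m \in \bbbn_+$, whereas $\sqrt{p_{i+1}}$ is algebraic of degree $2$. Hence $\frac{n}{m}\pi \neq \sqrt{p_{i+1}}$, so any machine deciding $\{\sqrt{p_{i+1}}\}$ must accept $\sqrt{p_{i+1}}$ and reject $\frac{n}{m}\pi$. This contradicts the conclusion of Lemma \ref{Wurzeltest}, yielding $\{\sqrt{p_{i+1}}\} \not\preceq \bbbh_i$.

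There is essentially no obstacle here, since the whole content has already been concentrated in Lemma \ref{Wurzeltest}; the only nontrivial ingredient is the (standard) transcendence of $\pi$, which guarantees that rational multiples of $\pi$ are disjoint from the set of real algebraic numbers and in particular from $\{\sqrt{p_{i+1}}\}$.
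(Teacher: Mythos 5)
Your proof is correct and takes essentially the same route as the paper, which simply states that the corollary follows from Lemma \ref{Wurzeltest}; you have merely supplied the (easy, but worth noting) detail that $\frac{n}{m}\pi \neq \sqrt{p_{i+1}}$, for which the transcendence of $\pi$ is indeed sufficient.
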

Moreover, the proof of Lemma \ref{Wurzeltest} is also suitable for showing that $\{\sqrt{p_{i+1}}\} $ is not decidable by machines using only $\bigcup _{j\leq i}\bbbp_j$  as oracle. Thus, by Lemma \ref{MengenPrimzahlen} we get the following.

\begin{prop}
 For any $i\geq 1$, $\{\sqrt{p_{i+1}}\}\not \preceq \bigcup _{j\leq i}\bbbp_j$ and $\{\sqrt{p_{i+1}}\}\not \preceq \{\sqrt{p_{1}},\ldots, \sqrt{p_{i}}\}$.\end{prop}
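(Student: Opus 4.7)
The plan is to follow the author's hint and adapt the proof of Lemma \ref{Wurzeltest} almost verbatim, with the oracle $\bbbh_i$ replaced by $\bigcup_{j\leq i}\bbbp_j$, and then derive the second statement from Lemma \ref{MengenPrimzahlen}. Assume toward contradiction that some ${\mathcal M}\in {\sf M}_{\rm add}^{1}(\bigcup_{j\leq i}\bbbp_j)$ decides $\{\sqrt{p_{i+1}}\}$. As in the proofs of Lemma \ref{LLL9} and Lemma \ref{Wurzeltest}, on the computation path $B_{\sqrt{p_{i+1}}}$ traversed by $\mathcal M$ on input $\sqrt{p_{i+1}}$, the contents of registers have the shape $kx+l$ with $k,l\in\bbbz$, and the non-oracle tests along $B_{\sqrt{p_{i+1}}}$ give a finite system of conditions of the form (\ref{System1}).

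The key new step is to control the oracle queries. By the same reasoning as in Lemma \ref{OrakelWurzUndPi}, for irrational $x$ only queries of the form $(2\,.\,(j,k_2x+l_2)\,.\,{\rm code}({\mathcal K}))$ with $j\leq i$ and $k_2,l_2\in\bbbz$ can possibly lie in $\bigcup_{j\leq i}\bbbp_j$; all other queries are answered negatively and are forced into the same syntactic form along every input for which $\mathcal M$ follows $B_{\sqrt{p_{i+1}}}$. For the queries of this distinguished form, observe that $k_2 x + l_2 \neq \sqrt{p_j}$ for every $j\leq i$ both when $x=\sqrt{p_{i+1}}$ and when $x=\tfrac{n}{m}\pi$ for any $n,m\in\bbbn_+$: if $k_2=0$ then $l_2\in\bbbz$ cannot equal the irrational $\sqrt{p_j}$; if $k_2\neq 0$ and $x=\sqrt{p_{i+1}}$ then either $l_2=0$ (forcing $k_2^2 p_{i+1}=p_j$, impossible since $p_{i+1}>p_j$) or $l_2\neq 0$ (squaring $k_2\sqrt{p_{i+1}}+l_2=\sqrt{p_j}$ forces $\sqrt{p_j}$ rational); if $k_2\neq 0$ and $x=\tfrac{n}{m}\pi$ then the equation makes $\pi$ algebraic, contradicting transcendence. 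Hence all such oracle queries are answered in the positive for both $\sqrt{p_{i+1}}$ and $\tfrac{n}{m}\pi$.

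Exactly as in Lemma \ref{Wurzeltest}, the finite system of inequalities from $B_{\sqrt{p_{i+1}}}$ possesses a positive margin at $\sqrt{p_{i+1}}$, so we may choose $n,m\in\bbbn_+$ with $|\tfrac{n}{m}-\tfrac{\sqrt{p_{i+1}}}{\pi}|$ small enough that $\tfrac{n}{m}\pi$ satisfies the same system of non-oracle conditions as $\sqrt{p_{i+1}}$. Combined with the previous paragraph, $\mathcal M$ traverses the path $B_{\sqrt{p_{i+1}}}$ and produces the same output on both inputs, contradicting that $\mathcal M$ decides $\{\sqrt{p_{i+1}}\}$. Thus $\{\sqrt{p_{i+1}}\}\not\preceq \bigcup_{j\leq i}\bbbp_j$. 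The second assertion is now immediate: by Lemma \ref{MengenPrimzahlen}, $\{\sqrt{p_1},\ldots,\sqrt{p_i}\}\equiv \bigcup_{j\leq i}\bbbp_j$, and transitivity of $\preceq$ together with the first part yields $\{\sqrt{p_{i+1}}\}\not\preceq \{\sqrt{p_1},\ldots,\sqrt{p_i}\}$.

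The main obstacle is the careful adaptation of Lemma \ref{OrakelWurzUndPi} so that oracle queries not in the specific syntactic shape of $\bbbp_j$-instances are guaranteed to be answered identically on $\sqrt{p_{i+1}}$ and $\tfrac{n}{m}\pi$; the rest of the argument is then a small algebraic computation plus a density step identical to that in Lemma \ref{Wurzeltest}.
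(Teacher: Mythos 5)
Your proposal is correct and follows essentially the same route as the paper, which simply observes that the proof of Lemma \ref{Wurzeltest} carries over with $\bbbh_i$ replaced by $\bigcup_{j\leq i}\bbbp_j$ and then invokes Lemma \ref{MengenPrimzahlen}; your version merely spells out the adaptation of Lemma \ref{OrakelWurzUndPi} and the oracle-query case analysis. One small slip: in the $k_2\neq 0$, $l_2\neq 0$ case at $x=\sqrt{p_{i+1}}$, squaring $k_2\sqrt{p_{i+1}}+l_2=\sqrt{p_j}$ forces $\sqrt{p_{i+1}}$ (not $\sqrt{p_j}$) to be rational, but the contradiction is the same.
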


Since these results are independent of the order of the prime numbers, we get incomparable Turing degrees  with respect to $\preceq$.

\begin{prop}
 For any $i,j\geq 1$ where $i\not =j$, we have $\bbbp_i  \not\preceq \bbbp_j$ and $\bbbp_j \not \preceq \bbbp_i$. \end{prop}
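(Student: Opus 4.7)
The plan is to derive the proposition from the preceding one by exploiting two facts: the equivalence $\bbbp_k \equiv \{\sqrt{p_k}\}$, and the fact that the proof technique of Lemma \ref{Wurzeltest} (and hence of the preceding proposition) does not depend on the chosen enumeration of the primes. Concretely, I would first record that the argument used inside the proof of Lemma \ref{MengenPrimzahlen}, specialised to a single prime, gives $\bbbp_k \equiv \{\sqrt{p_k}\}$ for every $k\geq 1$: the reduction $\{\sqrt{p_k}\}\preceq \bbbp_k$ is one oracle query, and $\bbbp_k \preceq \{\sqrt{p_k}\}$ follows by first checking the syntactic form of the input tuple, extracting the real coordinate, and comparing it to $\sqrt{p_k}$ through the oracle.

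With that equivalence in hand, it suffices to show $\{\sqrt{p_i}\}\not\preceq \{\sqrt{p_j}\}$ whenever $i\neq j$. I would do this by rerunning the argument of Lemma \ref{Wurzeltest} with the oracle $\{\sqrt{p_j}\}$ in place of $\bbbh_i$. Given any ${\mathcal M}\in{\sf M}_{\rm add}^{1}(\{\sqrt{p_j}\})$ that is claimed to decide $\{\sqrt{p_i}\}$, the computation path on input $\sqrt{p_i}$ is finite and determined by a finite system of conditions $k_\nu x+l_\nu\geq 0$, $k_\mu x+l_\mu>0$, $k_\rho x+l_\rho=0$ together with oracle answers; as in Lemma \ref{LLL9} the equations force $k_\rho=l_\rho=0$, and choosing $n,m\in \bbbn_+$ with $|\tfrac{n}{m}\pi-\sqrt{p_i}|$ small enough (relative to the finitely many pairs $(k,l)$ appearing) preserves every strict inequality when $\sqrt{p_i}$ is replaced by $\tfrac{n}{m}\pi$. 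Thus both inputs traverse the same arithmetic skeleton of the path, and it remains only to match oracle answers. Hence ${\mathcal M}$ gives the same output on $\sqrt{p_i}$ and $\tfrac{n}{m}\pi$, contradicting that ${\mathcal M}$ decides $\{\sqrt{p_i}\}$.

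The main obstacle is the verification of this oracle matching. A query has the form ``is $(k_1x+l_1,\ldots,k_rx+l_r)\in\{\sqrt{p_j}\}$?'' and, since $\{\sqrt{p_j}\}$ consists of a single length-one tuple, queries of length $\neq 1$ are automatically negative for both inputs (as in Lemma \ref{OrakelWurzUndPi}). For $r=1$ the answer is ``yes'' only if $k_1x+l_1=\sqrt{p_j}$. At $x=\sqrt{p_i}$ this would require $\sqrt{p_j}\in\bbbz+\bbbz\sqrt{p_i}$, impossible by the $\bbbq$-linear independence of $1,\sqrt{p_i},\sqrt{p_j}$ for distinct primes; at $x=\tfrac{n}{m}\pi$ it would force $\sqrt{p_j}\in\bbbz+\bbbz\pi$, contradicting the transcendence of $\pi$. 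So every oracle query is answered ``no'' for both inputs. Since the role of $i$ and $j$ is symmetric in this argument, one gets $\{\sqrt{p_i}\}\not\preceq\{\sqrt{p_j}\}$ and $\{\sqrt{p_j}\}\not\preceq\{\sqrt{p_i}\}$ simultaneously, which together with $\bbbp_k\equiv\{\sqrt{p_k}\}$ yields the desired incomparability $\bbbp_i\not\preceq\bbbp_j$ and $\bbbp_j\not\preceq\bbbp_i$.
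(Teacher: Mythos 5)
Your proposal is correct and rests on the same technique the paper uses: the Lemma~\ref{Wurzeltest}/\ref{LLL9} path analysis showing that a machine with the given oracle cannot separate $\sqrt{p_i}$ from a suitable $\frac{n}{m}\pi$, combined with the already-noted equivalence $\bbbp_k \equiv \{\sqrt{p_k}\}$. The paper phrases this indirectly — it reorders the prime enumeration and cites the preceding proposition on $\bigcup_{j\leq i}\bbbp_j$ — while you rerun the argument directly with oracle $\{\sqrt{p_j}\}$; the two routes are essentially identical in content.
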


Moreover, for $i\geq 1$, we have  $\bbbp_i  \not\preceq  \bbbh_{\rm add}^{1,=}$ and thus $\bbbp_i  \not\preceq  P$ for any $P \preceq  \bbbh_{\rm add}^{1,=}$. Therefore,  for    $P=\bbbk^\emptyset$ and $P= \bbba $,  the problems  $\bbbp_i $  and $P$   are     incomparable   by Proposition \ref{Uebertr} and, since $\bbbq \not\preceq \{\sqrt{p_{i}}\}$ is easy to show,  $\bbbp_i $  and  $\bbbq$     as well as
  $\bbbp_i $ and  $\bbbh_{\rm add}^{1,=}$     are     incomparable.

By Corollary \ref{Zusammenh} we get the following relation.

\begin{lem} For any $i\geq 0$,
$\bbbh_i
 \strpreceq \bbbh_{i+1}$.\end{lem}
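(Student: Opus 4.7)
My approach is to split the claim $\bbbh_i\strpreceq \bbbh_{i+1}$ into its two halves: the reduction $\bbbh_i\preceq \bbbh_{i+1}$ and the non-reduction $\bbbh_{i+1}\not\preceq \bbbh_i$. The first half is already available: the earlier lemma stating $\bbbh_{\rm add}^{1,=}=\bbbh_0\preceq \bbbh_1\preceq \bbbh_2\preceq\cdots$ gives $\bbbh_i\preceq\bbbh_{i+1}$ uniformly for $i\ge 0$, so there is nothing to add on that side.

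For the non-reduction, I would invoke Corollary \ref{Zusammenh}, whose contrapositive reads: if $\{\sqrt{p_{i+1}}\}\not\preceq \bbbh_i$, then $\bbbh_{i+1}\not\preceq \bbbh_i$. Hence the whole task reduces to verifying, for every $i\ge 0$, the hypothesis $\{\sqrt{p_{i+1}}\}\not\preceq \bbbh_i$. This is precisely the corollary following Lemma \ref{Wurzeltest} in the case $i\ge 1$, and for the case $i=0$ (where $\bbbh_0=\bbbh_{\rm add}^{1,=}$ and $p_1=2$) it is the earlier corollary stating that $\{\sqrt{2}\}$ is not decidable by a machine in ${\sf M}_{\rm add}^{1}(\bbbh_{\rm add}^{1,=})$. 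Combining these two cases with Corollary \ref{Zusammenh} yields $\bbbh_{i+1}\not\preceq \bbbh_i$ for all $i\ge 0$, which together with the first half gives the claim.

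I do not anticipate any real obstacle here, since every ingredient is already proved: the topological/density argument producing a rational multiple $\frac{n}{m}\pi$ of $\pi$ that is indistinguishable from $\sqrt{p_{i+1}}$ along any fixed computation path (Lemmas \ref{LLL9} and \ref{Wurzeltest}) provides the non-decidability of $\{\sqrt{p_{i+1}}\}$ relative to $\bbbh_i$, and Corollary \ref{Zusammenh} then transports this non-decidability from $\{\sqrt{p_{i+1}}\}$ to $\bbbh_{i+1}$ itself. The proof is therefore essentially a two-line assembly of previously proved facts.
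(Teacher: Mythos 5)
Your proposal is correct and follows exactly the paper's route: the reduction $\bbbh_i\preceq\bbbh_{i+1}$ comes from the earlier lemma listing the chain $\bbbh_0\preceq\bbbh_1\preceq\cdots$, and the non-reduction is obtained from the corollaries after Lemmas \ref{LLL9} (case $i=0$) and \ref{Wurzeltest} (case $i\geq 1$) giving $\{\sqrt{p_{i+1}}\}\not\preceq\bbbh_i$, which is then fed into Corollary \ref{Zusammenh}. The paper itself offers no separate proof for this lemma beyond the remark ``By Corollary \ref{Zusammenh} we get the following relation,'' so your two-line assembly is precisely the intended argument.
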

Consequently, we have the following. 
\begin{thm} Let $k\in \bbbn$.

\vspace{0.2cm}

$ \bbbh_{\rm add}^{1,=} \strpreceq \bbbh_1 \strpreceq\cdots \strpreceq \bbbh_k\strpreceq \cdots \strpreceq \bigcup_{i\geq 1} \bbbh_i \preceq \bbbh_{\rm add}^{1}$.\end{thm}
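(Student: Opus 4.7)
The strict portion of the chain $\bbbh_{\rm add}^{1,=}=\bbbh_0 \strpreceq \bbbh_1 \strpreceq \cdots \strpreceq \bbbh_k \strpreceq \cdots$ is nothing more than the preceding lemma applied at each index $i\geq 0$, together with the definition $\bbbh_0 = \bbbh_{\rm add}^{1,=}$. So the only genuinely new content of the theorem is the final relation $\bigcup_{i\geq 1}\bbbh_i \preceq \bbbh_{\rm add}^{1}$, and I would split the proof accordingly into a one-line observation for the strict chain and a focused argument for the terminal reduction.

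For the terminal reduction, the plan is to show that $\bigcup_{i\geq 1}\bbbh_i$ is itself ${\sf M}_{\rm add}^{1}$-semi-decidable, and then invoke completeness of $\bbbh_{\rm add}^{1}$ (which was noted in Section~2 when $\bbbh_{\rm add}^{1,=}\preceq \bbbh_{\rm add}^{1}\equiv \bbbh_{\rm add}$ was established, and which is implemented by the many-one map $\vec x\mapsto (|\vec x|\,.\,\vec x\,.\,{\rm code}({\mathcal N}))$ where ${\mathcal N}$ is any semi-decider). Using the definition of $\bbbh_i$, the union telescopes to
\[
\textstyle\bigcup_{i\geq 1}\bbbh_i \;=\; \bbbh_{\rm add}^{1,=} \cup \bigcup_{j\geq 1}\bbbp_j,
\]
so it suffices to build a single machine in ${\sf M}_{\rm add}^{1}$ whose halting set is this union. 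I would construct it by dovetailing two subroutines on a given input $\vec z$: the first subroutine is a standard universal simulator which, in a step-by-step fashion, attempts to witness $\vec z\in\bbbh_{\rm add}^{1,=}$ (this exists because $\bbbh_{\rm add}^{1,=}$ is ${\sf M}_{\rm add}^{1,=}$-semi-decidable, hence \emph{a fortiori} ${\sf M}_{\rm add}^{1}$-semi-decidable); the second checks whether $\vec z$ has the syntactic shape $(2\,.\,(j,x)\,.\,{\rm code}({\mathcal K}))$ for some $j\in\bbbn_+$ and $x\in\bbbr$, and if so, launches the already-constructed machine $\mathcal{K}$ on $(j,x)$ to witness $x\neq \sqrt{p_j}$. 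The shape test is a finite program: ${\rm code}({\mathcal K})$ is a fixed finite $\{0,1\}$-string, so recognising it uses only equality tests against $0$ and $1$, and extracting $j$ and $x$ from $\vec z$ is done with index-register bookkeeping. Halting in either branch causes the whole machine to halt, so its halting set is precisely $\bigcup_{i\geq 1}\bbbh_i$.

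The main (and only) obstacle is making the dovetailing honest in the ${\sf M}_{\rm add}^{1}$ formalism: I would lay out the standard round-robin simulation pattern in which, at stage $s$, one executes $s$ further steps of each of finitely many spawned subcomputations, which is routine for BSS RAM's equipped with index registers. Once the ${\sf M}_{\rm add}^{1}$-semi-decidability is in place, the many-one reduction to $\bbbh_{\rm add}^{1}$ is automatic, and combining this with the strict chain inherited from the preceding lemma yields the stated theorem.
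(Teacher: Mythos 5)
Your decomposition is sound and the argument for the terminal reduction $\bigcup_{i\geq 1}\bbbh_i \preceq \bbbh_{\rm add}^{1}$ is correct: $\bigcup_{i\geq 1}\bbbh_i=\bbbh_{\rm add}^{1,=}\cup\bigcup_{j\geq 1}\bbbp_j$ is ${\sf M}_{\rm add}^{1}$-semi-decidable (indeed each $\bbbp_j\subseteq\bbbh_{\rm add}^{1}$, and $\bbbh_{\rm add}^{1,=}$ is ${\sf M}_{\rm add}^{1}$-semi-decidable), and completeness of $\bbbh_{\rm add}^{1}$ then gives the reduction. The dovetailing is unnecessary overhead — since ${\rm code}({\mathcal K})$ contains order-test instructions it cannot coincide with ${\rm code}({\mathcal N})$ for any ${\mathcal N}\in{\sf M}_{\rm add}^{1,=}$, so a deterministic finite case split on the tail of $\vec z$ cleanly separates the two branches — but that is a stylistic point, not an error. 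The paper itself leaves this theorem as an immediate consequence of the two preceding lemmas, so your reconstruction of the missing argument is in the right spirit.

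There is, however, one genuine gap: you explicitly declare that the ``strict portion'' of the chain ends at $\cdots$ and that the ``only genuinely new content'' is $\bigcup_{i\geq 1}\bbbh_i \preceq \bbbh_{\rm add}^{1}$, but the displayed chain also asserts $\bbbh_k \strpreceq \bigcup_{i\geq 1}\bbbh_i$ for every $k$, and you never address it. This needs two small steps. First, $\bbbh_k\preceq\bigcup_{i\geq 1}\bbbh_i$: given $\vec z$, if $\vec z$ has the shape $(2\,.\,(z_2,z_3)\,.\,{\rm code}({\mathcal K}))$ with $z_2\notin\{1,\ldots,k\}$, then $\vec z$ lies in no $\bbbp_j$ with $j\le k$ and, because of ${\rm code}({\mathcal K})$ in its tail, not in $\bbbh_{\rm add}^{1,=}$ either, so reject; in every other case membership in $\bbbh_k$ coincides with membership in $\bigcup_{i\geq 1}\bbbh_i$, so query the oracle. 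Second, strictness: the same case split gives $\bbbh_{k+1}\preceq\bigcup_{i\geq 1}\bbbh_i$, so if $\bigcup_{i\geq 1}\bbbh_i$ were decidable in ${\sf M}_{\rm add}^{1}(\bbbh_k)$ then so would be $\bbbh_{k+1}$, contradicting the preceding lemma $\bbbh_k\strpreceq\bbbh_{k+1}$. With that added, your proposal is complete.
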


\section{Summary}

We have defined a number of problems that are not decidable by additive BSS machines without irrational numbers in order to give an initial characterization of the strong Turing reduction by additive machines and some resulting Turing degrees below the halting problem $ \bbbh_{\rm add}^{1}$.

\begin{figure}[ht]
\label{HalteprZusammenf}
 \begin{displaymath}\begin{array}{lllllllllll}

& \bbbh_{\rm add}^{1}& \equiv\,\,\,\, \bbbh_{\rm add}\\
&\,\,\,\,\vdots   &\hspace{3.7cm}      \vdots\\

&\,\,\,\uparrow  \\
&\,\bbbh_2 &=\bbbh_{\rm add}^{1,=}\cup \bbbp_1 \cup \bbbp_2    \hspace{0.18cm}     \leftarrow   \bbbp_2\\ 
&\,\,\,\uparrow \\

&\,\bbbh_1 &=\bbbh_{\rm add}^{1,=}\cup \bbbp_1    \hspace{1cm}       \leftarrow  \bbbp_1\\ 
&\,\,\,\uparrow \\

&\,\bbbh_{\rm add}^{1,=}& \leftarrow \,\, \bigcup_{k\in \bbba} \bbbl_k \leftarrow \bbbl \cdots \leftarrow \bbbl_5\leftarrow \bbbl_4\leftarrow \bbbl_3\leftarrow \bbbl_2\leftarrow\bbbl_1 \,\,\,\, 
 \equiv\!\!& \bbbq \qquad \qquad \qquad \qquad \qquad \qquad\qquad \qquad \qquad \quad\quad \quad\vspace{0.2cm}\\

&\,\,\,\uparrow &\hspace{1.3cm}\uparrow \\

\,\,\,\,&\,\,\bbbk^\emptyset &\leftarrow \,\,\,\,\,\,\,\,\,\, \,\,\bbba \\

\end{array}\end{displaymath}
\caption{Hierarchies below $ \bbbh_{\rm add}^{1}$}\label{Zusammenfassung}
\end{figure}

In Figure \ref{Zusammenfassung} we summarize the results of this paper. The arrow $\rightarrow$ stands for the relation $ \strpreceq\!\! $ (that means for $ \strpreceq\!\!_{\rm add}^{1} $) and $ \equiv$ stands for the equivalence relation $ \equiv_{\rm add}^{1} $. The picture is complete up to transitivity.
Two problems are incomparable   with respect to $ \preceq_{\rm add}^{1} $ if there is no directed sequence of arrows from one problem to  the other.

\section{Acknowledgement}
 I would like to thank
 the anonymous referees for their valuable comments and suggestions to improve the paper. Moreover, my thanks go to the 	participants of the meeting ''Real Computation and BSS Complexity'' in Greifswald for the discussion. In particular, I would like to thank Paul Grieger and Arno Pauly for very useful hints.

\end{document}